\newenvironment{proof}{\noindent{\bf Proof.}}{\hfill$\Box$\bigskip}
\newcommand{\maxmin}{\mathit{maxmin}}
\newcommand{\wmax}{\mathit{win}}
\newcommand{\lose}{\mathit{lose}}
\newcommand{\pmax}{\mathit{max}}
\newcommand{\s}{}
\newtheorem{theorem}{Theorem}%[section]
\newtheorem{defined}[theorem]{Definition}
\newtheorem{exa}{Example}%[section]
\newenvironment{example}{\begin{exa} \rm}{\end{exa}}
\newtheorem{lemma}[theorem]{Lemma}
\newtheorem{corollary}[theorem]{Corollary}
\newtheorem{note}[theorem]{Note}
\newtheorem{exe}{Exercise}%[chapter]
\newtheorem{pro}{Problem}
\newcommand{\bfe}[1]{\begin{bfseries}\emph{#1}\end{bfseries}}
\newcommand{\ES}{\mbox{$\emptyset$}}
\newcommand{\myra}{\mbox{$\:\rightarrow\:$}}
\newcommand{\La}{\mbox{$\:\Leftarrow\:$}}
\newcommand{\Ra}{\mbox{$\:\Rightarrow\:$}}
\newcommand{\sse}{\mbox{$\:\subseteq\:$}}
\newcommand{\fa}{\mbox{$\forall$}}
\newcommand{\te}{\mbox{$\exists$}}
\newcommand{\LL}{\mbox{$\ldots$}}
\newcommand{\NI}{\noindent}
\newcommand{\HB}{\hfill{$\Box$}}
\newcommand{\II}{\vspace{2 mm}}
\newcommand{\szkew}[1]{\relax \setbox0=\hbox{\kern -24pt $\displaystyle#1$\kern 0pt }%
%\advance\ht0 by 0pt %
%\advance\dp0 by -10pt %
\box0}
{\catcode`\@=11 \global\let\ifjusthvtest@=\iffalse}
\newcounter{oldmycaption}
\newcommand{\turn}{\mathit{turn}}
\newcommand{\leaf}{\mathit{leaf}}
\newcommand{\play}{\mathit{play}}
\newcommand{\win}{\mathit{win}}
\newcommand{\draw}{\mathit{draw}}
\author{Krzysztof R. Apt \\ 
  CWI, Amsterdam, The Netherlands\\
  MIMUW, University of Warsaw, Poland\\
  \url{apt@cwi.nl} 
   \and 
   Sunil Simon\\ 
   Department of CSE, IIT Kanpur, India\\
         \url{simon@cse.iitk.ac.in}
   }
\title{A tutorial for computer scientists on finite extensive games with perfect information} 
\begin{document}

\date{}

\maketitle

\begin{abstract}
  We provide a self-contained introduction to finite extensive games
  with perfect information.  In these games players proceed in turns
  having, at each stage, finitely many moves to their disposal, each
  play always ends, and in each play the players have complete
  knowledge of the previously made moves.  Almost all discussed
  results are well-known, but often they are not presented in an
  optimal form. Also, they usually appear in the literature aimed at
  economists or mathematicians, so the algorithmic or logical aspects
  are underrepresented.
\end{abstract}

\tableofcontents

\section{Introduction}

In computer science for a long time the most commonly studied games
have been infinite two-player games (see, e.g., \cite{AG11} for an
account of some of the most popular classes).  With the advent of
algorithmic game theory various classes of games studied by
economists became subject of interest of computer scientists, as
well. These games usually involve an arbitrary finite number of
players.  Among them one the most common ones are strategic games, in
which the players select their strategies simultaneously. They have
been covered in several books and surveys.

However, in our view a systematic account of another most popular
class of games, extensive games with perfect information, is missing.
It is true that they are extensively discussed in several books,
mostly written for theoretical economists.  However, in the
introductory texts technical results are usually omitted and
illustrated by examples (e.g., \cite{Dut01}).  In turn, in the
advanced textbooks the presentation is often difficult to follow since
these games are introduced as a special case of the extensive games
with imperfect information, which leads to involved notation (e.g.,
\cite{JR01}).  An exception is \cite{OR94} which devotes a separate
chapter to extensive games with perfect information.

From the point of view of computer science the main results are
usually not presented in an optimal way.  For example, the backward
induction is often introduced in a verbose way, or formulated in a way
that hides its algorithmic aspect. This way the optimal result that
relates it to the set of \emph{all} subgame perfect equilibria
(Theorem \ref{thm:extSPE}) is often missed.  We explain here that it
is actually a nondeterministic algorithm that can be even presented as
a parallel algorithm. We also discuss an important article
\cite{Aum91} that formalizes common knowledge of rationality in finite
extensive games with perfect information and relates it to backward
induction.

Further, Theorem \ref{thm:correspondence} that clarifies the relation
between backward induction and the iterated elimination of weakly
dominated strategies, is either only illustrated by an example (e.g.,
\cite{Dut01}) or only a sketch of a proof is provided (e.g.,
\cite{OR94}).  Also, some more recent results, Theorem
\ref{thm:sciewds}, due to \cite{Ewe02}, and Theorem \ref{thm:Kuk}, due
to \cite{Kuk02}, merit in our opinion some attention among computer
scientists.  Finally, the so-called Zermelo theorem about chess-like
games is in our view often proved in a too elaborate way, starting
with the exposition in the classic \cite{vNM04}.

These considerations motivated us to write a tutorial presentation of
finite extensive games with perfect information aimed at computer
scientists.  Often these games are discussed by introducing strategic
games first. We follow this approach, as well, as it allows us to view
extensive games as a subclass of strategic games for which
some additional notions can be defined and for which additional
results hold.

In our presentation we shall often refer to the account given in
\cite{OR94} that comes closest to our ideal.  We shall strengthen some
of their results, provide more detailed proofs of some of them, and
add some new results. Also, we shall recall their natural notion of a
\emph{reduced strategy} that in our view merits more attention.

\section{Preliminaries on strategic games}
\label{sec:prelim}

To discuss extensive games it is convenient to introduce first strategic games.
A \bfe{strategic game} for $n \geq 1$ players consists of:

\begin{itemize}

\item a set of players $\{1, \LL, n\}$,

\end{itemize}
and for each player $i$
\begin{itemize}

\item a non-empty (possibly infinite) set $S_i$ of \bfe{strategies},

\item a \bfe{payoff function} $p_i : S_1 \times \cdots \times S_n \myra \mathbb{R}$.
\end{itemize}

We denote it by $(S_1, \LL, S_n, p_1, \LL, p_n)$. So the set of
players is implicit in this notation.

Strategic games are studied under the following basic assumptions:
\begin{itemize}

\item players choose their strategies \emph{simultaneously}; 
subsequently each player receives a payoff from the
resulting joint strategy, 

\item each player is \bfe{rational},
 which means that his objective is to
 maximize his payoff,

\item players have \bfe{common knowledge} of the game and of each
  others’ rationality.\footnote{Intuitively, common knowledge of some
    fact means that everybody knows it, everybody knows that everybody
    knows it, etc. It is discussed in the context of extensive games
    in Section \ref{sec:ckr}.}

\end{itemize}

Finite two-player games are usually represented in the form called a
\bfe{bimatrix}, where one assumes that the players choose independently
a row or a column. Each entry represents the resulting payoffs to the
row and column players. The following examples of two-player games
will be relevant in the subsequent discussion.

\begin{example} \label{exa:stra}
  
  The following game is called \bfe{Prisoner's Dilemma}. The
  strategies $C$ and $D$ stand for `cooperate' and `defect':
  
\begin{center}
\begin{game}{2}{2}
       & $C$    & $D$\\
$C$   &$2,2$   &$0,3$\\
$D$   &$3,0$   &$1,1$
\end{game}
\end{center}  

The following game is called \bfe{Matching Pennies}. The
strategies $H$ and $T$ stand for `head' and `tail':

\begin{center}
\begin{game}{2}{2}
      & $H$    & $T$\\
$H$   &$\phantom{-}1,-1$   &$-1,\phantom{-}1$\\
$T$   &$-1,\phantom{-}1$   &$\phantom{-}1,-1$
\end{game}
\end{center}
\HB
\end{example}

Fix a strategic game $H:= (S_1, \LL, S_n, p_1, \LL, p_n)$.  Let
$S = S_1 \times \cdots \times S_n$.  We call each element $s \in S$ a
\bfe{joint strategy (of players $1, \LL, n$)}, denote the $i$th element
of $s$ by $s_i$, and abbreviate the sequence $(s_{j})_{j \neq i}$ to
$s_{-i}$. We write $(s'_i, s_{-i})$ to denote the joint strategy in
which player's $i$ strategy is $s'_i$ and each other player's $j$
strategy is $s_j$.  Occasionally we write $(s_i, s_{-i})$ instead of
$s$.  Finally, we abbreviate the Cartesian product
$\times_{j \neq i} S_j$ to $S_{-i}$.

Given a joint strategy $s$, we denote the sequence $(p_1(s), \LL,
p_n(s))$ by $p(s)$ and call it an \bfe{outcome} of the game.  We say
that $H$ \bfe{has $k$ outcomes} if $|\{p(s) \mid s \in S\}|= k$ and
call a game \bfe{trivial} if it has one outcome.  We say that two
joint strategies $s$ and $t$ are \bfe{payoff equivalent} if $p(s) =
p(t)$.

We call a strategy $s_i$ of player $i$ a \bfe{best response} to a 
joint strategy $s_{-i}$ of the other players if
\[
\fa s'_i \in S_i: p_i(s_i, s_{-i}) \geq p_i(s'_i, s_{-i}).
\]

Next, we call a joint strategy $s$ a \bfe{(pure) Nash equilibrium} if for
each player $i$, $s_i$ is a best response to $s_{-i}$, that is, if
\[
\fa i \in \{1, \ldots, n\}: \fa s'_i \in S_i \ p_i(s_i, s_{-i}) \geq p_i(s'_i, s_{-i}).
\]
So a joint strategy is a Nash equilibrium if no player can achieve a
higher payoff by \emph{unilaterally} switching to another
strategy. Intuitively, a Nash equilibrium is a situation in which each
player is a posteriori satisfied with his choice.  It is often used to
predict the outcomes of the strategic games.

It is easy to check that there is a unique Nash equilibrium in
Prisoner's Dilemma which is $(D,D)$, while the Matching Pennies game
has no Nash equilibria.

A relevant question is whether we can identify natural subclasses of
strategic games where a Nash equilibrium is guaranteed to
exist. Below, we describe two such classes which
are well studied in game theory. Fix till the end of this section a
strategic game $H = (S_1, \LL, S_n, p_1, \LL, p_n)$.

We say that a pair of joint strategies $(s, s')$ forms a
\bfe{profitable deviation} if there exists a player $i$ such that
$s'_{-i} = s_{-i}$ and $p_i(s') > p_i(s)$. If such a pair $(s, s')$
exists, then we say that player $i$ can \bfe{profitably deviate} from
$s$ to $s'$ and denote this by $s \to s'$. An \bfe{improvement path}
is a maximal sequence (i.e., a sequence that cannot be extended to the
right) of joint strategies in which each consecutive pair is a
profitable deviation.  By an \bfe{improvement sequence} we mean a
prefix of an improvement path.

We say that $H$ has the \bfe{finite improvement property} ({\bfe{FIP}
  in short), if every improvement path is finite.  Clearly, if an
  improvement path is finite, then its last element is a Nash
  equilibrium. So if $H$ has the FIP, then it is guaranteed to have
  Nash equilibrium, which explains the interest in this notion.  A
  trivial example of a game that has the FIP is the Prisoner's Dilemma
  game.\footnote{A more interesting class of games that have the FIP
    are the congestion games, see \cite{Ros73}.}

However, the FIP is a very strong property and several natural games
with a Nash equilibrium fail to satisfy it.  Young in \cite{You93} and
independently Milchtaich in \cite{Mil96} proposed a weakening of
  this condition and introduced the following natural class of games.
We say that $H$ is \bfe{weakly acyclic} if for any joint strategy $s$,
there exists a finite improvement path that starts at $s$.
Consequently, every weakly acyclic game has a Nash equilibrium.

We call the function $P: S \myra \mathbb{R}$ a \bfe{weak potential}
for $H$ if
\[
\begin{array}{l}
\mbox{$\fa s$: if $s$ is not a Nash equilibrium, then for some} \\
\mbox{profitable deviation $s \to s'$, $P(s) < P(s')$.}
\end{array}
\]

The following natural characterization of finite weakly acyclic games
was established in \cite{Mil13}.

\begin{theorem}[Weakly acyclic] \label{thm:weakly}
A finite game is weakly acyclic iff it has a weak potential.
\end{theorem}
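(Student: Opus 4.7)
My plan is to prove the two implications separately, using the fact that $S$ is finite throughout.

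For the easier direction ($\Leftarrow$), assume $P$ is a weak potential. Given any $s \in S$, I would construct a finite improvement path starting at $s$ greedily: if the current joint strategy is a Nash equilibrium, stop; otherwise, by the defining property of a weak potential, pick a profitable deviation to some $s'$ with $P(s) < P(s')$, and continue. Because $P$ is strictly increasing along this sequence and $S$ is finite, the construction terminates, necessarily at a Nash equilibrium (since any non-equilibrium admits a further extending step, so the constructed maximal sequence ends in equilibrium). Thus $H$ is weakly acyclic.

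For the harder direction ($\Rightarrow$), assume $H$ is weakly acyclic. The natural candidate is to define $P(s) := -\ell(s)$, where $\ell(s)$ denotes the length (number of profitable deviations) of a shortest finite improvement path starting at $s$. Weak acyclicity guarantees at least one such path exists from every $s$, and finiteness of $S$ makes $\ell(s)$ a well-defined nonnegative integer (with $\ell(s)=0$ exactly when $s$ is a Nash equilibrium). To verify the weak potential condition, take any non-equilibrium $s$ and let $s = s^0 \to s^1 \to \cdots \to s^k$ be a shortest finite improvement path, so $\ell(s) = k \geq 1$. The suffix $s^1 \to \cdots \to s^k$ is itself a finite improvement path starting at $s^1$, hence $\ell(s^1) \leq k-1 < k = \ell(s)$, which gives $P(s) < P(s^1)$ for the profitable deviation $s \to s^1$. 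This establishes the required property.

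The step I expect to need the most care is the forward direction, specifically the choice of potential. One has to resist the temptation to use something like "the length of an arbitrary improvement path" and instead use the \emph{shortest} one, since only the shortest-path value decreases by at least one along the first edge of an optimal continuation. A secondary subtlety, in both directions, is the notion of \emph{maximal} sequence in the definition of improvement path: in the $\Leftarrow$ direction the greedy construction terminates at a Nash equilibrium precisely because beyond such a point no profitable deviation exists, so the sequence is automatically maximal; and in the $\Rightarrow$ direction, the "finite improvement path" assumed by weak acyclicity is by definition maximal, hence ends at an equilibrium, justifying that its length is a meaningful quantity to minimize over.
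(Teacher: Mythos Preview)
Your proof is correct in both directions. Note, however, that the paper does not actually supply a proof of this theorem: it is stated with attribution to \cite{Mil13} and used later (in Section~\ref{sec:wacyclic}) as a black box, so there is no ``paper's own proof'' to compare against. Your argument---greedy ascent along $P$ for the backward direction, and $P(s) := -\ell(s)$ with $\ell$ the shortest improvement-path length for the forward direction---is the standard one and would serve perfectly well as the missing proof.
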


Sometimes it is convenient to assume that a weak potential is a
function to a strict linear ordering (that can be subsequently mapped
to $\mathbb{R}$).

One way to find a Nash equilibrium in a strategic game is by using a
concept of dominance.  In the context of extensive games the most
relevant is the notion of weak dominance.  By a \bfe{subgame} of a
strategic game $H$ we mean a game obtained from $H$ by removing some
strategies.

Consider % a game $H := (S_1, \ldots, S_n, p_1, \ldots, p_n)$ and
two strategies $s_i$ and $s'_i$ of player $i$.  We say that $s_i$
\bfe{weakly dominates $s'_i$} (or equivalently, that $s'_i$ is
\bfe{weakly dominated by $s_i$}) in $H$ if
\[                                                                              
\mbox{$\fa s_{-i} \in S_{-i} : p_i(s_i, s_{-i}) \geq p_i(s'_i, s_{-i})$ and     
$\te s_{-i} \in S_{-i} : p_i(s_i, s_{-i}) > p_i(s'_i, s_{-i})$}.                
\]
% \[                                                                            
% \fa s_{-i} \in S_{-i} \ p_i(s_i, s_{-i}) \geq p_i(s'_i, s_{-i}) \A            
% \]                                                                            
% and                                                                           
% \[                                                                            
% \te s_{-i} \in S_{-i} \ p_i(s_i, s_{-i}) > p_i(s'_i, s_{-i})                  
% \]                                                                            

Denote by $H^1$ a subgame of $H$ obtained by the elimination of some
(not necessarily all) strategies that are weakly dominated in $H$, and
put $H^{0} := H$ and $H^{k+1} := (H^k)^1$, where $k \geq 1$.
Note that $H^k$ is not uniquely defined, since we do not stipulate which
strategies are removed at each stage.

Abbreviate the phrase `iterated elimination of weakly dominated
strategies' to IEWDS. We say that each  $H^k$ is obtained from $H$
\bfe{by an IEWDS}.
If for some $k$, some subgame $H^k$ is a trivial
game we say that $H$ \bfe{can be solved by an IEWDS}. The relevant
result (that we shall not prove) is that if a finite strategic game $H$
can be solved by an IEWDS then every remaining joint strategy is a Nash
equilibrium of $H$. We shall illustrate it in Example
\ref{exa:centipede} in Section \ref{prelim-ext}.

The following lemma will be needed in Section \ref{sec:wd}.

\begin{lemma} \label{lem:wd}

  Let $H := (S_1, \LL, S_n, p_1, \LL, p_n)$ be a finite
  strategic game and let $H^k := (S^k_1, \LL, S^k_n, p_1, \LL, p_n)$,
  where $k \geq 1$.  Then
\[
    \fa i \in \{1, \LL, n\} \ \fa s_i \in S_i \ \te t_i \in S^k_i \ \fa s_{-i} \in S^k_{-i}:
p_i(t_i,s_{-i}) \geq  p_i(s_i,s_{-i}).
\]
\end{lemma}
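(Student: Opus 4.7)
The plan is to prove the statement by induction on $k$. The base case $k = 0$ is immediate: taking $t_i := s_i$ works, since $S^0_i = S_i$ and $S^0_{-i} = S_{-i}$. So the real content is the inductive step $k \Rightarrow k+1$.

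Fix a player $i$ and a strategy $s_i \in S_i$. First I would invoke the induction hypothesis to obtain some $t_i \in S^k_i$ with $p_i(t_i, s_{-i}) \geq p_i(s_i, s_{-i})$ for every $s_{-i} \in S^k_{-i}$. Next I would \emph{promote} $t_i$ to a strategy that survives into $H^{k+1}$, by iteratively walking up a chain of weak dominators: if $t_i \in S^{k+1}_i$, stop; otherwise $t_i$ was eliminated in the step from $H^k$ to $H^{k+1}$, so by the definition of IEWDS some $t'_i \in S^k_i$ weakly dominates $t_i$ in $H^k$, and I replace $t_i$ by $t'_i$ and repeat.

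To see that this procedure terminates, I would observe that weak dominance in $H^k$ is a strict partial order on $S^k_i$: it is transitive (combining a pointwise $\geq$ with a pointwise $\geq$ that is strict somewhere yields a pointwise $\geq$ that is strict somewhere) and irreflexive, hence asymmetric. Thus the chain $t_i, t'_i, t''_i, \LL$ cannot repeat, and since $S^k_i$ is finite it must stop; the terminal element $u^*$ lies in $S^{k+1}_i$ by construction. Transitivity of the ordinary pointwise $\geq$ over $S^k_{-i}$ then gives
\[
p_i(u^*, s_{-i}) \geq p_i(t_i, s_{-i}) \geq p_i(s_i, s_{-i}) \quad \text{for every } s_{-i} \in S^k_{-i},
\]
and since $S^{k+1}_{-i} \sse S^k_{-i}$, the inequality in particular holds on $S^{k+1}_{-i}$, closing the induction.

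The main obstacle is the ``not necessarily all'' clause in the definition of the IEWDS step: the witness $t_i$ given by the induction hypothesis need not itself survive into $S^{k+1}_i$, even when it is undominated by any element one actually cares about. The dominator-chain argument sidesteps this precisely by exploiting the fact that if $t_i$ was \emph{removed}, then by definition some actual weak dominator in $S^k_i$ exists to replace it with, and strictness of weak dominance together with finiteness of $S^k_i$ forces the chain to land inside $S^{k+1}_i$.
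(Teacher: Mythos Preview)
Your proof is correct and follows essentially the same approach as the paper's: induction on $k$, with the inductive step upgrading the witness from $S^k_i$ to $S^{k+1}_i$ by climbing weak-dominance chains, using finiteness and the fact that weak dominance is a strict partial order to guarantee a surviving (undominated) dominator. The only cosmetic differences are that the paper starts the induction at $k=1$ rather than $k=0$ and phrases the chain argument as a one-line appeal to transitivity plus finiteness, whereas you spell out the chain and its termination explicitly.
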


\begin{proof}
  We proceed by induction. Take some $i \in \{1, \LL, n\}$ and
  $s_i \in S_i$.  Suppose $k=1$. If $s_i \in S^1_i$, then we are done, so
 assume that $s_i \not\in S^1_i$. $H$ is
  finite and the relation `weakly dominates' is transitive, so some
  strategy $t_i$ from $H$ weakly dominates $s_i$ in $H$ and is not
  weakly dominated in $H$, and thus is in $H^1$.

  Suppose the claim holds for some $k>1$.  By the induction hypothesis
  for some $u_i \in S^k_i$ we have
  $p_i(u_i,s_{-i}) \geq p_i(s_i,s_{-i})$ for all
  $s_{-i} \in S^k_{-i}$. 
  If $u_i \not\in S^{k+1}_i$, then for the same reasons as above some
  strategy $t_i$ from $H^{k+1}$ weakly dominates $u_i$ in $H^k$ and
  consequently $p_i(t_i,s_{-i}) \geq p_i(s_i,s_{-i})$ for all
  $s_{-i} \in S^k_{-i}$.
\end{proof}
\II

Finally, we introduce the following condition defined in \cite{MS97}.
We say that a strategic game   $(S_1, \LL, S_n, p_1, \LL, p_n)$
satisfies the \bfe{transference of decisionmaker indifference (TDI)}
condition if: 
\[
  \begin{array}{l}
\mbox{$\fa i \in \{1, \ldots, n\} \: \fa r_i, t_i \in S_i \: \fa s_{-i} \in S_{-i}$}: \\
\mbox{$p_{i}(r_i, s_{-i}) = p_{i}(t_i, s_{-i}) \to
p(r_i, s_{-i}) = p(t_i, s_{-i})$.}
  \end{array}
\]
Informally, this condition states that whenever for some player $i$ two of his strategies
$r_i$ and $t_i$ are indifferent w.r.t.~some joint strategy $s_{-i}$ of the other players
then this indifference extends to all players.

In the next section we shall introduce a natural class of strategic
games that satisfy the TDI condition.

\section{Preliminaries on strictly competitive games}
\label{sec:aux}

Sections \ref{sec:sc} and \ref{sec:comp} concern specific
extensive games that involve two players. It is convenient to introduce them
first as strategic games.  A strategic two-player game is called
\bfe{strictly competitive} if
\[
\fa i \in \{1,2\} \ \fa s,t \in S :
p_{i}(s) \geq p_{i}(t) \text{ iff } p_{-i}(s) \leq p_{-i}(t).
\]
(As there are here just two players, $-i$ denotes the opponent of
player $i$, so $p_{-i}(s)$ and $p_{-i}(t)$ are here numbers.)  Note
that every strictly competitive game satisfies the TDI condition as the
definition implies that
 \begin{equation}
\mbox{$\fa i \in \{1,2\} \ \fa s,t \in S : p_{i}(s) = p_{i}(t) \text{ iff
} p_{-i}(s) = p_{-i}(t)$.}
   \label{equ:iff}   
 \end{equation}

A two-player game is called \bfe{zero-sum} if 
\[
\fa s \in S :  p_1(s) + p_2(s) = 0.
\]
An example is the Matching Pennies game. Clearly every zero-sum game
is strictly competitive.

In Section \ref{sec:comp} we shall discuss two classes of zero-sum games.
A zero-sum game is called a \bfe{win or lose game} if the only
possible outcomes are $(1,-1)$ and $(-1,1)$, with 1 interpreted as a
\emph{win} and $-1$ as \emph{losing}. Finally, a zero-sum game is
called a \bfe{chess-like game} if the only possible outcomes are
$(1,-1)$, $(0, 0)$, and $(-1,1)$, with $0$ interpreted as a
\emph{draw}.

The following results about strictly competitive games will be needed
in Section \ref{sec:sc}.

\begin{lemma} \label{lem:sc-sum1}
 Consider a strictly competitive strategic game $H$ with a Nash
  equilibrium $s$.  Suppose that for some $i \in \{1,2\}$, $t_i$
  weakly dominates $s_i$.  Then $(t_i, s_{-i})$ is also a Nash
  equilibrium.
\end{lemma}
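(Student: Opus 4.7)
The plan is to assume without loss of generality that $i = 1$, and then verify the two Nash equilibrium conditions for $(t_1, s_2)$ separately, relying on weak dominance for player 1 and on strict competitiveness (via its consequence \eqref{equ:iff}) for player 2.

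First I would establish the key equality $p_1(t_1, s_2) = p_1(s_1, s_2)$. The inequality $p_1(t_1, s_2) \geq p_1(s_1, s_2)$ is immediate from weak dominance, while the reverse inequality follows because $s$ is a Nash equilibrium, so $s_1$ is already a best response to $s_2$ and hence $p_1(s_1, s_2) \geq p_1(t_1, s_2)$. Strict competitiveness then gives \eqref{equ:iff}, so $p_2(t_1, s_2) = p_2(s_1, s_2)$ as well.

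Next I would check the best-response condition for player 1 at $(t_1, s_2)$: for any $s'_1 \in S_1$, by the Nash property of $s$ and the equality just obtained, $p_1(t_1, s_2) = p_1(s_1, s_2) \geq p_1(s'_1, s_2)$, as required. For player 2, the key step uses strict competitiveness in the other direction: weak dominance yields $p_1(t_1, s'_2) \geq p_1(s_1, s'_2)$ for every $s'_2 \in S_2$, and by strict competitiveness this is equivalent to $p_2(t_1, s'_2) \leq p_2(s_1, s'_2)$. Combining this with the Nash property of $s$ for player 2, namely $p_2(s_1, s_2) \geq p_2(s_1, s'_2)$, and with $p_2(t_1, s_2) = p_2(s_1, s_2)$, we get $p_2(t_1, s_2) \geq p_2(t_1, s'_2)$ for every $s'_2$, which is precisely the best-response condition for player 2.

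The main technical point, and the one place where strict competitiveness is genuinely used beyond TDI, is the transfer of the weak dominance inequality for player 1 at arbitrary profiles $(t_1, s'_2)$ into a reversed inequality for player 2; without this, one could not conclude that $t_1$ does not open up a profitable deviation for player 2. Everything else is a short assembly of these inequalities, so the proof is essentially a two-line argument once the equality $p_1(t_1, s_2) = p_1(s_1, s_2)$ and the two applications of strict competitiveness are in place.
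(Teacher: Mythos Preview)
Your proof is correct and follows the same approach as the paper's: establish $p_i(t_i, s_{-i}) = p_i(s_i, s_{-i})$, then verify both best-response conditions directly from the Nash property of $s$ and strict competitiveness via \eqref{equ:iff}. Your treatment of player $-i$ is in fact slightly more careful than the paper's, which stops at $p_{-i}(t_i,s_{-i}) = p_{-i}(s_i,s_{-i}) \geq p_{-i}(s_i, s'_{-i})$ without explicitly adding the final step $p_{-i}(s_i, s'_{-i}) \geq p_{-i}(t_i, s'_{-i})$ that you obtain from weak dominance plus strict competitiveness.
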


\begin{proof}
  Let $H := (S_1, S_2, p_1, p_2)$.
  Take a strategy $s'_i$ of player $i$. By the assumptions about $s$ and
  $t_i$
  \[
    p_i(t_i, s_{-i}) = p_i(s_i, s_{-i}) \geq p_i(s'_i, s_{-i}).
  \]

  Next, take a strategy $s'_{-i}$ of player $-i$.  By (\ref{equ:iff})
  and the fact that $s$ is a Nash equilibrium
  \[
    p_{-i}(t_i,s_{-i}) = p_{-i}(s_i,s_{-i}) \geq p_{-i}(s_i, s'_{-i}).
  \]
This establishes the claim.
\end{proof}
\II

Given a finite strategic game $H := (S_1, \LL, S_n, p_1, \LL, p_n)$
we define for each player $i$ 
\[
  maxmin_{i}(H) := \max_{s_{i} \in S_i} \min_{s_{-i} \in S_{-i}} p_i(s_i, s_{-i}).
\]
We call any strategy $s^{*}_i$ such that
$\min_{s_{-i} \in S_{-i}} p_i(s^*_i, s_{-i}) = maxmin_{i}(H)$ a
\bfe{security strategy} for player $i$ in $H$.

The following result goes back to \cite{vNM04}, where it was
established for zero-sum games. The formulation below is from
\cite[pages 22-23]{OR94}).

\begin{theorem}[Minimax]
  \label{thm:scminimax}
  Suppose that $s$ is a Nash equilibrium of a strictly competitive
  strategic game $H$. Then each $s_i$ is a security strategy for player $i$
  and $p(s) = (maxmin_{1}(H), maxmin_{2}(H))$.
\end{theorem}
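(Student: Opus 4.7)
The plan is to prove the two claims simultaneously for each player $i \in \{1,2\}$, by establishing both inequalities $\maxmin_i(H) \leq p_i(s)$ and $p_i(s) \leq \min_{s'_{-i}} p_i(s_i, s'_{-i})$, which together with the trivial $\min_{s'_{-i}} p_i(s_i, s'_{-i}) \leq \maxmin_i(H)$ forces all three quantities to coincide. Write $i$ and $-i$ for the two players and $s = (s_i, s_{-i})$.

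First I would deduce $\maxmin_i(H) \leq p_i(s)$ from the Nash condition for player $i$. Indeed, for every $t_i \in S_i$ we have $p_i(t_i, s_{-i}) \leq p_i(s_i, s_{-i})$, so
\[
\maxmin_i(H) \;=\; \max_{t_i \in S_i} \min_{t_{-i} \in S_{-i}} p_i(t_i, t_{-i}) \;\leq\; \max_{t_i \in S_i} p_i(t_i, s_{-i}) \;=\; p_i(s).
\]
This step uses only that $s$ is a Nash equilibrium, not strict competitiveness.

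The heart of the proof is the reverse inequality, and here is where strict competitiveness enters. The Nash condition for the \emph{other} player says $p_{-i}(s_i, t_{-i}) \leq p_{-i}(s_i, s_{-i})$ for every $t_{-i} \in S_{-i}$. Applying the defining biconditional of strict competitiveness turns this around into $p_i(s_i, t_{-i}) \geq p_i(s_i, s_{-i})$ for every $t_{-i}$, i.e.\ $s_{-i}$ is a minimizer of $p_i(s_i, \cdot)$. Hence
\[
p_i(s) \;=\; \min_{t_{-i} \in S_{-i}} p_i(s_i, t_{-i}) \;\leq\; \max_{t_i \in S_i} \min_{t_{-i} \in S_{-i}} p_i(t_i, t_{-i}) \;=\; \maxmin_i(H).
\]

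Combining the two inequalities gives $p_i(s) = \maxmin_i(H)$, which is the second claim. Moreover the middle equality in the displayed chain above yields $\min_{t_{-i}} p_i(s_i, t_{-i}) = \maxmin_i(H)$, which by definition says that $s_i$ is a security strategy for player $i$. Doing this for both $i=1$ and $i=2$ completes the proof. The only subtle point is the switch from a Nash inequality for one player to a bound on the other player's payoff, which is precisely what strict competitiveness is designed to deliver; everything else is an unwinding of the definitions of $\maxmin$ and of security strategy.
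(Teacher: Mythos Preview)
Your proof is correct; it is the standard argument for this version of the Minimax theorem. Note, however, that the paper does not actually prove Theorem~\ref{thm:scminimax}: it merely states the result and attributes the formulation to \cite[pages 22--23]{OR94}. Your argument is essentially the one given there, so there is nothing to compare against in the paper itself.
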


\begin{corollary} \label{cor:h1} Consider a finite strictly
  competitive strategic game $H$ that has a Nash equilibrium. Then $H^1$
  has a Nash equilibrium, as well, and for all $i \in \{1,2\}$,
  $maxmin_{i}(H) = maxmin_{i}(H^1)$.
\end{corollary}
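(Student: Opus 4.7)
The plan is to lift the given Nash equilibrium of $H$ into the strategies that survive the single elimination step, and then apply Theorem \ref{thm:scminimax} twice to read off the equality of the $maxmin$ values.

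First I would observe that strict competitiveness is inherited by subgames: the defining biconditional is universally quantified over joint strategies, so restricting $S_1$ and $S_2$ preserves it. Hence $H^1$ is also strictly competitive, and once a Nash equilibrium of $H^1$ is produced, Theorem \ref{thm:scminimax} applies to it.

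Next, let $s = (s_1,s_2)$ be a Nash equilibrium of $H$. I would modify $s$ one player at a time so that both components land in $H^1$. If $s_1 \in S^1_1$, set $t_1 := s_1$; otherwise, the argument used in the base case of Lemma \ref{lem:wd} furnishes some $t_1 \in S^1_1$ that weakly dominates $s_1$ in $H$, and Lemma \ref{lem:sc-sum1} then certifies that $(t_1,s_2)$ is again a Nash equilibrium of $H$. Repeat the same step with player $2$ starting from the equilibrium $(t_1,s_2)$ to obtain $t_2 \in S^1_2$ such that $(t_1,t_2)$ is a Nash equilibrium of $H$. Since $S^1_1 \times S^1_2 \subseteq S_1 \times S_2$, every unilateral deviation available in $H^1$ is already available in $H$, so $(t_1,t_2)$ is automatically a Nash equilibrium of $H^1$.

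Finally, I would apply Theorem \ref{thm:scminimax} twice to the strictly competitive games $H$ and $H^1$, both having the Nash equilibrium $(t_1,t_2)$. This yields
\[
p(t_1,t_2) \;=\; (maxmin_{1}(H),\, maxmin_{2}(H)) \;=\; (maxmin_{1}(H^1),\, maxmin_{2}(H^1)),
\]
which gives the claimed equality for each $i \in \{1,2\}$. The only mildly delicate point is ensuring that the dominator $t_i$ in the lifting step can be chosen inside $S^1_i$; this is precisely what the finite-termination argument in the base case of Lemma \ref{lem:wd} provides, since in a finite game any chain of weak dominators must terminate at a strategy that survives a single round of elimination.
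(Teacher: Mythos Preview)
Your proposal is correct and follows essentially the same route as the paper's own proof: lift the given Nash equilibrium into $H^1$ one coordinate at a time via Lemma~\ref{lem:sc-sum1} (using the finiteness/transitivity argument from the base case of Lemma~\ref{lem:wd} to land on an undominated strategy), observe that a Nash equilibrium of $H$ with both components in $H^1$ is a Nash equilibrium of $H^1$, and then invoke the Minimax Theorem on both games. If anything, you are slightly more explicit than the paper in noting that strict competitiveness passes to $H^1$, which is needed to apply Theorem~\ref{thm:scminimax} there.
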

(The notation $H^1$ was introduced in Section \ref{sec:prelim}.)

\begin{proof}

We first prove that some Nash equilibrium of $H$ is also a joint strategy of $H^1$.
Let $s$ be a Nash equilibrium of $H$.  Suppose first that only one
strategy from $s$, say $s_i$, is not a strategy in $H^1$.  
The game $H$ is finite and the relation `weakly dominates'
is transitive so some strategy $t_i$ weakly dominates $s_i$ and is not weakly
dominated. Thus $(t_i, s_{-i})$ is a joint strategy in
$H^1$, which by Lemma \ref{lem:sc-sum1} is a Nash equilibrium in $H$.

Suppose now that none of the strategies from $s$ are strategies in
$H^1$.  By the argument just made we conclude that for some joint
strategy $t$ in $H^1$ first $(t_i, s_{-i})$ is a Nash equilibrium in $H$
and then that $t$ is a Nash equilibrium in $H$.

We conclude that a joint strategy is both a Nash equilibrium in $H$
and in $H^1$. The other claim then follows 
by the Minimax Theorem \ref{thm:scminimax}.
\end{proof}

\begin{lemma} \label{lem:scTwo}
  Consider a strictly competitive strategic game $H$ that has a Nash
  equilibrium and has two outcomes. Let $H^1$ be the result of removing
  from $H$ all weakly dominated strategies. Then $H^1$ is a trivial game.
\end{lemma}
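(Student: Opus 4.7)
The plan is to label the two outcomes as $o_1 = (a, b)$ and $o_2 = (c, d)$ and exploit strict competitiveness to align them: by the equivalence (\ref{equ:iff}) $a = c$ iff $b = d$, so both coordinates differ, and by strict competitiveness they do so in opposite directions. I may therefore assume $a > c$ and $d > b$, so that $o_1$ is player 1's preferred outcome and $o_2$ is player 2's preferred outcome.

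Next, let $s^* = (s_1^*, s_2^*)$ be the given Nash equilibrium. The argument splits into two cases depending on whether $p(s^*) = o_1$ or $p(s^*) = o_2$; the cases are symmetric upon interchanging the players, so I treat only the first. The key step is to apply the Minimax Theorem \ref{thm:scminimax} to conclude that $s_1^*$ is a security strategy for player 1, so $p_1(s_1^*, s_2) \geq maxmin_1(H) = a$ for every $s_2 \in S_2$. Since $a$ is the maximum value taken by $p_1$ on $H$ (the only other value is $c < a$), equality must hold throughout, and hence $p(s_1^*, s_2) = o_1$ for every $s_2 \in S_2$.

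With this in hand the weak-dominance argument is straightforward. Any strategy $s_1 \in S_1$ with $p_1(s_1, s_2) = c$ for some $s_2$ is weakly dominated by $s_1^*$, hence is not in $S_1^1$. Every surviving $s_1 \in S_1^1$ therefore yields outcome $o_1$ against every $s_2 \in S_2$, and in particular against every $s_2 \in S_2^1$. Since $S_2^1$ is non-empty (in the finite set $S_2$ the transitive, irreflexive relation ``weakly dominates'' has a maximal element), we conclude that $H^1$ is trivial with unique outcome $o_1$.

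The main obstacle is the step that promotes the equilibrium strategy $s_1^*$ of the player receiving his preferred outcome to a strategy that actually forces that outcome against every opposing move; this is not immediate from $s^*$ being a Nash equilibrium, but it follows from the Minimax Theorem because the game has only two outcomes, so the inequality defining a security strategy must be saturated everywhere. Once this observation is in place, the elimination of weakly dominated strategies cleans out all of player 1's strategies that could lead to $o_2$, leaving a trivial subgame.
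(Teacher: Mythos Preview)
Your proof is correct and follows essentially the same line as the paper's: identify the player $i$ for whom the equilibrium outcome is the preferred one, show that $s_i^*$ actually guarantees that outcome against every opponent strategy, and conclude that every strategy of player $i$ that can reach the other outcome is weakly dominated by $s_i^*$. The only notable difference is in how you establish the ``guarantee'' step: the paper argues directly from the Nash condition for player $-i$ (if $p_i(s^*_i, s_{-i})$ took the other value for some $s_{-i}$, then by (\ref{equ:iff}) player $-i$ would strictly gain by deviating to $s_{-i}$, contradicting equilibrium), whereas you invoke the Minimax Theorem~\ref{thm:scminimax} and observe that the security inequality must be an equality because $p_i$ takes only two values. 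Both routes are short; the paper's is slightly more self-contained, yours reuses an existing result. Your write-up is also more explicit than the paper's on the final step (why $H^1$ is trivial once all such strategies are removed) and on the non-emptiness of $S_2^1$, which the paper leaves tacit.
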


\begin{proof}
  Let $s^*$ be a Nash equilibrium of $H = (S_1, S_2, p_1, p_2)$ and
  $s'$ a joint strategy such that $p(s^*)$ and $p(s')$ are the two
  outcomes in $H$. By condition (\ref{equ:iff}) from Section
  \ref{sec:prelim} $p_1(s^*) \neq p_1(s')$ and
  $p_2(s^*) \neq p_2(s')$.  $H$ is strictly competitive, so for some
  $i$ both $p_i(s^*) > p_i(s')$ and $ p_{-i}(s') > p_{-i}(s^*)$.

  First we show that $p_i(s^*_i, s_{-i}) = p_i(s^*)$ for all
  $s_{-i} \in S_{-i}$. Suppose otherwise. Take $s_{-i}$ such that
  $p_i(s^*_i, s_{-i}) \neq p_i(s^*)$.  Then
  $p_i(s^*_i, s_{-i}) = p_i(s')$, so by (\ref{equ:iff})
  $ p_{-i}(s^*_i, s_{-i}) = p_{-i}(s') > p_{-i}(s^*)$, which
  contradicts the fact that $s^*$ is a Nash equilibrium.
  
Hence by the choice of $i$ for all $s_{-i} \in S_{-i}$
\[
p_i(s^*_i, s_{-i}) = p_i(s^*) \geq p_i(s'_i, s_{-i})
\]
and
\[
  p_i(s^*_i, s'_{-i}) = p_i(s^*) > p_i(s').
\]
So $s^*_i$ weakly dominates $s'_i$. This implies that $H^1$ is a trivial game.
\end{proof}

\section{Extensive games}
\label{prelim-ext}

After these preliminaries we now focus on the subject of this tutorial.

A \bfe{rooted tree} (from now on, just a \bfe{tree}) is a connected
directed graph (i.e., such that the undirected version is connected) 
with a unique node with the in-degree 0, called the \bfe{root}, and in
which every other node has the in-degree 1.  A \bfe{leaf} is a node
with the out-degree 0. We denote a tree by $(V,E,v_0)$, where $V$ is a
non-empty set of nodes, $E$ is a possibly empty set of directed edges,
and $v_0$ is the root. In drawings the edges will be directed
downwards.

An \bfe{extensive game with perfect information} (in short, just
an \bfe{extensive game})  for $n \geq 1$ players consists of:

\begin{itemize}

\item a set of players $\{1, \LL, n\}$,

\item a \bfe{game tree} $T := (V,E,v_0)$; we denote its set of leaves by $Z$, 
  
\item a \bfe{turn function} $\turn: V \setminus Z \to \{1, \LL, n\}$,

\item the \bfe{outcome functions}
$o_i: Z \myra \mathbb{R}$, for each player $i$.
\end{itemize}
We denote it by $(T, \turn, o_1, \LL, o_n)$.

As in the case of strategic games we assume that each player is
rational (which now means that his objective is to maximize his
outcome in the game) and that the players have common knowledge of the
game and of each others' rationality.

A node $w$ is called a \bfe{child} of $v$ in $T$ if $(v,w) \in E$.  A node in
$T$ is called a \bfe{preleaf} if all its children are leaves.  We say
that an extensive game is \bfe{finite} if its game tree is finite. In
what follows we limit our attention to finite extensive games.

The function $\turn$ determines at each non-leaf node which player
should move.  The edges of $T$ represent possible \bfe{moves} in the
considered game, while for a node $v \in V \setminus Z$ the set of its
children $C(v) := \{w \mid (v,w) \in E\}$ represents possible
\bfe{actions} of player $\turn(v)$ at $v$.

In the figures below we identify the actions with the labels we put
on the edges and thus identify each action with the corresponding
move. For convenience we do not assume the labels to be unique, but it
will not lead to confusion.  Further, we annotate the non-leaf nodes
with the identity of the player whose turn it is to move and the name
of the node. Finally, we annotate each leaf node with the
corresponding sequence of the values of the $o_i$ functions.

\begin{example}\label{exa:1}

  Consider the Prisoner's Dilemma and Matching Pennies games from
  Example \ref{exa:stra}.  Suppose the players move sequentially with
  the row player moving first. The game trees of the resulting
  extensive games are depicted in Figures \ref{fig:prisoner}
  and \ref{fig:matching} below. The thick lines in the second drawing
  will be explained later.
  \HB
    \begin{figure}[htbp]
    \centering
 
% First, set the overall layout of the tree
% You might need to play with these sizes to ensure nothing overlaps.
\tikzstyle{level 1}=[level distance=1.5cm, sibling distance=5cm]
\tikzstyle{level 2}=[level distance=1.5cm, sibling distance=2.5cm]
\begin{tikzpicture}
%Start with the parent node, and slowly build out the tree
% with each "child" representing a new level of the diagram
% each "node" represents a labelled (or unlabeled if you
% want) node in the diagram.

\node {1,$u$}
child{
node{2,$v$}
child{
node{(2,2)}
edge from parent
node[left]{\scriptsize $C$}
}
child{
node{(0,3)}
edge from parent
node[right]{\scriptsize $D$}}
edge from parent
node[left]{\scriptsize $C$}
}
child{
node{2,$w$}
child{
node{(3,0)}
edge from parent
node[left]{\scriptsize $C$}
}
child{
node{(1,1)}
edge from parent
node[right]{\scriptsize $D$}}
edge from parent
node[right]{\scriptsize $D$}
};

\end{tikzpicture}
    \caption{Extensive form of the Prisoner's Dilemma game}
    \label{fig:prisoner}
  \end{figure}
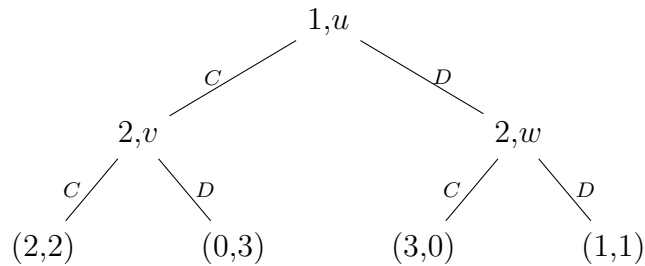

  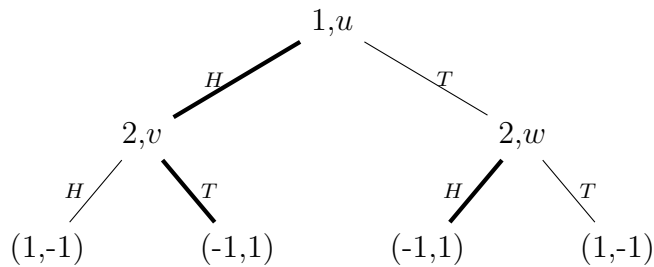
\begin{figure}[htbp]
    \centering
% First, set the overall layout of the tree
% You might need to play with these sizes to ensure nothing overlaps.
\tikzstyle{level 1}=[level distance=1.5cm, sibling distance=5cm]
\tikzstyle{level 2}=[level distance=1.5cm, sibling distance=2.5cm]
\tikzstyle{level 3}=[level distance=1.5cm, sibling distance=2cm]
\tikzstyle{level 4}=[level distance=1.5cm, sibling distance=2cm]
\tikzstyle{level 5}=[level distance=1.5cm, sibling distance=1cm]

\begin{tikzpicture}
[emph/.style={edge from parent/.style={very thick,draw}}]
[norm/.style={edge from parent/.style={black,thin,draw}}]

%Start with the parent node, and slowly build out the tree
% with each "child" representing a new level of the diagram
% each "node" represents a labelled (or unlabeled if you
% want) node in the diagram.

\node {1,$u$}
child{
node(e){2,$v$}
child{
node(f){(1,-1)}
edge from parent [thin]  
node[left]{\scriptsize $H$}
}
child{
node{(-1,1)}
edge from parent [ultra thick] 
node[right]{\scriptsize $T$}}
edge from parent [ultra thick] 
node[left]{\scriptsize $H$}
}
child{
node(a){2,$w$}
child{
node(b){(-1,1)}
edge from parent [ultra thick] 
node[left]{\scriptsize $H$}
}
child{
node{(1,-1)}
edge from parent [thin] 
node[right]{\scriptsize $T$}}
edge from parent [thin] 
node[right]{\scriptsize $T$}
};

\end{tikzpicture}
    \caption{Extensive form of the Matching Pennies game}
    \label{fig:matching}
  \end{figure}

%   \begin{figure}[htbp]
%     \centering
% % First, set the overall layout of the tree
% % You might need to play with these sizes to ensure nothing overlaps.
% \tikzstyle{level 1}=[level distance=1.5cm, sibling distance=5cm]
% \tikzstyle{level 2}=[level distance=1.5cm, sibling distance=2.5cm]
% \tikzstyle{level 3}=[level distance=1.5cm, sibling distance=2cm]
% \tikzstyle{level 4}=[level distance=1.5cm, sibling distance=2cm]
% \tikzstyle{level 5}=[level distance=1.5cm, sibling distance=1cm]
% \begin{tikzpicture}
% %Start with the parent node, and slowly build out the tree
% % with each "child" representing a new level of the diagram
% % each "node" represents a labelled (or unlabeled if you
% % want) node in the diagram.

% \node {1,$u$}
% child{
% node(e){2,$v$}
% child{
% node(f){(1,-1)}
% edge from parent
% node[left]{\scriptsize $H$}
% }
% child{
% node{(-1,1)}
% edge from parent
% node[right]{\scriptsize $T$}}
% edge from parent
% node[left]{\scriptsize $H$}
% }
% child{
% node(a){2,$w$}
% child{
% node(b){(-1,1)}
% edge from parent
% node[left]{\scriptsize $H$}
% }
% child{
% node{(1,-1)}
% edge from parent
% node[right]{\scriptsize $T$}}
% edge from parent
% node[right]{\scriptsize $T$}
% };

% \end{tikzpicture}
%     \caption{Extensive form of the Matching Pennies game}
%     \label{fig:matching}
%   \end{figure}

\end{example}

\begin{example} \label{exa:ultimatum}
The following two-player game is called the \bfe{Ultimatum game}.
Player 1 moves first and selects a number $x \in \{0, 1, \LL, 100\}$
intepreted as a percentage of some good to be shared, leaving the
fraction of $(100-x) \%$ for the other player. Player 2 either
\emph{accepts} this decision, the outcome is then $(x, 100-x)$, or
\emph{rejects} it, the outcome is then $(0,0)$.  The game tree is
depicted in Figure \ref{fig:ultimatum}, where the action of player 1
is a number from the set $\{0, 1, \LL, 100\}$, and the actions of
player 2 are $A$ and $R$.

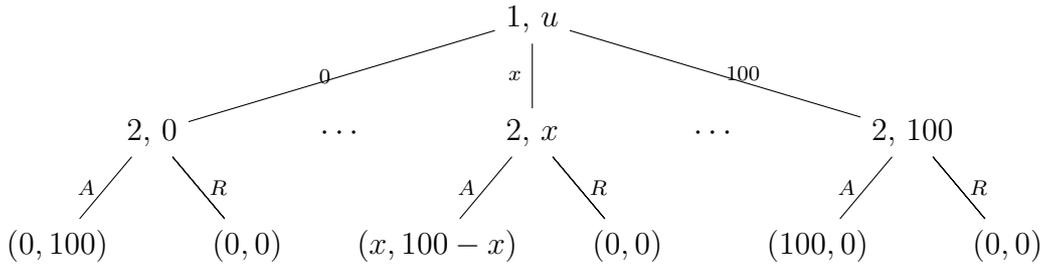
\begin{figure}[ht]
  \centering
  \tikzstyle{level 1}=[level distance=1.5cm, sibling distance=5cm]
  \tikzstyle{level 2}=[level distance=1.5cm, sibling distance=2.5cm]
  \tikzstyle{level 3}=[level distance=1.5cm, sibling distance=2cm]
\begin{tikzpicture}
 \node (r){1, $u$}
 child{
   node (a){2, $0$}
   child{
     node (d){$(0, 100)$}
     edge from parent
     node[left]{\scriptsize $A$}
   }
   child{
     node(e){$(0,0)$}
     edge from parent
     node[right]{\scriptsize $R$}
     edge from parent
   }
   edge from parent
   node[left]{\scriptsize $0$}
   }
 child{
   node (b){2, $x$}
   child{
     node (f){$(x, 100-x)$}
     edge from parent
     node[left]{\scriptsize $A$}
   }
   child{
     node (g){$(0,0)$}
     edge from parent
     node[right]{\scriptsize $R$}
     edge from parent
   }
   edge from parent
   node[left]{\scriptsize $x$}
 }
  child{
   node (c){2, 100}
   child{
     node (h){$(100, 0)$}
     edge from parent
     node[left]{\scriptsize $A$}
   }
   child{
     node (i){$(0,0)$}
     edge from parent
     node[right]{\scriptsize $R$}
     edge from parent
   }
  edge from parent
  node[right]{\scriptsize $100$}};

  \path (a) -- (b) node [midway] {$\cdots$};
  \path (b) -- (c) node [midway] {$\cdots$};
\end{tikzpicture}
    \caption{The Ultimatum game}
    \label{fig:ultimatum}
  \end{figure}
\HB  
\end{example}

Next we introduce strategies in extensive games.  Consider a finite
extensive game $G:= (T, \turn, o_1, \LL, o_n)$.  Let $V_i := \{v \in V
\setminus Z \mid \turn(v) = i\}$.  So $V_i$ is the set of nodes at which
player $i$ moves.  Its elements are called the \bfe{decision nodes} of
player $i$.  A \bfe{strategy} for player $i$ is a function $s_i: V_i
\to V$, such that $(v, s_i(v)) \in E$ for all $v \in V_i$.  \bfe{Joint
  strategies} are defined as in strategic games.  When the game tree
consists of just one node, each strategy is the empty function,
denoted by $\emptyset$, and there is only one joint strategy, namely
the $n$-tuple of these functions.

Each joint strategy different from $(\ES, \LL, \ES)$ assigns a unique child to
every node in $V \setminus Z$.  In fact, we can identify joint
strategies with such assignments.  Each joint strategy $s = (s_1, \LL,
s_n)$ determines a rooted path $\mathit{play}(s) := (v_1, \LL, v_m)$ in
$T$ defined inductively as follows:

\begin{itemize}
\item $v_1$ is the root of $T$,

\item if $v_{k} \not\in Z$, then $v_{k+1} := s_i(v_k)$, where $\turn(v_k) = i$.

\end{itemize}

Informally, given a joint strategy $s$, we can view $\mathit{play}(s)$ as the
resulting \emph{play} of the game.

$G$ is finite, so for each joint
strategy $s$ the rooted path $\mathit{play}(s)$ is finite.  Denote by
$\leaf(s)$ the last element of $\mathit{play}(s)$. We call then
$(o_1(\leaf(s)), \LL, o_n(\leaf(s)))$ the \bfe{outcome} of the game
$G$ when each player $i$ pursues his strategy $s_i$ and abbreviate it
to $o(\leaf(s))$.

\begin{example}
  \label{exa:extMP}
  
  Let us return to the extensive form of the Matching Pennies game
  from Example \ref{exa:1}.  The strategies for player 1 are: $H$ and  $T$,
  while the strategies for player 2 are: $\mathit{HH}$, $\mathit{HT}$,
  $\mathit{TH}$, and $\mathit{TT}$, where for instance $\mathit{TH}$
  stands for a strategy that selects $T$ at the node $v$ and $H$ at
  the node $w$.  In Figure \ref{fig:matching} thick lines correspond
  with the joint strategy $(H,\mathit{TH})$. Here
  $\play(\mathit{H,TH}) = (u, v, (-1,1))$, where we identify each leaf
  with the corresponding outcome, and $o(\leaf(\mathit{H,TH})) = (-1,1)$. 
  \HB
\end{example}

With each finite extensive game $G: = (T, \turn, o_1, \LL, o_n)$ we
associate a strategic game
$\Gamma(G):= (S_1, \LL, S_n , p_1, \LL, p_n)$ defined as follows:

\begin{itemize}

\item $S_i$ is the set of strategies of player $i$ in $G$,

\item $p_i(s) := o_i(\leaf(s))$.

\end{itemize}
We call $\Gamma(G)$ the \bfe{strategic form} of $G$.

All notions introduced in the context of strategic games can now be reused
in the context of finite extensive games simply by referring to the
corresponding strategic form. This way we obtain the notions of a best
response, Nash equilibrium, extensive games that have the FIP,
are weakly acyclic, etc.

\begin{example}
  The strategic form of the extensive form of the Matching Pennies game from Example
  \ref{exa:1} differs from the initial Matching Pennies game from Example
  \ref{exa:stra} and looks as follows:

\begin{center}
\begin{game}{2}{4}
      & $HH$               & $HT$                & $TH$               & $TT$ \\
$H$   &$\phantom{-}1,-1$    &$\phantom{-}1,-1$   &{$-1,\phantom{-}1$}    &$-1, \phantom{-}1$ \\
$T$   &$-1, \phantom{-}1$   &$\phantom{-}1, -1$  &{$-1,\phantom{-}1$}   &$\phantom{-}1,-1$   
\end{game}
\end{center}
\II

Note that this game has two Nash equilibria: $(H, \mathit{TH})$ and
$(T, \mathit{TH})$. The first one is depicted in Figure
\ref{fig:matching} by thick lines. By definition, these are the Nash
equilibria of the extensive form of the Matching Pennies game.

The intuitive reason that there are two Nash equilibria is that no
matter which out of his two strategies the first player selects, the
second player can always secure the payoff 1 for himself. Of course,
the decision which player moves first affects both the sets of
strategies and the sets of Nash equilibria.

One can also easily check that the extensive form of the Prisoner's Dilemma
game given in Figure \ref{fig:prisoner} has one Nash equilibrium,
$(D, DD)$.
\HB
\end{example}

\begin{example} \label{exa:ultimatum1}
  Consider now the Ultimatum game from Example \ref{exa:ultimatum}.
  Each strategy for player 1 is a number from $\{0, 1, \LL, 100\}$,
  while each strategy for player 2 is a function from $\{0, 1, \LL, 100\}$
  to $\{A, R\}$.
    % So it should be clear what we mean by player 2
   % strategies \textbf{always} $A$ or \textbf{if} $x \leq x^*$
   % \textbf{then} $A$ \textbf{else} $R$ \textbf{fi}, where
   % $x^* \in [0,100]$ or in the discrete variant
   % $x^* \in \{0, 1, \LL, 100\}$.

  It is easy to check that $(100, s_2)$, where for
  $y \in \{0, 1, \LL, 100\}$, $s_2(y) = R$ is a Nash equilibrium with
  the outcome $(0,0)$ and that all other Nash equilibria are of the
  form $(x,s_2)$, where $s_2(x) = A$ and $s_2(y) = R$ for $y > x$, and
  $x,y \in \{0, 1, \LL, 100\}$, with the corresponding outcome
  $(x, 100-x)$.
   \HB
\end{example}

Concepts such as Nash equilibrium can be defined directly, without a
detour through the strategic games.  However, introducing strategic
games first allows us to view finite extensive games as a special
class of strategic games and allows us to conclude that some results
established for the strategic games, for instance the Weakly Acyclic
Theorem \ref{thm:weakly}, also hold for all finite extensive games.

As we shall see, for extensive games, due to their structure,
additional results hold. Further, their structure suggests a new
equilibrium notion that is meaningful only for these games.  Finally,
when discussing iterated elimination of weakly dominated strategies in
an extensive game, one needs to reason about its strategic form.

All examples in this section are instances of the so-called
\bfe{Stackelberg competition}.  In such games a \emph{leader} moves
first and a \emph{follower}, having observed the resulting action,
moves second.  Of course, there are other natural extensive games, in
particular multi-player games and games with infinite game trees.

For extensive games that are not Stackelberg competition games it is
legitimate to question the notion of a strategy. Namely, one would
expect that when a player following a strategy makes a move to a node
$u$, then all his subsequent moves should take place in the subtree
rooted at $u$. However, the definition of a strategy does not stipulate it
as it is `overdefined'.  A natural revision was
introduced in \cite[page 94]{OR94}.

Given a node $w$ in the game tree consider the path 
$v = v_1, \LL, v_k = w$ to it from the root $v$. Let
\[
  [w]_i := \{(v_j, v_{j+1}) \mid j \in \{1, \LL, k-1 \} \mathrm{\ and \ } \turn(v_j) = i\}.
\]
Informally, $[w]_i$ is the set of the moves of player $i$ that lie on
the path from the root to $w$.

We call $rs_i$ a \bfe{reduced strategy} of player $i$ if it is a
maximal subset of a strategy $s_i$ (recall that each strategy is a
function, so a set of pairs of nodes) that satisfies the following
property:
\[
\mbox{if $(u,w) \in rs_i$, then $[w]_i \sse rs_i$}.
\]
Intuitively, $rs_i$ is a reduced strategy of player $i$ if according to it 
each of his moves follows his earlier moves in the game.

Just like the joint strategies, each joint reduced strategy determines a play of the game. This allows us
to define the outcome of the game when each player pursues his reduced strategy.
Associate now with each joint reduced strategy $r$ the following set of original joint strategies:
\[
  \emph{Str}(r) := \{s \mid \fa i \in \{1, \LL n\}: r_i \sse s_i\}.
\]
One can show that the sets $\emph{Str}(r)$, where $r$ is a Nash
equilibrium in the reduced strategies, form a partition of the set of
original Nash equilibria.  So each Nash equilibrium in the reduced
strategies is a convenient representation of a set of Nash equilibria
in the original strategies.

\begin{example} \label{exa:centipede}
  Consider the classic centipede game due to \cite{Ros81}.  In Figure
  \ref{fig:centipede} we present a version of the game from \cite[page
  107]{OR94}. $C$ and $S$ represent the actions `continue' and `stop'.

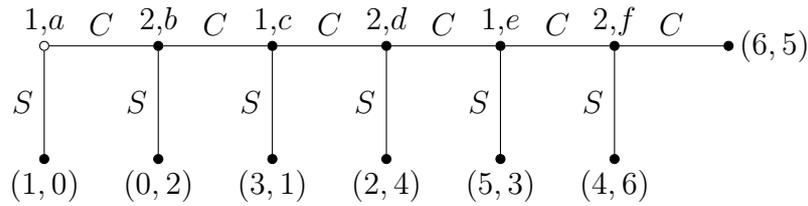
\begin{figure}[htbp]
\centering

\begin{tikzpicture}
% Two node styles: solid and hollow
\tikzstyle{solid node}=[circle,draw,inner sep=1.2,fill=black];
\tikzstyle{hollow node}=[circle,draw,inner sep=1.2];
% The Tree
\node(0)[hollow node]{}
child[grow=down]{node[solid node]{}edge from parent node[left]{$S$}}
child[grow=right]{node(1)[solid node]{}
child[grow=down]{node[solid node]{}edge from parent node[left]{$S$}}
child[grow=right]{node(2)[solid node]{}
child[grow=down]{node[solid node]{}edge from parent node[left]{$S$}}
child[grow=right]{node(3)[solid node]{}
child[grow=down]{node[solid node]{}edge from parent node[left]{$S$}}
child[grow=right]{node(4)[solid node]{}
child[grow=down]{node[solid node]{}edge from parent node[left]{$S$}}
child[grow=right]{node(5)[solid node]{}
child[grow=down]{node[solid node]{}edge from parent node[left]{$S$}}
child[grow=right]{node(6)[solid node]{}
edge from parent node[above]{$C$}
}
edge from parent node[above]{$C$}
}
edge from parent node[above]{$C$}
}
edge from parent node[above]{$C$}
}
edge from parent node[above]{$C$}
}
edge from parent node[above]{$C$}
};
% Movers
\node[above]at(0){1,$a$};
\node[above]at(2){1,$c$};
\node[above]at(4){1,$e$};
\node[above]at(1){2,$b$};
\node[above]at(3){2,$d$};
\node[above]at(5){2,$f$};

%\foreach \x in {1,3,5}
%\node[above]at(\x){2};

% payoffs
\node[below]at(0-1){$(1,0)$};
\node[below]at(1-1){$(0,2)$};
\node[below]at(2-1){$(3,1)$};
\node[below]at(3-1){$(2,4)$};
\node[below]at(4-1){$(5,3)$};
\node[below]at(5-1){$(4,6)$};
\node[right]at(6){$(6,5)$};
\end{tikzpicture}
      \caption{A six-period version of the centipede game}
      \label{fig:centipede}
    \end{figure}

  Each player has three decision nodes and two actions at each of
  them. So each player has eight strategies. In contrast, both players
  have four reduced strategies. These are
  \II

  for player 1: $aS$, $aC\s{}cS$, $aC\s{}cC\s{}eS$, $aC\s{}cC\s{}eC$, and
  \II
  
  for player 2: $bS$, $bC\s{}dS$,  $bC\s{}dC\s{}fS$, $bC\s{}dC\s{}fC$,
  \II
  
  \NI
  where $aC\s{}cC\s{}eS$, is a shorthand for $\{(a,C), (c,C), (e,S)\}$,
  etc.  The strategic form corresponding to the joint reduced strategies
  is given in Figure \ref{fig:strategic}.

\begin{figure}[htbp]
  \centering
\begin{game}{4}{4}
                 & $bS$    & $bC\s{}dS$    & $bC\s{}dC\s{}fS$ & $bC\s{}dC\s{}fC$ \\
$aS$             & $1,0$   &$1,0$        & $1,0$        & $1,0$ \\
$aC\s{} cS$        & $0,2$   &$3,1$        & $3,1$        & $3,1$ \\
$aC\s{} cC\s{} eS$   & $0,2$   &$2,4$        & $5,3$        & $5,3$ \\
$aC\s{} cC\s{} eC$   & $0,2$   &$2,4$        & $4,6$        & $6,5$
\end{game}

  \caption{The strategic form of the centipede game that uses reduced strategies}
  \label{fig:strategic}
\end{figure}

This game has a unique Nash equilibrium, namely $(aS, bS)$, with the
outcome $(1,0)$.  It can be obtained by solving the game by an IEWDS
in six steps, by repeatedly eliminating the rightmost column and the
lowest row.  In contrast, the strategic form corresponding to the 
original extensive game has several Nash equilibria.
According to the notation introduced above, these Nash equilibria
form the set $\emph{Str}((aS, bS))$.
\HB
\end{example}

\section{Subgame perfect equilibria}
\label{sec:spe}

\subsection{Definition and examples}

Example \ref{exa:ultimatum1} suggests that the concept of a Nash
equilibrium is not informative enough to predict outcomes of extensive
games: it results in too many scenarios, some of which are obviously
inferior to all players.  Another issue is the problem
of so-called `not credible threat' as illustrated in the following
example.

 \begin{figure}[htbp]
    \centering
\tikzstyle{level 1}=[level distance=1.5cm, sibling distance=5cm]
\tikzstyle{level 2}=[level distance=1.5cm, sibling distance=2.5cm]
\tikzstyle{level 3}=[level distance=1.5cm, sibling distance=2cm]
\tikzstyle{level 4}=[level distance=1.5cm, sibling distance=2cm]
\tikzstyle{level 5}=[level distance=1.5cm, sibling distance=1cm]

\begin{tikzpicture}
[emph/.style={edge from parent/.style={very thick,draw}}]
[norm/.style={edge from parent/.style={black,thin,draw}}]

\node {1,$u$}
child{
node(e){2,$v$}
child{
node(f){(1,-1)}
edge from parent [thin]  
node[left]{\scriptsize $H$}
}
child{
node{(-1,1)}
edge from parent [thin]
node[right]{\scriptsize $T$}}
edge from parent [thin]
node[left]{\scriptsize $H$}
}
child{
node(a){2,$w$}
child{
node(b){(-1,1)}
edge from parent [thin]
node[left]{\scriptsize $H$}
}
child{
node{(-10, 0)}
edge from parent [thin] 
node[right]{\scriptsize $T$}}
edge from parent [thin] 
node[right]{\scriptsize $T$}
};

\end{tikzpicture}
    \caption{A modification of the Matching Pennies game}
    \label{fig:matching1}
  \end{figure}
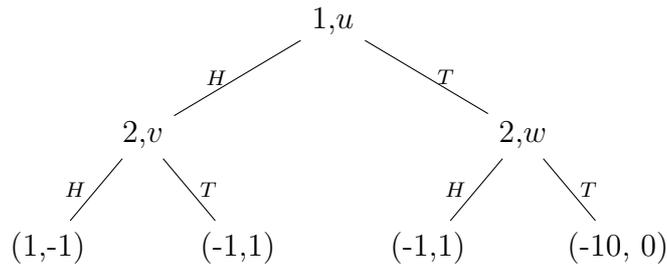

\begin{example}
  \label{ex:notcredible}
Consider the extensive game given in Figure \ref{fig:matching1}. This
game has three Nash equilibria: $(H, TH)$, $(T, TH)$, and $(H,
TT)$. However, the last equilibrium is not plausible: if player 1
chooses $T$, then player 2 should select $H$ and not $T$: the `threat'
of player 2 to choose $T$ at the node 2 is not credible.
\HB
\end{example}

Motivated by the issue of non-credible threats Selten introduced in \cite{Sel65} a
stronger equilibrium concept.  To define it we need to use 
strategies instead of the restricted strategies.

Consider an extensive game $G: = (T, \turn, o_1, \LL, o_n)$ and a non-leaf
node $w$ of $T$. Denote by $T^w$ the subtree of $T$ rooted at $w$.
The \bfe{subgame of $G$ rooted at the node $w$}, denoted by
$G^w$, is defined as follows:

\begin{itemize}

\item its set of players is $\{1, \LL, n\}$,
  
\item its tree is $T^w$,

\item its turn and payoff functions are the restrictions of 
the corresponding functions of $G$ to the nodes of $T^w$.

\end{itemize}
So the notion of a subgame has a different meaning for the strategic
and for the extensive games.  Note that some players may `drop out' in
$G^w$, in the sense that at no node of $T^w$ it is their turn to move.
Still, to keep the notation simple, it is convenient to admit in $G^w$
all original players in $G$. Each strategy $s_i$ of player $i$ in $G$
uniquely determines his strategy $s^w_i$ in $G^w$.  Given a joint
strategy $s = (s_1, \LL, s_n)$ of $G$ we denote by $s^w$ the joint
strategy $(s^w_1, \LL, s^w_n)$ in $G^w$.

A joint strategy $s$ of $G$ is called a \bfe{subgame perfect
  equilibrium} in $G$ if for every node $w$ of $T$, the joint strategy
$s^w$ of $G^w$ is a Nash equilibrium in $G^w$.  Informally $s$ is
subgame perfect equilibrium in $G$ if it induces a Nash equilibrium in
every subgame of $G$.

It is straightforward to check that all Nash equilibria in the
extensive forms of the Prisoner's Dilemma and Matching Pennies games
are also subgame perfect equilibria. The modified Matching
  Pennies game given in Example \ref{ex:notcredible} has two subgame
  perfect equilibria: $(H,\mathit{TH})$ and $(T,\mathit{TH})$. Note
  that the Nash equilibrium $(H, \mathit{TT})$ that involves a
  non-credible threat is not a subgame perfect equilibrium.

\begin{example} \label{exa:ultimatum2}

  Return now to the Ultimatum game from Example
  \ref{exa:ultimatum}. In Example \ref{exa:ultimatum1} we noticed that
  this game has several Nash equilibria. However, only two of them are
  subgame perfect equilibria. They are depicted in Figures
  \ref{fig:ultimatum1} and \ref{fig:ultimatum2}. In the first
  equilibrium player 1 selects 100 and player 2 accepts all offers,
  while in the second equilibrium player 1 selects 99 and player 2
  accepts all offers except 100. These equilibria are more intuitive
  than the remaining Nash equilibria and provide natural insights into
  this game.
  \HB
\end{example}

\begin{figure}[htbp]
  \centering
  \tikzstyle{level 1}=[level distance=1.5cm, sibling distance=5cm]
  \tikzstyle{level 2}=[level distance=1.5cm, sibling distance=2.5cm]
  \tikzstyle{level 3}=[level distance=1.5cm, sibling distance=2cm]
\begin{tikzpicture}
 \node (r){1, $u$}
 child{
   node (a){2, $0$}
   child{
     node (d){$(0, 100)$}
     edge from parent [ultra thick]
     node[left]{\scriptsize $A$}
   }
   child{
     node(e){$(0,0)$}
     edge from parent
     node[right]{\scriptsize $R$}
     edge from parent
   }
   edge from parent
   node[left]{\scriptsize $0$}
   }
 child{
   node (b){2, $x$}
   child{
     node (f){$(x, 100-x)$}
     edge from parent [ultra thick]
     node[left]{\scriptsize $A$}
   }
   child{
     node (g){$(0,0)$}
     edge from parent [thin]
     node[right]{\scriptsize $R$}
     edge from parent
   }
   edge from parent 
   node[left]{\scriptsize $x$}
 }
  child{
   node (c){2, 100}
   child{
     node (h){$(100, 0)$}
     edge from parent [ultra thick]
     node[left]{\scriptsize $A$}
   }
   child{
     node (i){$(0,0)$}
     edge from parent
     node[right]{\scriptsize $R$}
     edge from parent [thin]
   }
  edge from parent [ultra thick]
  node[right]{\scriptsize $100$}};

  \path (a) -- (b) node [midway] {$\cdots$};
  \path (b) -- (c) node [midway] {$\cdots$};
\end{tikzpicture}
    \caption{A subgame perfect equilibrium in the Ultimatum game}
    \label{fig:ultimatum1}
\end{figure}
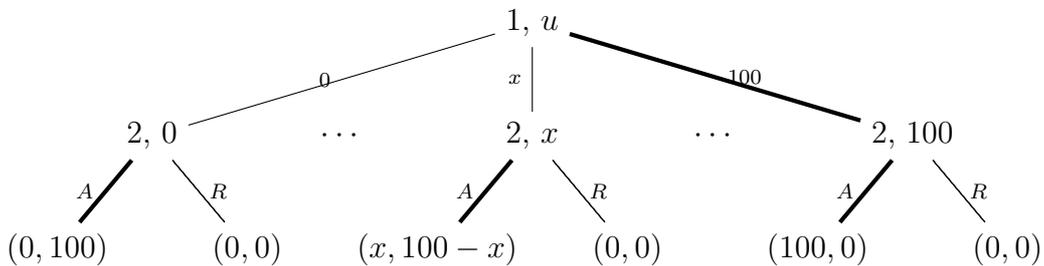

\begin{figure}[htbp]
  \centering
  \tikzstyle{level 1}=[level distance=1.5cm, sibling distance=3.5cm]
  \tikzstyle{level 2}=[level distance=1.5cm, sibling distance=2cm]
  \tikzstyle{level 3}=[level distance=1.5cm, sibling distance=0.1cm]
\begin{tikzpicture}
 \node (r){1, $u$}
 child{
   node (a){2, $0$}
   child{
     node (d){$(0, 100)$}
     edge from parent [ultra thick]
     node[left]{\scriptsize $A$}
   }
   child{
     node(e){$(0,0)$}
     edge from parent
     node[right]{\scriptsize $R$}
     edge from parent
   }
   edge from parent
   node[left]{\scriptsize $0$}
}
child{
  node (z){2, $98$}
   child{
     node (d){$(98, 2)$}
     edge from parent [ultra thick]
     node[left]{\scriptsize $A$}
   }
   child{
     node(e){$(0,0)$}
     edge from parent
     node[right]{\scriptsize $R$}
     edge from parent
   }
   edge from parent
   node[left]{\scriptsize $98$}
   }
 child{
   node (b){2, $99$}
   child{
     node (f){$(99, 1)$}
     edge from parent [ultra thick] 
     node[left]{\scriptsize $A$}
   }
   child{
     node (g){$(0,0)$}
     edge from parent [thin] 
     node[right]{\scriptsize $R$}
     edge from parent
   }
   edge from parent [ultra thick]
   node[left]{\scriptsize $99$}
 }
  child{
   node (c){2, 100}
   child{
     node (h){$(100, 0)$}
     edge from parent
     node[left]{\scriptsize $A$}
   }
   child{
     node (i){$(0,0)$}
     edge from parent
     node[right]{\scriptsize $R$}
     edge from parent [ultra thick]
   }
  edge from parent
  node[right]{\scriptsize $100$}};

  \path (a) -- (z) node [midway] {$\cdots$};
  % \path (b) -- (c) node [midway] {$\cdots$};
\end{tikzpicture}
    \caption{Another subgame perfect equilibrium in the Ultimatum game}
    \label{fig:ultimatum2}
\end{figure}
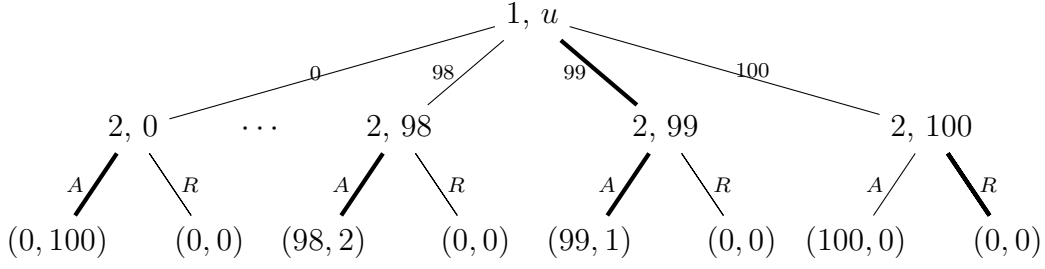

It should be noted that the Ultimatum game has been extensively
analysed in experimental economics. It has been observed that in
practice, people do not often play a Nash equilibrium or a subgame
perfect equilibrium.

\subsection{Backward induction}

We now show that for a finite extensive game $G$ over $T$, one can
construct a subgame perfect equilibrium using the iterative procedure
described in Algorithm \ref{alg:BI} called the \bfe{backward
  induction algorithm}. Since each loop iteration of Algorithm
\ref{alg:BI} modifies the underlying game tree, we use the notation
$C(v,T)$ to denote the set of children of node $v$ in the current
version of the tree $T$.

{\footnotesize
  \begin{algorithm}[htbp]
%    \caption{\label{alg:EBI}}
\caption{\label{alg:BI}}
\KwIn{A finite extensive game $G: = (T, \turn, o_1, \LL, o_n)$ with $T = (V,E,v_0)$} 
\KwOut{A subgame perfect equilibrium $s$ in $G$ and
  extensions of the functions $o_1, \LL, o_n$ to all nodes of $T$
  such that $o(v_0) = o(\leaf(s))$}
\While{$|V| > 1$}{
  \textbf{choose} $v \in V$ that is a preleaf of $T$;

  $i := \turn(v)$;

  \textbf{choose} $w \in C(v,T)$ such that $o_i(w)$ is maximal;

  $s_i(v) := w$;

  $o(v) := o(w)$;

  $V := V \setminus C(v,T)$;

  $E := E \cap (V \times V)$;

  $T := (V,E,v_0)$

}  
\end{algorithm}
}

Note that Algorithm \ref{alg:BI} always terminates but in general need
not have a unique outcome due to the presence of the \textbf{choose}
statements.  Each execution (i.e., each selection of values in the \textbf{choose} statements)
constructs a unique joint strategy $s$ and
an extension of the functions $o_1, \LL, o_n$ to all nodes of the game
tree.

\begin{example}
  Consider the Ultimatum game from Example
  \ref{exa:ultimatum}. Algorithm \ref{alg:BI} generates two possible
  outputs that correspond to Figures \ref{fig:ultimatum1} and
  \ref{fig:ultimatum2}.  For the second outcome the corresponding
  extensions of the outcome functions to all nodes are given in Figure
  \ref{fig:ultimatum3}.
  \HB

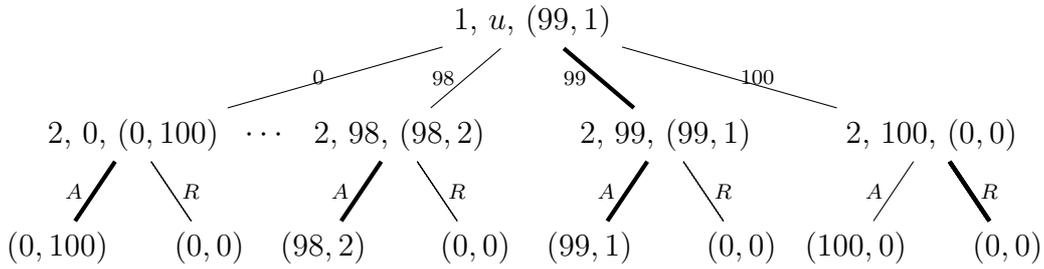
\begin{figure}[htbp]
  \centering
  \tikzstyle{level 1}=[level distance=1.5cm, sibling distance=3.5cm]
  \tikzstyle{level 2}=[level distance=1.5cm, sibling distance=2cm]
  \tikzstyle{level 3}=[level distance=1.5cm, sibling distance=0.1cm]
\begin{tikzpicture}
 \node (r){1, $u$, $(99,1)$}
 child{
   node (a){2, $0$, $(0,100)$}
   child{
     node (d){$(0, 100)$}
     edge from parent [ultra thick]
     node[left]{\scriptsize $A$}
   }
   child{
     node(e){$(0,0)$}
     edge from parent
     node[right]{\scriptsize $R$}
     edge from parent
   }
   edge from parent
   node[left]{\scriptsize $0$}
}
child{
  node (z){2, $98$, $(98,2)$}
   child{
     node (d){$(98, 2)$}
     edge from parent [ultra thick]
     node[left]{\scriptsize $A$}
   }
   child{
     node(e){$(0,0)$}
     edge from parent
     node[right]{\scriptsize $R$}
     edge from parent
   }
   edge from parent
   node[left]{\scriptsize $98$}
   }
 child{
   node (b){2, $99$, $(99,1)$}
   child{
     node (f){$(99, 1)$}
     edge from parent [ultra thick] 
     node[left]{\scriptsize $A$}
   }
   child{
     node (g){$(0,0)$}
     edge from parent [thin] 
     node[right]{\scriptsize $R$}
     edge from parent
   }
   edge from parent [ultra thick]
   node[left]{\scriptsize $99$}
 }
  child{
   node (c){2, 100, $(0,0)$}
   child{
     node (h){$(100, 0)$}
     edge from parent
     node[left]{\scriptsize $A$}
   }
   child{
     node (i){$(0,0)$}
     edge from parent
     node[right]{\scriptsize $R$}
     edge from parent [ultra thick]
   }
  edge from parent
  node[right]{\scriptsize $100$}};

  \path (a) -- (z) node [midway] {$\cdots$};
  % \path (b) -- (c) node [midway] {$\cdots$};
\end{tikzpicture}
    \caption{The backward induction algorithm and the Ultimatum game}
    \label{fig:ultimatum3}
\end{figure}
\end{example}

The following characterisation result makes use of the nondeterminism
present in the algorithm.

\begin{theorem}
  \label{thm:extSPE}
  For every finite extensive game all possible executions of the backward
  induction algorithm generate precisely all subgame perfect
  equilibria.

\end{theorem}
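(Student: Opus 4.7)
The plan is to prove both inclusions simultaneously by induction on the number of nodes of the game tree $T$. The base case, where $T$ consists of a single leaf, is immediate: the unique (empty) joint strategy is vacuously a subgame perfect equilibrium, and the algorithm exits its loop at once, returning this strategy.

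For the inductive step, fix an arbitrary preleaf $v$ of $T$ with $\turn(v) = i$. For each child $w \in C(v,T)$ (all of which are leaves of $T$), let $G_w$ denote the game obtained from $G$ by deleting all children of $v$ and declaring $v$ itself a leaf with outcomes $o_j(v) := o_j(w)$ for every player $j$. Note that $G_w$ has strictly fewer nodes than $G$, so the inductive hypothesis applies to it. Given any joint strategy $s$ of $G$ with $s_i(v) = w$, write $s'$ for the corresponding joint strategy in $G_w$ obtained by forgetting the value at $v$; conversely, any joint strategy of $G_w$ is extended to one in $G$ by setting the move at $v$ to $w$.

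The heart of the argument is the following one-step lemma: if $w \in C(v,T)$ maximizes $o_i$ over $C(v,T)$, then $s$ is a subgame perfect equilibrium of $G$ if and only if $s'$ is a subgame perfect equilibrium of $G_w$. To verify it I would argue node by node. For the subgame $G^v$, being a Nash equilibrium reduces to $s_i(v)$ being a best response among the children, which is exactly the maximality condition on $w$. For a node $u$ that is an ancestor of $v$, the outcome of any joint strategy in $G^u$ in which the move at $v$ is $w$ coincides with the outcome of the corresponding joint strategy in $(G_w)^u$, because any play that reaches $v$ continues to $w$ in $G$ and terminates at $v$ with the same outcome in $G_w$; thus the Nash equilibrium condition at $u$ transports between the two games. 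For a node $u$ that is neither $v$ nor an ancestor of $v$ nor a deleted child, the subgames $G^u$ and $(G_w)^u$ are literally identical, so there is nothing to check.

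With the lemma in hand both directions of the theorem fall out easily. For the forward direction, any execution of backward induction begins by selecting some preleaf $v$, player $i = \turn(v)$, and child $w$ maximizing $o_i$; the remaining iterations constitute a valid execution of backward induction on $G_w$, which by the inductive hypothesis produces a subgame perfect equilibrium $s'$ of $G_w$. The lemma then lifts $s'$ to a subgame perfect equilibrium $s$ of $G$ with $s_i(v) = w$. Conversely, given any subgame perfect equilibrium $s$ of $G$, choose any preleaf $v$ and set $w := s_i(v)$; since $s^v$ is a Nash equilibrium of $G^v$, the value $w$ maximizes $o_i$ over $C(v,T)$, so the first iteration of backward induction is permitted to make exactly these choices. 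By the lemma, $s'$ is a subgame perfect equilibrium of $G_w$, and by induction some execution of backward induction on $G_w$ produces $s'$; prepending the first iteration gives an execution on $G$ that produces $s$. The main obstacle is articulating the one-step lemma carefully — in particular the bijective correspondence between strategies of $G$ with $s_i(v) = w$ fixed and strategies of $G_w$, and the fact that payoffs match across this correspondence so that Nash equilibrium conditions transfer.
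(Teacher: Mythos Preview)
Your approach is correct but takes a genuinely different route from the paper's. The paper inducts on the height of $T$ and decomposes at the \emph{root}: for each child $w_l$ of the root it extracts the sub-execution of the algorithm on the subgame $G^{w_l}$, applies the inductive hypothesis there, and then invokes the one-deviation property (Lemma~\ref{lm:SPE}) to handle the final iteration at the root. You instead induct on the number of nodes and decompose at a \emph{preleaf}, pruning its children to form the smaller game $G_w$. Your decomposition mirrors the algorithm's loop structure directly---each inductive step corresponds to exactly one \textbf{while} iteration---whereas the paper must reorder an arbitrary execution into sub-executions per root-child and then glue them back together. Conversely, the paper's route leans on an independently useful lemma, so its inductive step is shorter once that lemma is in hand; your one-step lemma is bespoke to this proof.

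One point in your one-step lemma deserves explicit care. When lifting a subgame perfect equilibrium $s'$ of $G_w$ back to $G$, you must verify the Nash condition at every ancestor $u$ of $v$ against \emph{all} deviations of player $i$ in $G^u$, including those that switch the move at $v$ to some $w' \neq w$. Such deviations have no counterpart in $(G_w)^u$, so the payoff-matching correspondence you describe (between strategies in $G$ with the move at $v$ fixed to $w$ and strategies in $G_w$) does not by itself cover them. The repair is immediate---replacing $w'$ by $w$ at $v$ can only weakly increase player $i$'s payoff since $w$ maximizes $o_i$ over $C(v,T)$---but it should be stated, and it is precisely where the maximality hypothesis on $w$ enters in that direction of the equivalence.
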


To establish this result we shall need a preparatory lemma, called the
`one deviation property' (see, e.g., \cite[page 98]{OR94}). Recall
that for a function $f: X \to Y$ (with $X \neq \emptyset$),
$\mathrm{argmax}_{x \in X} f(x) := \{y \in X \mid f(y) = \max_{x \in
  X} f(x)\}$.

\begin{lemma}
  \label{lm:SPE}
    Let $G$ be a finite extensive game over the game tree $T$. A joint
    strategy $s$ is a subgame perfect equilibrium in $G$ iff
for all non-leaf nodes $u$ in $T$
\[
\mbox{$s_i(u) \in \mathrm{argmax}_{x \in C(u)} o_i(\leaf(s^x))$, where $i=\turn(u)$.}
\]  
\end{lemma}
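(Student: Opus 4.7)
The plan is to prove the two directions separately. The forward direction $(\Rightarrow)$ is essentially a direct instantiation of the Nash equilibrium definition at the subgame $G^u$, while the reverse direction $(\Leftarrow)$ requires an induction that promotes this local one-move optimality into the global Nash property at every subgame. The key bookkeeping fact I will use throughout is that if $i = \turn(w)$, then the play $\play(s^w)$ begins with $w$ followed by the play $\play(s^{s_i(w)})$ in the subgame rooted at $s_i(w)$; hence $o_i(\leaf(s^w)) = o_i(\leaf(s^{s_i(w)}))$.

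For $(\Rightarrow)$, fix a non-leaf node $u$, let $i := \turn(u)$, and fix any $x \in C(u)$. By the SPE hypothesis, $s^u$ is a Nash equilibrium of $G^u$. Construct a strategy $t_i$ for player $i$ in $G^u$ from $s^u_i$ by redefining only the value at the root, setting $t_i(u) := x$ and leaving $t_i$ equal to $s^u_i$ at every strict descendant of $u$ where $i$ moves. Then the resulting play in $G^u$ starts at $u$, moves to $x$, and thereafter follows $s$, so $o_i(\leaf((t_i, s^u_{-i}))) = o_i(\leaf(s^x))$. The best-response condition for $s^u_i$ gives $o_i(\leaf(s^u)) \geq o_i(\leaf(s^x))$, and by the decomposition noted above $o_i(\leaf(s^u)) = o_i(\leaf(s^{s_i(u)}))$. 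Since $x \in C(u)$ was arbitrary, this is the desired $\mathrm{argmax}$ condition.

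For $(\Leftarrow)$, I would prove by induction on the height of $T^w$ that $s^w$ is a Nash equilibrium of $G^w$. If $w$ is a leaf the claim is vacuous. For the inductive step, pick any player $j$ and any strategy $t_j$ of player $j$ in $G^w$; I must show $o_j(\leaf(s^w)) \geq o_j(\leaf((t_j, s^w_{-j})))$. Let $i := \turn(w)$ and let $x$ be the child of $w$ reached first under $(t_j, s^w_{-j})$, i.e.\ $x = t_j(w)$ if $j = i$ and $x = s_i(w)$ otherwise. The restrictions of $t_j$ and of the $s^w_k$ ($k \neq j$) to $T^x$ form a joint strategy $(t'_j, s^x_{-j})$ in $G^x$, and $o_j(\leaf((t_j, s^w_{-j}))) = o_j(\leaf((t'_j, s^x_{-j})))$. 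The inductive hypothesis says $s^x$ is a Nash equilibrium of $G^x$, so $o_j(\leaf((t'_j, s^x_{-j}))) \leq o_j(\leaf(s^x))$. When $j \neq i$ we have $x = s_i(w)$, so $o_j(\leaf(s^x)) = o_j(\leaf(s^w))$ by the decomposition and we are done. When $j = i$, the one-deviation hypothesis at $w$ yields $o_i(\leaf(s^x)) \leq o_i(\leaf(s^{s_i(w)})) = o_i(\leaf(s^w))$, again closing the chain.

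The step I expect to be most delicate is the inductive case when $j = i$, because here two separate optimality arguments must be composed: the inductive hypothesis bounds the subplay below $x$ by $o_i(\leaf(s^x))$, and only then does the one-deviation hypothesis at $w$ compare the different choices of $x \in C(w)$. The whole point of the lemma is that these two local comparisons suffice to rule out a potentially complicated multi-node deviation $t_i$, and the argument essentially says: any such deviation can be split into the first move (controlled by the one-deviation condition) and the remainder (controlled by induction).
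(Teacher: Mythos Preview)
Your proof is correct. Note, however, that the paper does not actually prove this lemma: it only states it and refers the reader to \cite[pages 98--99]{OR94} and the appendix of \cite{AS21}. Your argument---forward direction by a single-node deviation at the root of $G^u$, reverse direction by induction on the height of $T^w$ splitting an arbitrary deviation into its first move (handled by the one-deviation hypothesis) and the remainder (handled by the inductive hypothesis)---is exactly the standard proof given in those references, so there is nothing to compare.
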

In words, this condition states that for all non-leaf nodes $u$ in $T$
and $i=\turn(u)$, $s_i(u)$ selects a child $x$ of $u$ for which
$o_i(\leaf(s^x))$ is maximal.  

For a proof see, e.g, \cite[pages 98-99]{OR94} or a more detailed
presentation in the appendix of \cite{AS21}.

\begin{corollary}[\cite{AS21}, Corollary 7]
  \label{lm:SPE2}
Let $G$ be a finite extensive game over the game tree $T$ with the
root $v$. A joint strategy $s$ is a subgame perfect equilibrium in $G$
iff for all $u \in C(v)$

  \begin{itemize}
\item $s_i(v) \in \mathrm{argmax}_{x \in C(v)} o_i(\leaf(s^x))$, where $i=\turn(v)$, 

  \item $s^{u}$ is a subgame perfect equilibrium in the subgame $G^{u}$.
  \end{itemize}
\end{corollary}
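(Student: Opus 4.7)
The plan is to derive both directions from the one deviation property in Lemma \ref{lm:SPE}, using the observation that the condition at any node $w \neq v$ is determined entirely by the strategy profile inside the subtree $T^w$, which sits inside some $T^u$ for $u \in C(v)$.

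For the forward direction, I would assume $s$ is a subgame perfect equilibrium in $G$. Condition (1) is immediate: applying Lemma \ref{lm:SPE} to the root $v$ itself yields precisely $s_i(v) \in \mathrm{argmax}_{x \in C(v)} o_i(\leaf(s^x))$ with $i = \turn(v)$. For condition (2), fix any $u \in C(v)$ and any non-leaf node $w$ of $T^u$. Since every subgame of $G^u$ is also a subgame of $G$, the joint strategy $(s^u)^w = s^w$ is a Nash equilibrium in $(G^u)^w = G^w$, so by definition $s^u$ is a subgame perfect equilibrium in $G^u$.

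For the backward direction, I would assume both conditions and verify the hypothesis of Lemma \ref{lm:SPE} at every non-leaf node $u'$ of $T$. If $u' = v$ this is precisely condition (1). Otherwise $u'$ lies in some proper subtree $T^u$ with $u \in C(v)$, and by condition (2) together with the (iff part of the) one deviation property applied inside $G^u$, we have $(s^u)_j(u') \in \mathrm{argmax}_{x \in C(u')} o_j(\leaf((s^u)^x))$, where $j = \turn(u')$.

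The only subtle point — and the main thing to verify carefully — is that this condition transfers back to $G$, namely that $(s^u)_j(u') = s_j(u')$ and $\leaf((s^u)^x) = \leaf(s^x)$ for each $x \in C(u')$. Both equalities hold because the play starting at $x$ in $G$ stays within $T^x \subseteq T^u$, and $s^u$ is by definition the restriction of $s$ to $T^u$, so the children selected along the way and the terminal leaf are identical in the two games. Once this is noted, Lemma \ref{lm:SPE} applied to $G$ gives that $s$ is a subgame perfect equilibrium, completing the proof.
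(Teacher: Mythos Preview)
Your proof is correct and follows exactly the approach the paper intends: the result is stated as a corollary of Lemma~\ref{lm:SPE} (the one deviation property) with no separate proof given, and you derive both directions from that lemma in the natural way. Your forward argument for condition~(2) even bypasses Lemma~\ref{lm:SPE} by appealing directly to the definition of subgame perfect equilibrium, which is slightly cleaner; the verification that $(s^u)^x = s^x$ and $(s^u)_j(u') = s_j(u')$ is the only bookkeeping needed, and you handle it correctly.
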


Intuitively, the first condition states that among the subgames rooted
at the children of the root $v$, the one determined by the
first move in the game $G$ yields the best outcome for the player who
moved.
\II

%% \begin{proof}
%% If $C(v)=\emptyset$, the claim is vacuously true. Otherwise consider
%% any $u \in C(v)$.  By Lemma \ref{lm:SPE} $s^u$ is a subgame perfect
%% equilibrium in $G^u$ iff for all non-leaf nodes $y$ in $T^u$ and $z
%% \in C(y)$, $ o_i(\leaf((s^{u})^{x}) \geq o_i(\leaf((s^{u})^{z})$, 
%% where $i=\turn(y)$ and $s^u_i(y)=x$.

%% Since $(s^{u})^{x}=s^{x}$ and $(s^{u})^{z}=s^{z}$, the last statement is
%% equivalent to the statement that the inequality in Lemma \ref{lm:SPE}
%% holds for all non-leaf nodes $y$ in $T^u$ and $z \in C(y)$.
%% %
%% The conclusion now follows by Lemma \ref{lm:SPE}.
%% \end{proof}
%% \II

\NI \emph{Proof of Theorem \ref{thm:extSPE}.}
%\begin{proof}
  %% We prove a stronger claim, namely that all these executions in
  %% addition to a subgame perfect equilibrium $s$ also generate the
  %% extensions of the functions $o_1, \LL, o_n$ to all nodes of $T$
  %% such that for the root $v_0$ of $T$ and all $j \in \{1, \LL, n\}$
  %% we have $o_j(v_0) = o_j(leaf(s))$.
%%  
  %% For a finite tree $T$ let $\tdepth(T)$ denote the number of edges
  %% on the longest path from the root to a leaf in the tree $T$.
  %% Below $T$ is the game tree of the considered extensive game $G$.
%% We argue by induction on $\tdepth(T)$.
%%
The proof proceeds by induction on $\mathit{height}(T)$, defined as
the number of edges in the longest path in the tree $T$.  The base
case when $\mathit{height}(T)=1$ is straightforward. Suppose
$\mathit{height}(T) > 1$.  Let $C(v) = \{w_1,\ldots w_k\}$.

\NI $(\Ra)$ Consider a joint strategy $s$ in $G$ together with some
extensions of the functions $o_1, \LL, o_n$ to the nodes of the game
tree of $G$ that are generated by an execution of Algorithm \ref{alg:BI}.

Fix an arbitrary $l \in \{1,\ldots,k\}$. 
Delete in this execution the \textbf{while} loop iterations that do
not involve the nodes of the game tree $T^{w_l}$ of the subgame $G^{w_l}$ and 
use at the beginning the game tree $T^{w_l}$ instead of
$T$. This way we obtain an execution of Algorithm \ref{alg:BI}
applied to the game $G^{w_l}$ that generates a joint strategy
$s^{w_l}$ in $G^{w_l}$ together with some extensions of
the functions $o_1, \LL, o_n$ to the nodes of the game tree $T^{w_l}$.
By the induction hypothesis $s^{w_l}$ is a subgame perfect equilibrium
in $G^{w_l}$ and $o(w_l) = o(\leaf(s^{w_l}))$.

Consider now the last iteration of the \textbf{while} loop in the
original execution of Algorithm \ref{alg:BI}. At this stage $V =
\{v_0, w_1, \LL, w_k\}$, so before line {\footnotesize{\textbf{2}}} we
have $C(v_0,T) = \{w_1, \LL, w_k\}$.  After line
{\footnotesize{\textbf{3}}} we have $i = \turn(v_0)$ and after line
{\footnotesize{\textbf{4}}} $w$ is such that $w \in \{w_1, \LL,
w_k\}$ and $o_i(w) \geq o_i(w_l)$ for all $l \in \{1, \LL, k\}$.

By the previous conclusion for all $l \in \{1, \LL, k\}$ we have
$o_i(w_l) = o_i(\leaf(s^{w_l}))$, so for all $l \in \{1, \LL, k\}$ we
have $o_i(\leaf(s^{w})) \geq o_i(\leaf(s^{w_l}))$.  After line
{\footnotesize{\textbf{5}}} we have $s_i(v_0) = w$, so by
Lemma~\ref{lm:SPE} $s$ is a subgame perfect equilibrium. Finally,
after line {\footnotesize{\textbf{6}}} we have $o(v_0) = o(\leaf(s^w)) =
o(\leaf(s))$.  \II

\NI $(\La)$ Suppose that $s$ is a subgame perfect equilibrium in $G$.
We show that it can be generated by Algorithm \ref{alg:BI} together
with the extensions of the functions $o_1, \LL, o_n$ to all nodes of
$T$ such that $o(v_0) = o(\leaf(s))$.

Fix an arbitrary $l \in \{1,\ldots,k\}$.  By the assumption on $s$,
the joint strategy $s^{w_l}$ is a subgame perfect equilibrium in the
subgame $G^{w_l}$, so by the induction hypothesis some execution of
Algorithm \ref{alg:BI} applied to the subgame $G^{w_l}$ generates
$s^{w_l}$ together with the extensions of the functions
$o_1, \LL, o_n$ to the nodes of the game tree of $G^{w_l}$ such that
$o(w_l) = o(\leaf(s^{w_l}))$.

Using these $k$ executions of Algorithm \ref{alg:BI} applied to the
subgames $G^{w_1}, \LL, G^{w_k}$ we construct the desired execution of
Algorithm \ref{alg:BI} applied to the game $G$ as follows.  First
we 'glue' these $k$ executions into one but using at the beginning of
the execution the game tree $T$ instead of the game tree of $G^{w_1}$
and using at the beginning of each subsequent execution the current
tree $T$ instead of the game tree of the considered subgame.

After these $k$ executions glued together
$V = \{v_0, w_1, \LL, w_k\}$, so before line
{\footnotesize{\textbf{2}}} we have $C(v,T) = \{w_1, \LL, w_k\}$, in
line {\footnotesize{\textbf{2}}}, $v_0$ is selected and after line
{\footnotesize{\textbf{3}}} we have $i = \turn(v_0)$.

By the induction hypothesis for all $l \in \{1,\ldots,k\}$ we have
$o(w_l) = o(\leaf(s^{w_l}))$, so by Lemma~\ref{lm:SPE} $w = s_i(v_0)$
is a node from $\{w_1, \LL, w_k\}$ such that $o_i(w)$ is maximal. So
in line {\footnotesize{\textbf{4}}} we can select this node $w$, which ensures that the
assignment in line {\footnotesize{\textbf{5}}} is consistent with the original joint strategy
$s$.  Further, the assignment in line {\footnotesize{\textbf{6}}} ensures that
$o(v_0) = o(w) = o(\leaf(s^{w})) = o(\leaf(s))$.
\HB

\begin{corollary}[\cite{Kuhn50}]
  \label{cor:spe}
Every finite extensive game (with perfect information) has a subgame
perfect equilibrium (and hence a Nash equilibrium).
\end{corollary}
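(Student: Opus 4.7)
The plan is to derive this corollary directly from Theorem \ref{thm:extSPE}. That theorem establishes a bijection (modulo choices) between executions of the backward induction algorithm and subgame perfect equilibria of the game, so it suffices to exhibit at least one well-defined execution of Algorithm \ref{alg:BI} on an arbitrary finite extensive game $G$.

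First I would handle the trivial base case in which the game tree $T$ consists of a single node: then the unique joint strategy is $(\emptyset, \LL, \emptyset)$, and it is vacuously a subgame perfect equilibrium because the only subgame is $G$ itself and there are no deviations available. Otherwise, I would argue that Algorithm \ref{alg:BI} can always complete. At each iteration the current tree is finite and has more than one node, so it has a preleaf $v$ (this exists because any finite rooted tree with at least two nodes contains a node all of whose children are leaves). The set $C(v,T)$ is non-empty and finite, so the maximum of $o_i$ over it is attained and a valid $w$ can be selected in line \textbf{4}. The loop strictly decreases $|V|$ by at least one (since $|C(v,T)| \geq 1$), so the algorithm terminates after finitely many iterations with $|V|=1$, and it has produced a joint strategy $s$ together with an extension of each $o_j$ to the whole tree.

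By the $(\Rightarrow)$ direction of Theorem \ref{thm:extSPE}, this $s$ is a subgame perfect equilibrium in $G$. The parenthetical claim follows immediately: applying the definition of subgame perfect equilibrium to the root $v_0$ of $T$ gives that $s = s^{v_0}$ is a Nash equilibrium in $G^{v_0} = G$.

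The only place where there is anything to check is the existence of a preleaf at each stage and the attainment of the maximum in line \textbf{4}; both are routine finiteness arguments, and I do not expect any real obstacle. The conceptual content of the corollary has already been absorbed into Theorem \ref{thm:extSPE}, and the role of this proof is simply to verify that Algorithm \ref{alg:BI} is non-vacuously executable.
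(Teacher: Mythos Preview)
Your proposal is correct and takes essentially the same approach as the paper: the paper states Corollary~\ref{cor:spe} without a separate proof, treating it as an immediate consequence of Theorem~\ref{thm:extSPE} together with the earlier remark that Algorithm~\ref{alg:BI} always terminates. Your argument simply makes explicit the routine finiteness checks (existence of a preleaf, attainment of the maximum, strict decrease of $|V|$) that the paper compresses into that one-line termination remark.
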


We presented backward induction as a nondeterministic algorithm, but
one can go even further. Exploiting the fact that the children of
each node can be dealt with independently, we can present it as an
algorithm that uses nested parallelism. In such an algorithm there is
no need to modify the game tree. Given a non-leaf node $v$ we define
a nondeterministic program $\mathit{Seq}(v)$ by
\II

  $i := \turn(v)$;

  \textbf{choose} $w \in C(v)$ such that $o_i(w)$ is maximal;

  $s_i(v) := w$;

  $o(v) := o(w)$

\II
Then for a preleaf node $v$ we define
$\mathit{Comp}(v)$ as $\mathit{Seq}(v)$
and for each node $v$ that is neither a preleaf or a leaf we define $\mathit{Comp}(v)$
by
\[
[ \|_{w \in C(v)} \mathit{Comp}(w)]; \mathit{Seq}(w)
\]
where $[ \|_{w \in C(v)} \mathit{Comp}(w) ]$ stands for a parallel
composition of the programs $\mathit{Comp}(w)$ for $w \in C(v)$.  So each such
node $v$ is processed with only after its children have been processed
and these children are processed in an arbitrary order.

Then $\mathit{Comp}(v_0)$ is the desired parallel version of the
backward induction algorithm.

\subsection{Special classes of extensive games}

It is natural to study conditions under which an extensive game has a unique
subgame perfect equilibrium.  The following property was introduced in
\cite{Bat97}. We say that an extensive game is \bfe{without relevant
  ties} if for all non-leaf nodes $u$ in $T$ the function $o_i$, where
$\turn(u)=i$, is injective on the leaves of $T^u$. This is more
general than saying that a game is \bfe{generic}, which means that
each $o_i$ is an injective function.

\begin{corollary} \label{cor:generic}
Every finite extensive game without relevant ties has a unique
subgame perfect equilibrium.
\end{corollary}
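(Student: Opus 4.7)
The plan is to obtain this as a direct consequence of Theorem \ref{thm:extSPE}. Since every subgame perfect equilibrium arises from some execution of the backward induction algorithm, and conversely, it suffices to show that under the absence of relevant ties all executions of Algorithm \ref{alg:BI} produce the same joint strategy $s$. I would proceed by induction on the height of the game tree $T$.

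In the inductive step, consider the root $v_0$ with children $w_1, \LL, w_k$. Each subgame $G^{w_l}$ inherits the absence of relevant ties, because its non-leaf nodes are non-leaf nodes of $G$ and the leaves of each of its subtrees coincide with the leaves of the corresponding subtree of $T$. The inductive hypothesis then yields a uniquely determined joint strategy $s^{w_l}$ together with a uniquely determined extended value $o(w_l) = o(\leaf(s^{w_l}))$ at each child $w_l$. When the algorithm eventually processes $v_0$ with $i := \turn(v_0)$, it selects $w \in \{w_1, \LL, w_k\}$ maximizing $o_i(w) = o_i(\leaf(s^{w}))$. The leaves $\leaf(s^{w_1}), \LL, \leaf(s^{w_k})$ lie in pairwise disjoint subtrees of $T^{v_0}=T$, hence are distinct leaves of $T^{v_0}$; since $o_i$ is injective on the leaves of $T^{v_0}$, the argmax is a singleton. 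So $s_i(v_0)$ and the resulting extension $o(v_0)$ are forced, and the full joint strategy is uniquely determined.

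A minor subtlety, which I would address briefly, is that Algorithm \ref{alg:BI} also has nondeterminism in the choice of which preleaf to process at each iteration. This does not affect the output, because the assignments made by any single iteration depend only on the subtree rooted at the chosen preleaf; different legal orderings merely interleave independent subcomputations and hence yield the same final joint strategy. Equivalently, one can invoke the parallel version $\mathit{Comp}(v_0)$ described after Theorem \ref{thm:extSPE}, in which the order of sibling computations is manifestly irrelevant.

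The main obstacle is purely conceptual: one must unpack what ``without relevant ties'' guarantees at an arbitrary internal node $u$, namely that the values $o_{\turn(u)}(\leaf(s^{x}))$ attached to the various children $x \in C(u)$ are pairwise distinct, and combine this with the fact supplied by Theorem \ref{thm:extSPE} that these are exactly the quantities the algorithm compares. Once that link is made, the induction is routine.
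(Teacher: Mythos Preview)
Your proposal is correct and follows essentially the same inductive argument as the paper: the no-relevant-ties condition is inherited by subgames, the inductive hypothesis gives unique equilibria in the subgames $G^{w_l}$, and injectivity of $o_i$ on the leaves of $T^{v_0}$ forces the argmax at the root to be a singleton. The only difference is that the paper invokes Corollary~\ref{lm:SPE2} (the recursive characterization of subgame perfect equilibria) rather than Theorem~\ref{thm:extSPE}, which makes the argument marginally cleaner since it sidesteps any discussion of the algorithm's preleaf-ordering nondeterminism.
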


In particular, every finite generic extensive game has a unique
subgame perfect equilibrium.

\begin{proof}
If a game is without relevant ties, then so is every subgame of
it. This allows us to proceed by induction on the \textit{height} of
the game tree.  Let $G$ be a finite extensive game without relevant
ties over a game tree $T$. If $\mathit{height}(T) = 0$ the claim clearly
holds.  Suppose that $\mathit{height}(T) > 0$.
Let $v$ be the root of $T$ and  $i = \turn(v)$.

By the induction hypothesis for each $w \in C(v)$ there is exactly one
subgame perfect equilibrium $t^w$ in $G^w$. Let
$t=\times_{w \in C(v)} t^w$. Then for different $w, w' \in C(v)$,
$\leaf(t^w)$ and $\leaf(t^{w'})$ are different leaves of the game tree
of $G$. Since $G$ is without relevant ties,
$o_i(\leaf(t^w)) \neq o_i(\leaf(t^{w'}))$.

This means that the function $g: C(v) \to \mathbb{R}$ defined by $g(w)
:= o_i(\leaf(t^w))$ is injective. Consequently the set
$\mathrm{argmax}_{w \in C(v)} o_i(\leaf(t^w))$ has a unique element
and hence by Corollary \ref{lm:SPE2}, $G$ has exactly one subgame
perfect equilibrium.
\end{proof}

Note that the centipede game from Example \ref{exa:centipede} is
generic, so by Corollary \ref{cor:generic} it has exactly one subgame
perfect equilibrium. To determine it we can use the observation there
established, namely that in every Nash equilibrium both players select
$S$ at the nodes $a$ and $b$, respectively. Indeed, by the structure
of the game this observation also holds for every subgame. It follows
that in the unique subgame perfect equilibrium both players select $S$
at all non-leaf nodes.  This counterintuitive form of the subgame
perfect equilibrium in this game is sometimes used to question the
adequacy of this solution concept.

It is also natural to study conditions under which the subgame perfect
equilibria are payoff equivalent. The following theorem is implicit in \cite{MS97}.
The TDI condition was introduced in Section \ref{sec:prelim} when discussing strategic games.

\begin{theorem}
  \label{thm:TDI}
  In every finite extensive game that satisfies the TDI condition all
  subgame perfect equilibria are payoff equivalent.
\end{theorem}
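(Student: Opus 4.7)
The plan is to proceed by induction on $\mathit{height}(T)$, the height of the game tree, taking two arbitrary subgame perfect equilibria $s$ and $s'$ of $G$ and showing $o(\leaf(s)) = o(\leaf(s'))$. The base case (height $0$) is immediate, as both joint strategies are empty and the outcome is just $o(v_0)$. For the inductive step, let $v_0$ be the root, $i := \turn(v_0)$, and set $u := s_i(v_0)$, $u' := s'_i(v_0)$. If $u = u'$, then the outcomes of $s$ and $s'$ are the outcomes of $s^u$ and $s'^u$ in $G^u$, both of which are subgame perfect equilibria of $G^u$ (which itself satisfies TDI), and the induction hypothesis applied to $G^u$ finishes the case.

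The interesting case is $u \neq u'$. First I would observe that $s^u, s'^u$ are subgame perfect equilibria in $G^u$ and $s^{u'}, s'^{u'}$ are subgame perfect equilibria in $G^{u'}$, so by the induction hypothesis
\[
o(\leaf(s^u)) = o(\leaf(s'^u)) \qquad \text{and} \qquad o(\leaf(s^{u'})) = o(\leaf(s'^{u'})).
\]
Applying Lemma \ref{lm:SPE} to $s$ (with $x=u$) and to $s'$ (with $x=u'$), both $o_i(\leaf(s^u))$ and $o_i(\leaf(s'^{u'}))$ equal $\max_{x \in C(v_0)} o_i(\leaf(s^x))$ and $\max_{x \in C(v_0)} o_i(\leaf(s'^x))$ respectively; combined with the above payoff equivalences in the subgames, these two maxima coincide. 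Thus $o_i(\leaf(s^u)) = o_i(\leaf(s^{u'}))$, meaning player $i$ is indifferent between $u$ and $u'$ when the rest of the game is played according to $s$.

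Now comes the decisive step, where TDI is invoked in the strategic form $\Gamma(G)$. Let $t_i$ be the strategy of player $i$ obtained from $s_i$ by replacing $s_i(v_0) = u$ with $t_i(v_0) = u'$ while keeping $s_i$ unchanged on every other decision node. Then the play of $(t_i, s_{-i})$ enters the subtree $T^{u'}$ at the root and from there follows exactly $s^{u'}$, so $p_i(t_i, s_{-i}) = o_i(\leaf(s^{u'}))$, while $p_i(s_i, s_{-i}) = o_i(\leaf(s^u))$. By the previous paragraph these values are equal, so TDI yields $p(s_i, s_{-i}) = p(t_i, s_{-i})$, that is, $o(\leaf(s^u)) = o(\leaf(s^{u'}))$ in every component. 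Chaining the equalities gives
\[
o(\leaf(s)) = o(\leaf(s^u)) = o(\leaf(s^{u'})) = o(\leaf(s'^{u'})) = o(\leaf(s')),
\]
completing the induction.

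The main obstacle is the step where TDI is applied: one has to be careful to exhibit genuinely valid strategies $s_i$ and $t_i$ of player $i$ that yield the same payoff to player $i$ against some common profile $s_{-i}$, so the hybrid construction of $t_i$ (differing from $s_i$ only at the root) and the verification that $(t_i, s_{-i})$ really traces out the subgame equilibrium $s^{u'}$ in $T^{u'}$ is where the argument does its work. Everything else is routine bookkeeping with Lemma \ref{lm:SPE} and the observation that TDI is inherited by every subgame of $G$.
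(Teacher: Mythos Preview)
Your argument is correct and takes a genuinely different route from the paper's.

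The paper proceeds by induction on the number of nodes, working \emph{bottom-up}: it fixes a preleaf $v$, distinguishes the case where both plays end at children of $v$ (where TDI is invoked directly) from the case where at least one does not, and in the latter case prunes the children of $v$ to obtain a smaller game that still satisfies TDI, then applies the induction hypothesis there. Your proof instead works \emph{top-down}: you induct on the height, split at the root, apply the induction hypothesis to every subgame $G^x$ for $x \in C(v_0)$, use Lemma~\ref{lm:SPE} to see that both maxima $\max_{x} o_i(\leaf(s^x))$ and $\max_{x} o_i(\leaf(s'^x))$ agree, and then invoke TDI once in $\Gamma(G)$ via the hybrid strategy $t_i$. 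Your decomposition is arguably cleaner and more transparent, since it avoids the somewhat ad hoc case split on where the plays terminate; on the other hand, the paper's approach does not need to verify that TDI is inherited by subgames (it only needs that the pruned game still satisfies TDI, which is immediate since its outcomes are a subset of those of $G$).

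One small expository point: when you argue that ``these two maxima coincide'', you implicitly use the induction hypothesis for \emph{every} child $x \in C(v_0)$ (so that $o_i(\leaf(s^x)) = o_i(\leaf(s'^x))$ for all $x$), not just for $u$ and $u'$; it would help to say this explicitly. Also, the claim that TDI is inherited by subgames deserves a one-line justification: given strategies in $G^x$, extend them to strategies in $G$ that follow the path from $v_0$ to $x$, so that the resulting payoffs in $\Gamma(G)$ coincide with those in $\Gamma(G^x)$, and then apply TDI in $\Gamma(G)$.
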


\begin{proof}
  Consider a finite extensive game $G: = (T, \turn, o_1, \LL, o_n)$
  that satisfies the TDI condition.  We proceed by induction on the
  number of nodes in the game tree.  The claim holds when the game
  tree has just one node, since there is then only one subgame perfect
  equilibrium.  Suppose the game tree has more than one node.

  Consider two subgame perfect equilibria $s$ and $t$ in $G$.  Take a
  preleaf $v$ in $T$.  Suppose $s_i(v) = w_1$ and $t_i(v) = w_2$,
  where $i = \turn(v)$. By Corollary \ref{lm:SPE2}
  $w_1, w_2 \in \mathrm{argmax}_{x \in C(v)} o_i(x)$.
\II

\NI
\emph{Case 1}. $\leaf(s) = w_1$ and $\leaf(t) = w_2$.

Take a strategy $s'_i$ that differs from $s_i$ only for the node $v$
to which it assigns $w_2$.  Then $\leaf(s'_i, s_{-i}) = w_2$, so
$o_i(\leaf(s)) = o_i(w_1) = o_i(w_2) = o_i(\leaf(s'_i, s_{-i}))$.
Hence by the TDI property $o(\leaf(s)) = o(\leaf(s'_i, s_{-i}))$, so
$o(\leaf(s)) = o(\leaf(t))$.
\II

\NI
\emph{Case 2}. $\leaf(s) \neq w_1$ or $\leaf(t) \neq w_2$.

Without loss of generality suppose that $\leaf(t) \neq w_2$.
Consider the game $G': = (T', \turn, o_1, \LL, o_n)$ obtained from $G$
by setting $o(v)$ to $o(w_1)$ and by removing all the children of $v$.
So in $G$ the node $v$ is a preleaf, while in $G'$ it is a leaf with
the outcome $o(w_1)$.  $G'$ also satisfies the TDI condition since all
its outcomes are also outcomes of $G$.

Let $s'$ and $t'$ be joint strategies in $G'$ obtained from $s$
and $t$ by dropping $v$ from the domains of $s_i$ and $t_i$.  Then
both $s'$ and $t'$ are subgame perfect equilibria in $G'$.
(We leave the proof of this fact to the reader.)

We have $o(\leaf(s)) = o(\leaf(s'))$ and by assumption the node $v$ does
not lie on the path $\play(t)$, so $\leaf(t) = \leaf(t')$.  Hence
$o(\leaf(s)) = o(\leaf(t))$ by the induction hypothesis.
\end{proof}

\section{Backward induction and common knowledge of rationality}
\label{sec:ckr}

Recall that player's rationality in an extensive game means that his
objective is to maximize his outcome in the game.  Backward induction
is a natural procedure and it is natural to inquire whether it can be
justified by appealing to players' rationality.

In this section we discuss Aumann's result \cite{Aum91} that for a
natural class of extensive games common knowledge of players'
rationality implies that the game reaches the backward induction
outcome.

To formulate this result we introduce first Aumann's approach to
modeling knowledge in the context of extensive games.  Fix a finite
extensive game with no relevant ties $G: = (T, \turn, o_1, \LL, o_n)$
with $T = (V,E,v_0)$. Let $S_1, \LL, S_n$ be the respective sets of
strategies of players $1, \LL, n$.

A \bfe{knowledge system} for $G$ consists of

\begin{itemize}
\item a non-empty set $\Omega$ of \bfe{states},

\item a function $\mathbf{s}: \Omega \to S_1 \times \cdots \times S_n$,

\item for each player $i$ a partition $P_i$ of $\Omega$.  
\end{itemize}

One possible interpretation of a state is that it represents a
`situation', in which complete information about the players'
strategies is available. This information is provided by means of the
function $\mathbf{s}$.

Given a knowledge system, player $i$ does not know the actual state
$\omega$ but he knows the element of the partition $P_i$ to which
$\omega$ belongs.  This interpretation suggests the following
assumption.

Define for player $i$ the function $\mathbf{s}_i: \Omega \to S_i$ by
\[
\mbox{$\mathbf{s}_i(\omega) := s_i$, where $s_i$ is the $i$th component of $\mathbf{s}(\omega)$.}
\]
Then we assume that for each player $i$ the function $\mathbf{s}_i$ is
constant on each element of the partition $P_i$.
% $\mathbf{s}_i(\omega) = \mathbf{s}_i(\omega')$ whenever $\omega$ and
% $\omega'$ belong to the same element of the partition $P_i$.
Intuitively, it means that in the assumed knowledge system each player
knows his strategy.

We first introduce concepts that do not rely on the function
$\mathbf{s}$.  By an \bfe{event} we mean a subset of $\Omega$. For an
event $E$ and player $i$ we define the event $K_i E$ by
\[
  K_i E := \bigcup \: \{L \in P_i \mid L \sse E\}.
\]
Intuitively, $K_i E$ is the event that player $i$ \emph{knows} $E$.

Next, we define
\[
  K E := K^1 E := \bigcap_{i=1}^{n} K_i E,
\]
inductively for $k \geq 1$
\[
  K^{k+1} E := K^k E,
\]
and finally
\[
  C K E := \bigcap_{k=1}^{\infty} K^k E.
\]
Intuitively, $K E$ is the event that all players \emph{know} the event
$E$ and $C K E$ is the event that there is \emph{common knowledge}
of the event $E$ among all players.

Using the function $\mathbf{s}$ we now formalize the notion that player
$i$ is rational.  To start with, given a node $v$ at which player $i$
moves, his strategy $t_i$, and the function $\mathbf{s}_{-i}: \Omega \to S_{-i}$ defined
in the expected way, we denote by
\[
[o_i(\leaf((\mathbf{s}_{-i}, t_i)^v)) > o_i(\leaf(\mathbf{s}^v))]
\]
the event
\[
\{\omega \in \Omega \mid o_i(\leaf((\mathbf{s}_{-i}(\omega), t_i)^v)) > o_i(\leaf(\mathbf{s}(\omega)^v)\}.
\]
It states that in the subgame $G^v$ the outcome for player $i$ is higher if he selects strategy $t_i$ instead
of the strategy he chooses according to $\mathbf{s}$.

Similarly, for a joint strategy $t$ we denote by
\[
  [\mathbf{s} = t]
\]
the event
\[
\{\omega \in \Omega \mid \mathbf{s}(\omega) = t\}.
\]

Recall now that for a player $i$ we denoted by $V_i$ the set of nodes at which
he moves. We define
\[
  R_i := \bigcap_{v \in V_i} \bigcap_{t_i \in S_i} \: \neg K_i
  [o_i(\leaf((\mathbf{s}_{-i}, t_i)^v)) > o_i(\leaf(\mathbf{s}^v))].
\]
where $\neg$ denotes complementation w.r.t.~$\Omega$.
Intuitively, this event states that for all nodes $v$ at which player
$i$ moves and all his strategies $t_i$, player $i$ does not know
whether $t_i$ would yield a higher outcome than the strategy he
chooses according to $\mathbf{s}$.
So $R_i$ is the event formalizing that player $i$ is \emph{rational}. 

Finally, we define
\[
  R := \bigcap_{i=1}^{n} R_i.
\]
Intuitively, $R$ is the event that each player rational.

We still need to formalize the event that the outcome of the game is
prescribed by the backward induction. To this end Aumann assumes that
the game is generic, so that the game has a unique subgame perfect
equilibrium, but thanks to Corollary \ref{cor:generic} it suffices to
assume that the game is without relevant ties.  Then the backward
induction has a unique outcome which is the subgame perfect
equilibrium of the game. Denote the latter by $s^*$.  The
intended event $I$ is then defined by
\[
  I := [\mathbf{s} = s^* ].
\]

We can now state the main result of \cite{Aum91}.

\begin{theorem} \label{thm:ckr}
  Consider a finite extensive game $G$ without relevant ties. Then
  \[
    C K R \sse I.
  \]
\end{theorem}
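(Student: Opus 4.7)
The plan is to prove, by induction on the height of the subtree $T^v$, that for every non-leaf $v$ with $i = \turn(v)$,
\[
  CKR \sse [\mathbf{s}_i(v) = s^*_i(v)].
\]
Since $\mathbf{s} = s^*$ holds precisely when $\mathbf{s}_{\turn(v)}(v) = s^*_{\turn(v)}(v)$ for every non-leaf $v$, intersecting these containments gives $CKR \sse I$. Two ingredients will be used repeatedly: by Lemma~\ref{lm:SPE} and the no-relevant-ties assumption, $s^*_i(v)$ is the \emph{unique} child $w \in C(v)$ maximizing $o_i(\leaf((s^*)^w))$; and common knowledge is self-evident, $CKR \sse K_i CKR$ for every player $i$, which follows from the identity $K_i \bigcap_k A_k = \bigcap_k K_i A_k$ combined with $K^{k+1}R \sse K_i K^k R$.

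For the base case, take a preleaf $v$ with $i = \turn(v)$ and set $w^* := s^*_i(v)$. Fix $\omega \in CKR$ and let $L := P_i(\omega)$; then $\mathbf{s}_i$ is constant on $L$, and $L \sse R_i$. If $\mathbf{s}_i(\omega)(v) \neq w^*$, choose any $t_i$ with $t_i(v) = w^*$. Since $v$ is a preleaf, only the move at $v$ determines the outcome in $G^v$, so for every $\omega' \in L$ one has $o_i(\leaf((\mathbf{s}_{-i}(\omega'), t_i)^v)) = o_i(w^*) > o_i(\mathbf{s}_i(\omega)(v)) = o_i(\leaf(\mathbf{s}(\omega')^v))$, the strict inequality coming from the unique-maximizer property. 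Hence $L$ is contained in $[o_i(\leaf((\mathbf{s}_{-i}, t_i)^v)) > o_i(\leaf(\mathbf{s}^v))]$, so $\omega \in K_i[\cdots]$, contradicting $\omega \in R_i$.

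For the inductive step on a node $v$ of height $\geq 2$, I would invoke $CKR \sse K_i CKR$ to conclude that for $\omega \in CKR$ the cell $P_i(\omega)$ lies inside $CKR$. By the induction hypothesis applied to every non-leaf node of $T^v$ strictly below $v$, throughout $P_i(\omega)$ the moves of all players at those nodes coincide with $s^*$. Consequently, for every child $w$ of $v$ and every $\omega' \in P_i(\omega)$, $\leaf(\mathbf{s}(\omega')^w) = \leaf((s^*)^w)$; the same holds for the comparison strategy $t_i$ chosen to satisfy $t_i(v) = s^*_i(v)$ and to agree with $s^*_i$ at all player-$i$ nodes in $T^{s^*_i(v)}$. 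If $\mathbf{s}_i(\omega)(v) \neq s^*_i(v)$, the unique-maximizer property yields the strict inequality $o_i(\leaf((\mathbf{s}_{-i}(\omega'), t_i)^v)) > o_i(\leaf(\mathbf{s}(\omega')^v))$ uniformly on $P_i(\omega)$, again contradicting $\omega \in R_i$.

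The main obstacle is the inductive step: since $R_i$ only constrains what player $i$ actually \emph{knows}, one must ensure that the strict improvement persists over the whole partition cell $P_i(\omega)$, not merely at $\omega$. Self-evidence of $CKR$ supplies the knowledge that the continuation below $v$ follows $s^*$, while the no-relevant-ties hypothesis supplies the strict payoff separation. Together they force the desired contradiction unless player $i$ plays exactly the backward-induction move at $v$.
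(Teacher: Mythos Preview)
Your proposal is correct and follows essentially the same approach as the paper's proof: induction over the tree, using $CKR \sse K_i CKR$ to propagate the induction hypothesis into player $i$'s knowledge cell, then invoking the definition of $R_i$ with a deviation $t_i$ that mimics $s^*_i$ at and below $v$, and finally exploiting the no-relevant-ties assumption to obtain the strict payoff inequality uniformly on $P_i(\omega)$. The only difference is presentational: where the paper manipulates events algebraically via the operator identities of Lemma~\ref{lem:ck} and Lemma~\ref{lem:sse}, you unfold the definitions and argue directly at the level of states and partition cells.
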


This inclusion formalizes the announced statement that common
knowledge of players' rationality implies that the backward induction
yields the outcome of the game.

The proof of the theorem relies on a number of simple properties of
the operators $K_i$ and $CK$ that we list without proof in the
following lemma.

\begin{lemma} \label{lem:ck}
\mbox{}  
  \begin{enumerate}[(i)]
  \item $CKE = K_i CKE$.

  \item If $E \sse F$, then $K_i E \sse K_i F$.
    
  \item $K_i E \cap K_i F = K_i(E \cap F)$.

  \item $CKE \sse E$.

  \item $K_i \neg K_i E = \neg K_i E$.

  \item $K_i E \sse E$.
    
  \end{enumerate}
\end{lemma}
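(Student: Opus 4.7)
The plan is to exploit the clean partition-based reformulation: for $\omega \in \Omega$ let $P_i(\omega)$ denote the unique element of $P_i$ containing $\omega$; then directly from the definition of $K_i$, one has $\omega \in K_i E$ iff $P_i(\omega) \sse E$. With this equivalence in hand, most items reduce to elementary set-theoretic manipulations of partition blocks.

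I would begin with the easy items, in the order $(vi), (ii), (iii), (iv), (v), (i)$. For $(vi)$, since $K_i E$ is by definition a union of sets contained in $E$, it is contained in $E$. For $(ii)$, if $L \in P_i$ satisfies $L \sse E$ then $L \sse F$, so each set contributing to the union defining $K_i E$ also contributes to $K_i F$. For $(iii)$, the inclusion $\sse$ follows since if $\omega \in K_i E \cap K_i F$ then $P_i(\omega) \sse E$ and $P_i(\omega) \sse F$, hence $P_i(\omega) \sse E \cap F$; the reverse inclusion is immediate from $(ii)$ applied to $E \cap F \sse E$ and $E \cap F \sse F$. For $(iv)$, $CKE \sse K^1 E = \bigcap_i K_i E \sse K_j E \sse E$ using $(vi)$ at the last step.

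Item $(v)$ is the self-referential one. The inclusion $K_i \neg K_i E \sse \neg K_i E$ is $(vi)$. For the other direction, I would argue that if $\omega \in \neg K_i E$ then $P_i(\omega) \not\sse E$; but for every $\omega' \in P_i(\omega)$ we have $P_i(\omega') = P_i(\omega)$, so $P_i(\omega') \not\sse E$ as well, i.e.\ $\omega' \in \neg K_i E$. Hence $P_i(\omega) \sse \neg K_i E$, which is exactly $\omega \in K_i \neg K_i E$.

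Item $(i)$ is the main obstacle, because $CKE$ is defined as an infinite intersection and one must commute it with $K_i$. The key auxiliary fact is that $K_i$ distributes over arbitrary intersections: $\omega \in K_i \bigcap_\alpha E_\alpha$ iff $P_i(\omega) \sse E_\alpha$ for all $\alpha$, iff $\omega \in \bigcap_\alpha K_i E_\alpha$. The inclusion $K_i CKE \sse CKE$ follows from $(vi)$. For the converse, observe that $K^{k+1}E = \bigcap_j K_j K^k E \sse K_i K^k E$, so any $\omega \in CKE$ lies in $K_i K^k E$ for every $k \geq 1$; invoking distributivity,
\[
CKE \sse \bigcap_{k \geq 1} K_i K^k E = K_i \bigcap_{k \geq 1} K^k E = K_i CKE,
\]
which completes the proof.
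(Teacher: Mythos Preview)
Your proof is correct. Note, however, that the paper explicitly states these properties \emph{without proof} (``we list without proof in the following lemma''), so there is no argument in the paper to compare against; your proposal simply fills in the omitted details. The partition-cell reformulation $\omega \in K_i E \iff P_i(\omega) \sse E$ is the natural way to handle all six items, and your treatment of $(i)$ via the infinite-distributivity of $K_i$ over intersections together with $K^{k+1}E \sse K_i K^k E$ is clean and correct. (Incidentally, the paper has a visible typo in the inductive definition of $K^{k+1}E$; your argument uses the intended reading $K^{k+1}E = K\,K^k E$, which is what is needed for the proof of Theorem~\ref{thm:ckr} anyway.)
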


Given a joint strategy $s$ and a node $v$ that is not a leaf, we define
\[
s(v) := s_i(v),
\]
where $i = turn(v)$. So if $s$ is the used joint strategy, then $s(v)$
is the move resulting from it at node $v$.  For each such node $v$ we
define the function $\mathbf{s}(v) : \Omega \to V$ in the expected
way.  

We shall also need the following two observations concerning players' knowledge the proofs of which we omit.

\begin{lemma}  \label{lem:sse}
\mbox{}
  \begin{enumerate}[(i)]
  \item For all $t_i \in S_i$, $[\mathbf{s}_i = t_i] \sse K_i [\mathbf{s}_i = t_i]$.
    
  \item For all $v \in V_i$, $I^v \sse K_i I^v$, where $I^v := [\mathbf{s}(v) = s^*(v)]$.
  \end{enumerate}
\end{lemma}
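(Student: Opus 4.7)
The plan for part (i) is to unfold the definition of $K_i$ and exploit the standing hypothesis that $\mathbf{s}_i$ is constant on every element of $P_i$. Fix $t_i \in S_i$ and take any $\omega \in [\mathbf{s}_i = t_i]$; let $L$ be the unique element of $P_i$ containing $\omega$. By the constancy assumption we have $\mathbf{s}_i(\omega') = \mathbf{s}_i(\omega) = t_i$ for every $\omega' \in L$, hence $L \sse [\mathbf{s}_i = t_i]$. By the definition
\[
K_i [\mathbf{s}_i = t_i] \;=\; \bigcup \{L' \in P_i \mid L' \sse [\mathbf{s}_i = t_i]\},
\]
we conclude $\omega \in K_i[\mathbf{s}_i = t_i]$, which gives the desired inclusion. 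There is no real obstacle here; the step is essentially a bookkeeping argument that makes precise the informal reading ``each player knows his own strategy''.

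The plan for part (ii) is to reduce to part (i) by partitioning $I^v$ according to which strategy of player $i$ is actually played. Since $v \in V_i$, we have $\mathbf{s}(v)(\omega) = \mathbf{s}_i(\omega)(v)$, so
\[
I^v \;=\; \bigcup_{\substack{t_i \in S_i \\ t_i(v) = s^*(v)}} [\mathbf{s}_i = t_i].
\]
For each such $t_i$, part (i) gives $[\mathbf{s}_i = t_i] \sse K_i[\mathbf{s}_i = t_i]$, and since $[\mathbf{s}_i = t_i] \sse I^v$, the monotonicity property (Lemma \ref{lem:ck}(ii)) yields $K_i[\mathbf{s}_i = t_i] \sse K_i I^v$. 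Chaining these two inclusions and taking the union over all qualifying $t_i$ gives $I^v \sse K_i I^v$.

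The only mildly delicate point is making sure that the decomposition of $I^v$ in the second part is correct: one must verify that $\mathbf{s}(v)(\omega) = s^*(v)$ is equivalent to $\mathbf{s}_i(\omega) = t_i$ for \emph{some} $t_i$ with $t_i(v) = s^*(v)$, which is immediate once the definitions of $\mathbf{s}(v)$ and $s_i(v)$ are unfolded. No other subtlety arises, because the whole argument of part (ii) is simply a transfer of the ``knowledge of one's own strategy'' from the fine-grained event $[\mathbf{s}_i = t_i]$ to the coarser event $I^v$ via monotonicity, with part (i) doing the real work.
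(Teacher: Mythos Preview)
Your argument is correct. Part (i) is the standard unfolding of the partition definition of $K_i$ together with the standing assumption that $\mathbf{s}_i$ is constant on each block of $P_i$; part (ii) is a clean reduction to (i) via the decomposition $I^v = \bigcup_{t_i(v)=s^*(v)} [\mathbf{s}_i = t_i]$ and monotonicity of $K_i$. The paper explicitly omits the proof of this lemma, so there is no original argument to compare against; your write-up would serve perfectly well as the missing proof. (A marginally shorter route for (ii) is to repeat the partition-block argument of (i) directly for the event $I^v$, but your reduction to (i) is equally valid and arguably more modular.)
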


Intuitively, $(i)$ states that if player $i$ chooses the strategy
$t_i$ he knows that he chooses it and $(ii)$ states that if player $i$
chooses the move $s^*(v)$ at the node $v$, then he knows this.
Note that by Lemma \ref{lem:ck}$(vi)$ we can replace in $(i)$ and
$(ii)$ $\sse$ by $=$.

\medskip

\NI
\textbf{Proof of Theorem \ref{thm:ckr}}.

We have $I = \bigcap_{v \in V \setminus Z} I^v$, so it suffices to
prove that for all $v \in V \setminus Z$, $CKR \sse I^v$.  Given two
nodes $v$ and $w$ we write $w < v$ if $w$ is a (possibly indirect)
descendant of $v$.

We proceed by induction w.r.t.~$<$.  Take a node $v$ and suppose that
$CKR \sse I^w$ for all $w < v$.  Let $i = \turn(v)$.  For a joint
strategy $s$ denote by $s^{<v}$ the joint strategy $s^v$ with the pair
$(v, s_i(v))$ removed from $s_i$, and define the function
$\mathbf{s}^{<v}: \Omega \to S_1 \times \cdots \times S_n$ by
\[
\mathbf{s}^{<v}(\omega) := \mathbf{s}(\omega)^{<v}.
\]

By Lemma \ref{lem:ck}$(i)$ and $(ii)$ and the induction hypothesis  $CKR = K_iCKR \sse K_i I^w$ for all $w < v$, so by
Lemma \ref{lem:ck}$(iii)$
\begin{equation}
  \label{equ:a1}
CKR \sse \bigcap_{w < v} K_i I^w = K_i \bigcap_{w < v} I^w = K_i [\mathbf{s}^{<v} = (s^*)^{<v}].  
\end{equation}

Also, by Lemma \ref{lem:ck}$(iv)$ and the definition of $R_i$ with $t_i$ set to $s^*_i$
\begin{equation}
  \label{equ:a2}
CKR \sse R \sse R_i \sse \neg K_i [o_i(\leaf((\mathbf{s}_{-i}, s^*_i)^v)) > o_i(\leaf(\mathbf{s}^v))].
\end{equation}

Further, by Lemma \ref{lem:ck}$(iii)$ and the fact that since $i = \turn(v)$,
\[
  \begin{array}{cl}
    & K_i [\mathbf{s}^{<v} = (s^*)^{<v}] \cap K_i [o_i(\leaf((\mathbf{s}_{-i}, s^*_i)^v)) > o_i(\leaf(\mathbf{s}^v))] \\
   =& K_i [\mathbf{s}^{<v} = (s^*)^{<v} \land o_i(\leaf((\mathbf{s}_{-i}, s^*_i)^v)) > o_i(\leaf(\mathbf{s}^v))] \\
   =& K_i [\mathbf{s}^{<v} = (s^*)^{<v} \land o_i(\leaf((s^*)^v)) > o_i(\leaf((s^*_{-i}, \mathbf{s}_i)^v))] \\
   =& K_i [\mathbf{s}^{<v} = (s^*)^{<v}] \cap K_i [o_i(\leaf((s^*)^v)) > o_i(\leaf((s^*_{-i}, \mathbf{s}_i)^v))],
  \end{array}
\]
so by taking complement w.r.t.~$K_i [\mathbf{s}^{<v} = (s^*)^{<v}]$
\begin{equation}
  \label{equ:a3}
%\[
  \begin{array}{cl}  
    & K_i [\mathbf{s}^{<v} = (s^*)^{<v}] \cap \neg K_i [o_i(\leaf((\mathbf{s}_{-i}, s^*_i)^v)) > o_i(\leaf(\mathbf{s}^v))] \\
   =& K_i [\mathbf{s}^{<v} = (s^*)^{<v}] \cap \neg K_i [o_i(\leaf((s^*)^v)) > o_i(\leaf((s^*_{-i}, \mathbf{s}_i)^v))] \\
  \sse& \neg K_i [o_i(\leaf((s^*)^v)) > o_i(\leaf((s^*_{-i}, \mathbf{s}_i)^v))]. 
  \end{array}       
\end{equation}
%\]

Finally, by (\ref{equ:a1})--(\ref{equ:a3}), the fact that for each node
$v$, $s^v$ is a unique subgame perfect equilibrium in $G^v$, Lemma
\ref{lem:sse}, and Lemma \ref{lem:ck}$(v)$ and $(vi)$
\[
  \begin{array}{cl}
    & CKR \sse K_i [\mathbf{s}^{<v} = (s^*)^{<v}] \cap \neg K_i [o_i(\leaf((\mathbf{s}_{-i}, s^*_i)^v)) > o_i(\leaf(\mathbf{s}^v))] \\
\sse& \neg K_i [o_i(\leaf((s^*)^v)) > o_i(\leaf((s^*_{-i}, \mathbf{s}_i)^v))] =  \neg K_i [\mathbf{s}(v) \neq s^*(v)]  \\
   =& \neg K_i \neg I^v = \neg K_i \neg K_i I^v = \neg \neg K_i I^v = K_i I^v \sse I^v,
  \end{array}
\]
as desired.
\HB
\II

We conclude by the following observation of \cite{Aum91} showing that non-trivial knowledge systems can be easily constructed.

\begin{note}
  For every finite extensive game without relevant ties there exists a knowledge system such that $CKR \neq \ES$.
\end{note}

\begin{proof}
  It suffices to choose $\Omega$ to be a singleton set $\{\omega\}$
  and set $\mathbf{s}(\omega) := s^*$, where $s^*$ is the unique
  subgame perfect equilibrium of the considered game.  Then
  $CKR = \Omega$.
\end{proof}

Aumann's paper dealt with concepts also studied by philosophers and
psychologists. As a result it became highly influential and attracted
wide attention.  In particular Stalnaker pointed out in \cite{Sta96}
that Aumann's notion of rationality involves reasoning about
situations (nodes) that the agent knows will never be reached and
constructed a model in which common knowledge of players' rationality
does not imply that the game reaches the backward induction outcome.

The apparent contradiction between Aumann's and Stalnaker's results
was clarified by Halpern in \cite{Hal01}.  The difference can be
explained by adding to Aumann's knowledge system for an extensive game
one more parameter, a function
\[
  f: \Omega \times V \setminus Z \to \Omega,
\]
that for a given state $\omega$ and a non-leaf node $v$ yields a state
$\omega'$ that is `nearest' (in a well-defined sense) to $v$ and is
such that $v$ is \emph{reached} in $\omega'$, i.e., is such that $v$
lies on $\play(\mathbf{s}(\omega'))$.

Then according to Stalnaker, a player $i$ is \emph{substantively
  rational} in a state $\omega$, if for each node $v \in V_i$ he is
rational in the state $\omega' = f(\omega, v)$, where the latter means that
\[
  \omega' \in \bigcap_{t_i \in S_i} \: \neg K_i
  [o_i(\leaf((\mathbf{s}_{-i}, t_i)^v)) > o_i(\leaf(\mathbf{s}^v))].
\]
Stalnaker's model refers to substantive rationality and not rationality.  
\section{Weak dominance and backward induction}
\label{sec:wd}

Iterated elimination of weakly dominated strategies is defined for
strategic games, so it can be also applied to the strategic forms of
extensive games.  For the class of finite extensive games discussed in
the previous section this procedure is closely related to backward
induction. The aim of this section is to clarify this relation.

The following notion will be needed. Consider a node $w$ in the game
tree of an extensive game $G$ such that $\turn(w) = i$.  We say that a
strategy $s_i$ of player $i$ \bfe{can reach $w$} if for some $s_{-i}$
the node $w$ lies on the path $\play(s_i, s_{-i})$.

We begin by introducing Algorithm \ref{alg:EBI} that is a modification
of the backward induction algorithm \ref{alg:BI} from Section
\ref{sec:spe} in which the input and output are modified and line
{\footnotesize{\textbf{7}}} is added.

{\footnotesize
\begin{algorithm}[htbp]
\caption{\label{alg:EBI}}
\KwIn{A finite extensive game with no relevant ties $G: = (T, \turn, o_1, \LL, o_n)$ with $T = (V,E,v_0)$ 
and $\Gamma(G) = (S_1, \LL, S_n, p_1, \LL, p_n)$.}
\KwOut{The subgame perfect equilibrium $s$ in $G$,
  extensions of the functions $o_1, \LL, o_n$ to all nodes of $T$
  such that $o(v_0) = o(\leaf(s))$, and
a trivial strategic game $(S_1, \LL, S_n, p_1, \LL, p_n)$  that includes $s$.
}
\While{$|V| > 1$}{
  \textbf{choose} $v \in V$ that is a preleaf of $T$;
  
  $i := \turn(v)$;

  \textbf{choose} $w \in C(v,T)$ such that $o_i(w)$ is maximal;

  $s_i(v) := w$;

  $o(v) := o(w)$;

  $S_i := S_i \setminus \{s'_i \in S_i \mid s'_i \textrm{ can reach }
  v \textrm{ and } s'_i(v) \neq w\}$; 
  
  $V := V \setminus C(v,T)$;

  $E := E \cap (V \times V)$;

  $T := (V,E,v_0)$

}  
\end{algorithm}
}

  The following theorem makes precise the mentioned relation between two concepts.
  
\begin{theorem} \label{thm:correspondence}
  Consider a finite extensive game $G$ without relevant ties and Algorithm \ref{alg:EBI} applied to it.
  \begin{enumerate}[(i)]
    
  \item Each strategy removed in line {\footnotesize{\textbf{7}}} is
    weakly dominated in the current version of the strategic game.

  \item The strategic game $(S_1, \LL, S_n, p_1, \LL, p_n)$ which is
    generated upon termination of the algorithm is trivial and
    includes the subgame perfect equilibrium of $G$.
%%   \item Upon termination of the algorithm the resulting strategic game
%%     $(S_1, \LL, S_n, p_1, \LL, p_n)$ is trivial and includes the
%%     subgame perfect equilibrium of $G$.

  \end{enumerate}
\end{theorem}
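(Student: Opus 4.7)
The plan is to track, throughout the execution of the algorithm, a single invariant relating the current strategy sets to the partial joint strategy $s$ built so far: at any point, if $r_j$ is in the current $S_j$ and a decision node $u$ of player $j$ has already been processed (so that $u \notin V$ in the current tree), then either $r_j$ cannot reach $u$ or $r_j(u) = s_j(u)$. This follows by a routine induction on the loop iterations directly from the removal rule in line \textbf{7}. Combined with the tree-structural fact that ``being processed'' propagates strictly upward from the leaves (so every node still in $V$ has all its strict ancestors still in $V$, and no filtering has yet touched the opponents' moves along the path from $v_0$ to any such node), this invariant drives both parts of the proof.

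For part (i), consider an iteration in which preleaf $v$ is being processed: let $i := \turn(v)$, let $w$ be the child chosen in line \textbf{4}, and let $s'_i$ with $s'_i(v) \neq w$ be a strategy about to be removed in line \textbf{7}, so that $s'_i$ can reach $v$. Define $t_i$ to agree with the partial strategy $s_i$ at every node of $V_i$ inside $T^v$ (in particular $t_i(v) = w$) and with $s'_i$ elsewhere. At every already-processed node of $V_i$, $t_i$ picks the algorithm's move, either by construction (inside $T^v$) or because it agrees there with $s'_i$, to which the invariant applies; hence $t_i$ lies in the current $S_i$. Now take any $s_{-i}$ in the current $S_{-i}$. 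If $\play(s'_i, s_{-i})$ does not pass through $v$, then $t_i$ and $s'_i$ produce identical plays and identical payoffs. Otherwise the play reaches $v$; it continues to $w$ under $t_i$ and to $w' := s'_i(v)$ under $s'_i$, and from there the invariant forces each continuation to follow the algorithm's strategy node by node inside $T^w$ and $T^{w'}$ respectively, since every node of those subtrees has been processed and each strategy reaches the nodes along its own play. Thus $p_i(t_i, s_{-i}) = o_i(w)$ and $p_i(s'_i, s_{-i}) = o_i(w')$. Line \textbf{4} gives $o_i(w) \geq o_i(w')$, and since $o_i(w)$ and $o_i(w')$ are the $o_i$-values of distinct leaves of the original subtree $T^v$, the no-relevant-ties assumption upgrades this to $o_i(w) > o_i(w')$. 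To produce an opponent profile witnessing strict dominance, I exploit that no strict ancestor of $v$ has been processed: combining the required path-moves from $v_0$ to $v$ at those unfiltered ancestors with the algorithm's choices at already-processed nodes yields an $s_{-i}$ in the current $S_{-i}$ such that $\play(s'_i, s_{-i})$ passes through $v$.

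For part (ii), let $r$ be any joint strategy in the final game. The loop terminates with $|V| = 1$, so every internal node of $T$ has been processed; at each such node $u$ on $\play(r)$ the relevant $r_j$ reaches $u$, and the invariant forces $r_j(u) = s_j(u)$. Hence $\play(r) = \play(s)$ and $p(r) = o(\leaf(s))$, so the final game is trivial. The algorithm's own joint strategy $s$ survives every removal, since at each processed preleaf $v$ it picks exactly the move assigned in line \textbf{5}; thus $s$ belongs to the final game, and by Theorem \ref{thm:extSPE} it is a subgame perfect equilibrium of $G$.

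The main obstacle is the strict-inequality half of (i): one must exhibit an opponent profile inside the already-pruned $S_{-i}$ that actually routes the play through the preleaf $v$. This is resolved by the structural observation above --- processing propagates only from the leaves upward --- which leaves all opponent moves along the path from $v_0$ to $v$ completely unconstrained by the prior removals, so the required witness can simply be built by hand.
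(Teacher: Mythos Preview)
Your proof is correct and follows the same overall plan as the paper---maintain an invariant through the loop iterations and use it to establish both dominance and triviality---but your execution is more self-contained. The paper's invariant $I(u)$ is phrased in terms of outcomes (every surviving joint strategy that reaches $u$ yields in $G^u$ the same outcome as the algorithm's $s^u$), whereas yours is the more primitive statement about strategy values that directly encodes the effect of line~{\footnotesize\textbf{7}}. The payoff of your choice shows up in part~(i): the paper defines the candidate dominator $t_i$ to differ from the removed $s'_i$ only at $v$, notes that this $t_i$ need not lie in the current $S_i$, and then invokes Lemma~\ref{lem:wd} to replace it by some $t'_i$ that does; you instead overwrite $s'_i$ on all of $T^v$ with the algorithm's partial $s_i$, and your invariant (together with the observation that ancestors of a node outside $T^v$ are themselves outside $T^v$) immediately certifies that this $t_i$ already survives. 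You also make explicit something the paper leaves implicit, namely that the strict-inequality case is non-vacuous: an opponent profile in the \emph{current} $S_{-i}$ that routes the play through $v$ exists precisely because no strict ancestor of $v$ has yet been processed. On the other hand, the paper's route isolates Lemma~\ref{lem:wd} as a reusable fact about IEWDS that it needs again in Section~\ref{sec:sc}; your argument is tailored to the tree structure and does not produce that lemma as a by-product.
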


\begin{proof}
Below $s$ is the subgame perfect equilibrium of the game $G$.
Consider the assertion $I(u)$ defined by
\[
\mbox{$I(u) \equiv \fa t \in S :$ [if $u$ appears in $\play(t)$
then $o(\leaf(s^u)) = o(\leaf(t^u))$]},
\]
where $S = S_1 \times \cdots \times S_n$.

First notice that if during an execution of the algorithm for some node
$u$ the assertion $I(u)$ becomes true, then it remains true. The
reason is that the considered set $S$ of joint strategies never increases.

We now show that after each loop iteration $I(v)$ holds, where $v$ is
the node being dealt with in current loop iteration and 
$S$ refers to the current set of joint strategies. Fix an execution of
the algorithm.  We proceed by induction on the order in which the
nodes of $T$ are selected in line {\footnotesize{\textbf{2}}}.
Consider first a preleaf $v$ in the original game tree $T$. Take
$t \in S$ such that $v$ appears in $\play(t)$ and let $i =
\turn(v)$. Then \II

\begin{tabular}{ll}
      & $o(\leaf(t^v))$ \\
    = &  \ \ \{ by line {\footnotesize{\textbf{7}}} $t_i(v) = w$\} \\
      & $o(w)$ \\
    = &  \ \ \{ by line {\footnotesize{\textbf{5}}} $s_i(v) = w$\} \\
      & $o(\leaf(s^v))$.
\end{tabular}
\II

Next, consider a node $v$ in the original game tree $T$ selected in
line {\footnotesize{\textbf{2}}} that is neither a preleaf nor a
leaf and consider the program state after the current loop iteration.
Then both $i = \turn(v)$ and $s_i(v) = w$.

By the order in which the nodes are selected in line
{\footnotesize{\textbf{2}}}, all nodes $u \in C(v)$ have been dealt
with in earlier loop iterations. So by the induction hypothesis $I(u)$
holds for $u \in C(v)$. In particular $I(w)$ holds. 
Take $t \in S$ such that $v$ appears in $\play(t)$. Then
\II

\begin{tabular}{ll}
      & $o(\leaf(t^v))$ \\
  = &  \ \ \{ by line {\footnotesize{\textbf{7}}} $t_i(v) = w$\} \\
      & $o(\leaf(t^w))$ \\
    = &  \ \ \{ $I(w))$ \} \\
      & $o(\leaf(s^w))$ \\ 
    = &  \ \ \{ by line {\footnotesize{\textbf{5}}} $s_i(v) = w$\} \\
      & $o(\leaf(s^v))$.
\end{tabular}
\II

\NI
$(i)$
Fix the program state after an arbitrary loop iteration of the
algorithm with the current values of $v, w$, $i$,
$S_1, \LL, S_n$. In particular $\turn(v) = i$.

We first prove that for $u \in C(v)$, $u \neq w$
\begin{equation}
  \label{equ:uw}
o_i(\leaf(s^w)) > o_i(\leaf(s^u)).
\end{equation}

Let $t_i$ be the strategy of player $i$ that differs from $s_i$ only
for the node $v$ to which it assigns $u$.  We have $s_i(v) = w$ and
by definition $s^v$ is a Nash equilibrium in the subgame $G^v$, so
\[
o_i(\leaf(s^w)) = o_i(\leaf(s^v)) \geq o_i(\leaf(t_i^v, s_{-i}^v)) = o_i(\leaf(t_i^u, s_{-i}^u)) = o_i(\leaf(s^u)).
\]
But $\turn(v) = i$, $u, w \in C(v)$, $u \neq w$, and $G^v$ is without relevant
ties, so (\ref{equ:uw}) follows.

Take now a strategy $s'_i$ removed in line {\footnotesize{\textbf{7}}}
and suppose $s'_i(v) = u$.  Consider the strategy $t_i$ that differs
from $s'_i$ only for the node $v$ to which it assigns $w$ (i.e.,
$s_i(v)$).  We claim that after line {\footnotesize{\textbf{7}}} $t_i$
weakly dominates $s'_i$ in the current version of
$(S_1, \LL, S_n, p_1, \LL, p_n)$.

Take some $t_{-i} \in S_{-i}$.  If $v \not\in \play(t_i, t_{-i})$,
then $v \not\in \play(s'_i, t_{-i})$, so
$\play(t_i, t_{-i}) = \play(s'_i, t_{-i})$ and hence
\[
p_i(t_i, t_{-i}) = 
o_i(\leaf(t_i, t_{-i}) = o_i(\leaf(s'_i, t_{-i}) = p_i(s'_i, t_{-i}).
\]

If $v \in \play(t_i, t_{-i})$, then also $w \in \play(t_i, t_{-i})$, 
$v \in \play(s'_i, t_{-i})$ and $u \in \play(s'_i, t_{-i})$, where, recall, $s'_i(v) = u$.
Since $u, w \in C(v)$, these two nodes have been dealt with in earlier loop iterations.
So both $I(u)$ and $I(w)$ hold.

Hence 
\[
  p_i(t_i, t_{-i}) = o_i(\leaf(t_i, t_{-i})) = o_i(\leaf((t_i, t_{-i})^w)) = o_i(\leaf(s^w))
\]
and 
\[
  p_i(s'_i, t_{-i}) = o_i(\leaf(s'_i, t_{-i})) = o_i(\leaf((s'_i, t_{-i})^u)) = o_i(\leaf(s^u)).
\]
So $p_i(t_i, t_{-i}) > p_i(s'_i, t_{-i})$ by (\ref{equ:uw}).

Now, $t_i$ does not need to be a strategy from $S_i$ but thanks to
Lemma \ref{lem:wd} we can conclude that a strategy $t'_i$ in $S_i$
exists that weakly dominates $s'_i$ in the current version of
$(S_1, \LL, S_n, p_1, \LL, p_n)$.
\II

\NI
$(ii)$
Upon termination of the algorithm $I(v_0)$, i.e.,
$\fa t \in S : o(\leaf(s)) = o(\leaf(t))$ holds. This means that upon
termination the final game $(S_1, \LL, S_n, p_1, \LL, p_n)$ is
trivial.  Further, for each player each strategy removed in line
{\footnotesize{\textbf{7}}} differs from his strategy in the subgame
equilibrium, which means that the final game includes this
equilibrium.
\end{proof}
\II

This theorem shows that every finite extensive game $G$ without
relevant ties can be solved by an IEWDS. Recall from Corollary
\ref{cor:generic} that such a game has a unique subgame perfect
equilibrium.
%% A simple example (see \cite[page 109]{OR94}) shows that
%% some iterations of IEWDS may remove the subgame perfect
%% equilibrium. Also one cannot stipulate that the final game has the
%% subgame perfect equilibrium as the unique joint strategy.
However, not all instances of the IEWDS behave the desired way.  The
following example, taken from \cite[page 109]{OR94}, shows that 
  % there exists extensive games in which an IEWDS can remove all subgame
  % perfect equilibria. Also, one cannot stipulate that the final game
  % has the subgame perfect equilibrium as the unique joint strategy.
some generic extensive games can be solved by an IEWDS that removes the unique
subgame perfect equilibrium. This explains why in line
{\footnotesize{\textbf{7}}} in Algorithm \ref{alg:EBI} only specific
  weakly dominated strategies are removed.

\begin{figure}[htbp]
  \centering
  \begin{minipage}{.7\textwidth}
    \centering
\tikzstyle{agent}=[circle,draw=black!80,thick, minimum size=1em]

\tikzstyle{changeblue}=[draw=blue!80,fill=blue!30,thick]
\tikzstyle{changered}=[draw=red!80,fill=red!30,thick]
\tikzstyle{changegreen}=[draw=green!80,fill=green!30,thick]
 \begin{tikzpicture}[auto,>=latex', align=center]
   \node[label=above:{\small }](a1)    {$1$};

   \node[below left of=a1, node distance=2cm,label=right:{\small }](b1){$2$};

   \node[below right of=a1, node distance=2cm,label=right:{\small }](b2){$(3,3)$};

   \node[below left of=b1, node distance=2cm,label=right:{\small }](c1){$1$};

   \node[below right of=b1, node distance=2cm,label=right:{\small }](c2){$(1,1)$};

   \node[below left of=c1, node distance=2cm,label=right:{\small }](d1){$(2,0)$};
   \node[below right of=c1, node distance=2cm,label=right:{\small }](d2){$(0,2)$};

   \draw (a1) -- (b1) node[midway, above] {{\small $A$}};
   \draw (a1) -- (b2) node[midway, above] {{\small $B$}};

   \draw (b1) -- (c1) node[midway, above] {{\small $C$}};
   \draw (b1) -- (c2) node[midway, above] {{\small $D$}};

   \draw (c1) -- (d1) node[midway, above] {{\small $E$}};
   \draw (c1) -- (d2) node[midway, above] {{\small $F$}};

   %% \node[right of=b3,node distance=1.5cm] (b4){$\cdots$};

   %%    \node[agent,left of=b2, node distance=2cm,
   %%      label=right:{\small }, onslide={<1->{changeblue}}](b1){$u_1$};
 \end{tikzpicture}
 \end{minipage}%
\begin{minipage}{.3\textwidth}
 \begin{game}{4}{2}
      & $C$               & $D$\\
   $\mathit{AE}$   &$\phantom{-}2, 0$    &$\phantom{-}1, 1$\\
   $\mathit{AF}$   &$\phantom{-}0, 2$    &$\phantom{-}1, 1$\\
   $\mathit{BE}$   &$\phantom{-}3, 3$    &$\phantom{-}3, 3$\\
   $\mathit{BF}$   &$\phantom{-}3, 3$    &$\phantom{-}3, 3$\\
 \end{game}
\end{minipage}
\caption{\label{fig:SPEIEWDS} An extensive game (left) and its associated strategic game (right).}
\end{figure}

\begin{example}
  Consider the two-player generic extensive game and its associated strategic
  game given in Figure~\ref{fig:SPEIEWDS}. In the figure, the nodes
  are labelled with the player whose turn it is to move.

  Consider now an IEWDS that consists of the following sequence of
  elimination of weakly dominated strategies:
  $\mathit{AE}, D, \mathit{AF}$. The resulting trivial subgame has two
  joint strategies $(\mathit{BE}, C)$, $(\mathit{BF}, C)$.  So this
  instance of the IEWDS eliminates $(\mathit{BE}, D)$, which is the
  unique subgame perfect equilibrium.  \HB
\end{example}

\section{Weak dominance and strictly competitive games}

We now continue an account of iterated elimination of weakly dominated
strategies.  In Theorem \ref{thm:correspondence} we showed that each
finite extensive game without relevant ties can be solved by an IEWDS.
A natural question is whether we can extend this result to arbitrary
finite extensive games.  The following example taken from \cite[pages
109-110]{OR94} shows that this fails to be the case already for
two-player games.

\label{sec:sc}
\begin{figure}[htbp]
  \centering
  \begin{minipage}{.7\textwidth}
    \centering
\tikzstyle{agent}=[circle,draw=black!80,thick, minimum size=1em]

\tikzstyle{changeblue}=[draw=blue!80,fill=blue!30,thick]
\tikzstyle{changered}=[draw=red!80,fill=red!30,thick]
\tikzstyle{changegreen}=[draw=green!80,fill=green!30,thick]
 \begin{tikzpicture}[auto,>=latex', align=center]
   \node[label=above:{\small }](a1)    {$1$};

   \node[below left=.6cm and 1.5cm of a1,label=right:{\small }](b1){$2$};

%   \node[below right of=a1, node distance=2cm,label=right:{\small }](b2){$(3,3)$};

      \node[below right=.6cm and 1.5cm of a1,label=right:{\small }](b2){$1$};

   \node[below left of=b1, node distance=2cm,label=right:{\small }](c1){$(0,0)$};

   \node[below right of=b1, node distance=2cm,label=right:{\small }](c2){$(2,0)$};

   \node[below left of=b2, node distance=2cm,label=right:{\small }](c3){$(1,1)$};
   \node[below right of=b2, node distance=2cm,label=right:{\small }](c4){$(0,0)$};

   \draw (a1) -- (b1) node[midway, above] {{\small $A$}};
   \draw (a1) -- (b2) node[midway, above] {{\small $B$}};

   \draw (b1) -- (c1) node[midway, above] {{\small $L$}};
   \draw (b1) -- (c2) node[midway, above] {{\small $R$}};

   \draw (b2) -- (c3) node[midway, above] {{\small $C$}};
   \draw (b2) -- (c4) node[midway, above] {{\small $D$}};

   %% \node[right of=b3,node distance=1.5cm] (b4){$\cdots$};

   %%    \node[agent,left of=b2, node distance=2cm,
   %%      label=right:{\small }, onslide={<1->{changeblue}}](b1){$u_1$};
 \end{tikzpicture}
 \end{minipage}%
\begin{minipage}{.3\textwidth}
 \begin{game}{4}{2}
      & $L$               & $R$\\
   $\mathit{AC}$   &$\phantom{-}0, 0$    &$\phantom{-}2, 0$\\
   $\mathit{AD}$   &$\phantom{-}0, 0$    &$\phantom{-}2, 0$\\
   $\mathit{BC}$   &$\phantom{-}1, 1$    &$\phantom{-}1, 1$\\
   $\mathit{BD}$   &$\phantom{-}0, 0$    &$\phantom{-}0, 0$\\
 \end{game}
\end{minipage}
\caption{\label{fig:IEWDSnontrivial} An extensive game (left) and its associated strategic game (right).}
\end{figure}

\begin{example}
  Consider the two-player extensive game and its associated
  strategic game given in Figure \ref{fig:IEWDSnontrivial}. In the
  game tree, the nodes are labelled with the player whose turn it is
  to move. For this game there is just one instance of IEWDS which
  consists of eliminating the strategy $\mathit{BD}$. The
  resulting subgame is not trivial, so no instance of IEWDS can
  solve this extensive game.
  \HB
\end{example}

On the other hand, as shown in \cite{Ewe02}, finite extensive zero-sum
games can be solved by an IEWDS in which at each step all weakly
dominated strategies are removed.  The aim of this section is to
present this result for the larger class of finite extensive strictly
competitive games for which the same proof remains valid.

From now on, given a strategic game $H$ we denote by $H^1$ a subgame
of $H$ obtained by the elimination of \emph{all} strategies that are
weakly dominated in $H$, and put $H^{0} := H$ and
$H^{k+1} := (H^k)^1$, where $k \geq 1$.  So, in contrast to Sections
\ref{sec:prelim} and \ref{sec:aux} each $H^k$ is now uniquely defined.

Below for a strategic game $H$ we denote by $H_i$ the set of
strategies of player $i$.  Also, for a finite extensive game $G$ we
write $\Gamma^{k}(G)$ instead of $(\Gamma(G))^k$, $\Gamma_i(G)$
instead of $(\Gamma(G))_i$, and $\Gamma^k_i(G)$ instead of
$(\Gamma^k(G))_i$.  In particular $\Gamma^{0}(G) = \Gamma(G)$.

Further, for a finite strictly competitive strategic game $H = (S_1, S_2, p_1, p_2)$
we define for each player $i$ 
\[
\begin{array}{l}
p_i^{\pmax}(H) := \max_{s \in S} p_i(s), \\
\wmax_i(H) : =\{s_i \in S_i \mid \fa s_{-i} \in S_{-i} \: p_i(s_i,s_{-i})=p_i^{\pmax}(H)\}, \\
\lose_{-i}(H) = \{s_{-i} \in S_{-i} \mid \exists s_{i} \in S_{i} \: p_{i}(s_{i},s_{-i}) = p^{\pmax}_{i}(H)\}. 
\end{array}
\]
So $p_i^{\pmax}(H)$ is the maximal payoff player $i$ can receive in the game $H$,
$\wmax_i(H)$ is the set of strategies of player $i$ for which he
always gets $p_i^{\pmax}(H)$, while $\lose_{-i}(H)$ is the set of
strategies of player $-i$ for which his opponent $i$ can get
his maximally possible payoff $p_i^{\pmax}(H)$.

The following lemma, with a rather involved proof, is crucial.

\begin{lemma}
  \label{lem:scLose}
  Let $G$ be a finite strictly competitive extensive game.
    For all $i \in \{1,2\}$ and for
  all $k \geq 0$, if $\wmax_i(\Gamma^k(G)) = \emptyset$ then 
    $\lose_{-i}(\Gamma^k(G)) \cap \Gamma^{k+2}_{-i}(G) = \ES$.
  \end{lemma}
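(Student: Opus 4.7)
The approach is to prove the contrapositive one strategy at a time: for each $t_{-i} \in \lose_{-i}(\Gamma^k(G))$ that survives to $\Gamma^{k+1}_{-i}(G)$, I will exhibit a strategy in $\Gamma^{k+1}_{-i}(G)$ that weakly dominates $t_{-i}$ in $\Gamma^{k+1}(G)$. This forces $t_{-i} \notin \Gamma^{k+2}_{-i}(G)$, which is the claim.

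The setup combines the minimax theorem with Lemma \ref{lem:wd}. Corollary \ref{cor:spe} supplies a Nash equilibrium in $G$, and iterating Corollary \ref{cor:h1} shows that every $\Gamma^j(G)$ has a Nash equilibrium whose payoff, by Theorem \ref{thm:scminimax}, equals $(U, V) := (\maxmin_i(G), \maxmin_{-i}(G))$, a pair independent of $j$. Writing $M := p_i^{\pmax}(\Gamma^k(G))$ and $m := p_{-i}^{\pmin}(\Gamma^k(G))$, the hypothesis $\wmax_i(\Gamma^k(G)) = \ES$ is equivalent to $U < M$, which by strict competitiveness is equivalent to $V > m$. Given $s^{\#}_i \in \Gamma^k_i(G)$ with $p_i(s^{\#}_i, t_{-i}) = M$, Lemma \ref{lem:wd} lifts this to $s_i \in \Gamma^{k+1}_i(G)$ with $p_i(s_i, t_{-i}) = M$, so $p_i^{\pmax}(\Gamma^{k+1}(G)) = M$, $t_{-i} \in \lose_{-i}(\Gamma^{k+1}(G))$, $p_{-i}^{\pmin}(\Gamma^{k+1}(G)) = m$, and (since $\maxmin_i$ is preserved under elimination) $\wmax_i(\Gamma^{k+1}(G)) = \ES$.

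Re-running the minimax argument inside $\Gamma^{k+1}(G)$ gives $\lose_{-i}(\Gamma^{k+1}(G)) \subsetneq \Gamma^{k+1}_{-i}(G)$: otherwise every $s_{-i} \in \Gamma^{k+1}_{-i}(G)$ would admit some $s'_i \in \Gamma^{k+1}_i(G)$ with $p_i(s'_i, s_{-i}) = M$, which together with NE existence in $\Gamma^{k+1}(G)$ would force $U = M$, a contradiction. So there is $u_{-i} \in \Gamma^{k+1}_{-i}(G)$ with $p_i(s'_i, u_{-i}) < M$ for every $s'_i \in \Gamma^{k+1}_i(G)$; this $u_{-i}$ beats $t_{-i}$ strictly at $s_i$.

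The main obstacle is upgrading this strict improvement into full weak dominance over $\Gamma^{k+1}_i(G)$: $t_{-i}$ may yield $-i$ a payoff above $V$ against some surviving strategies of $i$, so the generic $u_{-i}$ need not weakly dominate $t_{-i}$ on its own. This is where I plan to exploit that $G$ is an \emph{extensive} game with perfect information. The dominator will be built as a local modification of $t_{-i}$ at the deepest decision node $v$ of player $-i$ along $\play(s_i, t_{-i})$, replacing $t_{-i}(v)$ by a move into the subgame $G^v$ that is weakly better for $-i$ against every surviving strategy of $i$; the existence of such a replacement move rests again on the $\wmax$--$\lose$ dichotomy applied inside $G^v$. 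The tree structure ensures that this modification is invisible to any $s'_i$ whose play does not reach $v$, while on plays that do reach $v$ the chosen replacement guarantees that $-i$'s outcome does not decrease. A final application of Lemma \ref{lem:wd} places the resulting strategy inside $\Gamma^{k+1}_{-i}(G)$.
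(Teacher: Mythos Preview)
Your overall shape matches the paper's: take $s_{-i}\in\lose_{-i}(\Gamma^k(G))$ that survives to $\Gamma^{k+1}$, locally modify it along the play $\play(s_i,s_{-i})$ into a strategy that weakly dominates it, and then invoke Lemma~\ref{lem:wd} to place a dominator inside $\Gamma^{k+1}_{-i}(G)$. The setup paragraphs are fine (though your third paragraph, producing the auxiliary $u_{-i}$, is a detour that is never used again). The gap is entirely in the last paragraph, which is where all the work lies.

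First, ``the deepest decision node $v$ of player $-i$ along $\play(s_i,t_{-i})$'' is not the right node, and changing the single value $t_{-i}(v)$ is not the right modification. Nothing rules out that every child of this deepest node leads to a subgame in which player $i$ can still force the payoff $M$; then no one-move change at $v$ helps $-i$ at all. The paper instead walks along $\play(s_i,s_{-i})$ and finds the edge $(u,w)$ at which the subgame value for $i$ first crosses the threshold, i.e.\ $\maxmin_i(\Gamma(G^u))<M\le\maxmin_i(\Gamma(G^w))$; this jump forces $\turn(u)=-i$. The modification then replaces \emph{all} of $s_{-i}$ on the subtree $T^u$ (not just $s_{-i}(u)$) by the Nash-equilibrium strategy $s^*_{-i}$ of $G^u$, which by Theorem~\ref{thm:scminimax} secures player $-i$ at least $\maxmin_{-i}(\Gamma(G^u))$, a value strictly above~$m$.

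Second, even with the correct node $u$ and the correct modification, the assertion that the modified strategy is no worse for $-i$ against every $s'_i\in\Gamma^{k+1}_i(G)$ is not a consequence of any ``$\wmax$--$\lose$ dichotomy''. In the paper this is the hard step (Claim~1): one supposes some $s'_i\in\Gamma^{k+1}_i(G)$ makes the modification strictly worse for $-i$, then builds a strategy $t_i$ for player $i$ by keeping $s'_i$ outside $T^w$ and plugging in a security strategy for $i$ in $G^w$ inside it. The jump inequality $\maxmin_i(\Gamma(G^w))\ge M$ is exactly what makes $t_i$ weakly dominate $s'_i$ in $\Gamma^k(G)$, contradicting $s'_i\in\Gamma^{k+1}_i(G)$. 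Your sketch contains neither the jump inequality nor this inner contradiction, and without them the weak-dominance claim does not go through.
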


  This lemma implies that if for all $i \in \{1,2\}$,
  $\wmax_i(\Gamma^k(G)) = \emptyset$ then two further rounds of
  eliminations of weakly dominated strategies remove from
  $\Gamma^k(G)$ at least two outcomes.

\begin{proof}
Fix $i$ and $k$ and suppose $\wmax_i(\Gamma^k(G)) = \emptyset$. So for all 
$s_i \in \Gamma^k_i(G)$ we have
$\min_{s_{-i} \in \Gamma^k_{-i}(G)} p_i(s_i, s_{-i}) < p_i^{\pmax}(H)$, and
hence $\maxmin_i(\Gamma^k(G)) < p_i^{\pmax}(\Gamma^k(G))$.

By Corollary \ref{cor:spe} the strategic game $\Gamma(G)$ has a Nash
equilibrium.  By the repeated application of Corollary \ref{cor:h1} we
have $\maxmin_i(\Gamma(G))=\maxmin_i(\Gamma^k(G))$. Therefore
  \begin{equation}
    \label{equ:win}
\maxmin_i(\Gamma(G)) < p_i^{\pmax}(\Gamma^k(G)).
  \end{equation}

Take now $s_{-i} \in \lose_{-i}(\Gamma^k(G))$. We need to prove that
$s_{-i} \not\in \Gamma^{k+2}_{-i}(G)$.

For some $s_i \in \Gamma^k_i(G)$
we have $p_i(s_i,s_{-i})=p_i^{\pmax}(\Gamma^k(G))$.
By Lemma \ref{lem:wd} we can assume that $s_i \in \Gamma^{k+1}_i(G)$.
Consider now the path $\mathit{play}(s_i,s_{-i})$. Then

\begin{itemize}

\item by (\ref{equ:win}) for the first node $u$ lying on
  $\mathit{play}(s_i,s_{-i})$ (so the root),

  $\maxmin_i(\Gamma(G^u)) < p_i^{\pmax}(\Gamma^k(G))$,

\item for the last node $u$ lying on $\mathit{play}(s_i,s_{-i})$ (so the leaf), 

$p_i^{\pmax}(\Gamma^k(G)) = p_i(s_i,s_{-i}) = o_i(u) = \maxmin_i(\Gamma(G^u))$.
\end{itemize}

% we have by (\ref{equ:win})
% $\maxmin_i(\Gamma(G^u)) < p_i^{\pmax}(\Gamma^k(G))$ and for the last
% node $v$ (so the leaf) we have
% $\maxmin_i(\Gamma(G^v)) = p_i^{\pmax}(\Gamma^k(G))$.  

So for some adjacent nodes $u, w$ lying on the path
$\mathit{play}(s_i,s_{-i})$ 
% we have
% $\maxmin_i(\Gamma(G^u)) < p_i^{\pmax}(\Gamma^k(G))$ and
% $\maxmin_i(\Gamma(G^v)) \geq p_i^{\pmax}(\Gamma^k(G))$, i.e.,
\begin{equation}
  \label{equ:pmax}
  \maxmin_i(\Gamma(G^u)) <
p_i^{\pmax}(\Gamma^k(G)) \leq \maxmin_i(\Gamma(G^w)).
\end{equation}

Further, if for some adjacent nodes $u', w'$ lying on the path
$\mathit{play}(s_i,s_{-i})$ we have $\turn(u') =i$ and
$p_i^{\pmax}(\Gamma^k(G)) \leq \maxmin_i(\Gamma(G^{w'}))$, then
$\maxmin_i(\Gamma(G^{u'})) = \maxmin_i(\Gamma(G^{w'}))$.
So $\turn(u)=-i$ and $s_{-i}(u)=w$.

If $s_{-i} \notin \Gamma^{k+1}_{-i}(G)$ then 
$s_{-i} \notin \Gamma^{k+2}_{-i}(G)$.
So suppose $s_{-i} \in \Gamma^{k+1}_{-i}(G)$. 
We prove that then $s_{-i}$ is weakly dominated in
$\Gamma^{k+1}(G)$. The dominating strategy is obtained in two steps.

By Corollary \ref{cor:spe} the game $\Gamma(G^u)$ has a Nash equilibrium
$s^*$. First, we introduce the strategy $t_{-i} \in \Gamma_{-i}(G)$
defined as follows:
\[
  t_{-i}(x) : =\begin{cases}
    s_{-i}(x) &\text{ if } x \text{ not in } T^u,\\
    s^*_{-i}(x) &\text{ if } x \text{ in } T^u,
  \end{cases}
\]
where $\turn(x)=-i$ and $T$ is the game tree of $G$.

We now establish two claims relating $t_{-i}$ to $s_{-i}$.
%% $t_{-i}$ weakly dominates $s_{-i}$ in
%% $\Gamma^{k+1}(G)$ by showing the following:
\II

\NI
\textbf{Claim 1}.
$\fa s_i' \in \Gamma^{k+1}_i(G) : p_{-i}(s_i',t_{-i}) \geq p_{-i}(s_i',s_{-i})$.
\II
    
\NI
\emph{Proof.}
Suppose by contradiction that there exists $s_i' \in \Gamma^{k+1}_i(G)$ such that
$p_{-i}(s_i',t_{-i}) < p_{-i}(s_i',s_{-i})$.  The strategy
$t_{-i}$ differs from $s_{-i}$ only on the nodes in $T^u$, so
the difference in the payoffs implies that $u$ appears both in
$\mathit{play}(s'_i,t_{-i})$ and $\mathit{play}(s'_i,s_{-i})$.
This implies 
\begin{equation}
  \label{equ:ibar}
\maxmin_{-i}(\Gamma(G^u)) \leq 
p_{-i}((s_i')^u,s^*_{-i}) = p_{-i}(s_i',t_{-i}) < p_{-i}(s_i',s_{-i}).
\end{equation}

% $G^u$ is strictly competitive, so
% \begin{equation}
%   \label{equ:weakbetter}
%   p_{i}(s_i',s_{-i}) < \maxmin_{i}(\Gamma(G^u)).
% \end{equation}

% By definition of $t_{-i}$ we have
% $p_{-i}(s_i',t_{-i}) \geq \maxmin_{-i}(\Gamma(G^u))$ and by
% assumption $p_{-i}(s_i',t_{-i}) <
% p_{-i}(s_i',s_{-i})$. Therefore $p_{-i}(s_i',s_{-i}) >
% \maxmin_{-i}(\Gamma(G^u))$.

By Theorem \ref{thm:scminimax} $s^*_{-i}$
is a security strategy of player $-i$ in the game $\Gamma(G^u)$.
Further, the node $u$ appears in $\mathit{play}(s_i',s_{-i})$, so
$s'' := (s_i',s_{-i})^u$ is a joint strategy in $G^u$.
This and (\ref{equ:ibar}) imply
\[
p_{-i}(s^*) = \maxmin_{-i}(\Gamma(G^u)) < p_{-i}(s_i',s_{-i}) = p_{-i}(s''),
\]
so by (\ref{equ:iff}), the fact that $G^u$ is strictly competitive, and
Theorem \ref{thm:scminimax}
\begin{equation}
  \label{equ:weakbetter}
  p_{i}(s_i',s_{-i}) = p_{i}(s'') < p_{i}(s^*) = \maxmin_{i}(\Gamma(G^u)).
\end{equation}

Next we introduce the strategy $t_i \in \Gamma_i(G)$ defined as follows
(recall that $w = s_{-i}(u)$):
\[
  t_{i}(x) :=\begin{cases}
    s_{i}'(x) &\text{ if } x \text{ not in } T^w,\\
    t^*_{i}(x) &\text{ if } x \text{ in } T^w,
\end{cases}
\]
where $\turn(x)=i$ and $t^*_{i}$ is a
security strategy of player $i$ in the game $\Gamma(G^w)$.

We now establish two claims relating $t_{i}$ to $s'_{i}$:
\begin{equation}
  \label{equ:st1}
\fa s_{-i}' \in \Gamma_{-i}^k(G) : p_i(t_i,s_{-i}') \geq  p_i(s'_i,s'_{-i}),
\end{equation}
\begin{equation}
  \label{equ:st2}
p_i(t_i,s_{-i}) >  p_i(s'_i,s_{-i}).
\end{equation}

To establish (\ref{equ:st1}) consider any strategy
$s_{-i}' \in \Gamma_{-i}^k(G)$. By the definition of $t_i$, if $w$
does not appear in $\mathit{play}(t_i,s'_{-i})$ then
$p_i(t_i,s'_{-i}) = p_i(s_i',s'_{-i})$. 
So suppose $w$ appears in $\mathit{play}(t_i,s'_{-i})$.
By the definition of $t_i$, (\ref{equ:pmax}), and the fact that both
$s'_i$ and $s_{-i}'$ are strategies in $\Gamma^k(G)$
\[
p_i(t_i,s_{-i}') = p_i(t^{*}_i,(s_{-i}')^w) \geq \maxmin_i(\Gamma(G^w)) \geq
p_i^{\pmax}(\Gamma^k(G)) \geq p_i(s_i',s_{-i}').
\]

To establish (\ref{equ:st2}) recall that we noted already that $u$
appears in $\mathit{play}(s'_i,s_{-i})$.  Since $\turn(u)=-i$
and $s_{-i}(u)=w$, also $w$ appears in
$\mathit{play}(s_i',s_{-i})$.  The strategy $t_{i}$ differs from
$s'_{i}$ only on the nodes in $T^w$, so $w$ apears in
$\mathit{play}(t_i,s_{-i})$, as well.  Therefore by the definition
of $t_i$, (\ref{equ:pmax}) and (\ref{equ:weakbetter})
\[
  p_i(t_i,s_{-i}) = p_i(t^{*}_i,(s_{-i})^w) \geq
  \maxmin_i(\Gamma(G^w)) > \maxmin_i(\Gamma(G^u)) >
  p_i(s_i',s_{-i}).
\]

By Lemma \ref{lem:wd} there exists $t'_i \in \Gamma^{k}_i(G)$ such that
$p_i(t'_i,s'_{-i}) \geq  p_i(t_i,s'_{-i})$ for all $s'_{-i} \in \Gamma^{k}_{-i}(G)$.
This, together with (\ref{equ:st1}) and (\ref{equ:st2}) implies that
$s_i'$ is weakly dominated by $t'_i$ in $\Gamma^k(G)$. Hence
$s_i' \notin \Gamma^{k+1}_i(G)$, which yields a contradiction.
\HB
\II

\NI
\textbf{Claim 2.}
$p_{-i}(s_i,t_{-i}) > p_{-i}(s_i,s_{-i})$.
\II

\NI
\emph{Proof.}
The node $u$ appears in $\mathit{play}(s_i,s_{-i})$,
so by Theorem \ref{thm:scminimax} and (\ref{equ:pmax})
\[
p_i(s^*) = \maxmin_i(\Gamma(G^u)) < p_i^{\pmax}(\Gamma^k(G)) = p_i(s_i,s_{-i}) = p_i(s^u_i,s^u_{-i}),
\]
and consequently by Theorem \ref{thm:scminimax}$(i)$ and the fact that
$G^u$ is strictly competitive
\begin{equation}
  \label{equ:gamma1}
\maxmin_{-i}(\Gamma(G^u)) = p_{-i}(s^*) >
  p_{-i}(s^u_i,s^u_{-i}) = p_{-i}(s_i,s_{-i}).  
\end{equation}

Further, the strategy $t_{-i}$ differs from $s_{-i}$ only on the nodes in
$T^u$, so $u$ appears in $\mathit{play}(s_i,t_{-i})$, as well.
Hence
\begin{equation}
  \label{equ:gamma2}
  p_{-i}(s_i,t_{-i})  = p_{-i}(s^u_i,s^*_{-i}) 
  \geq \maxmin_{-i}(\Gamma(G^u)).
\end{equation}
Combining (\ref{equ:gamma1}) and (\ref{equ:gamma2}) we get the claim.
\HB
\II

By Lemma \ref{lem:wd} there exists
$t'_{-i} \in \Gamma^{k+1}_{-i}(G)$ such that
$p_{-i}(s'_i,t'_{-i}) \geq p_{-i}(s'_i,t_{-i})$ for all
$s'_i \in \Gamma^{k+1}_i(G)$. 
We conclude by Claims 1 and 2 that
$s_{-i}$ is weakly dominated by $t'_{-i}$ in
$\Gamma^{k+1}(G)$. 
Therefore $s_{-i} \notin \Gamma^{k+2}_{-i}(G)$, as desired.
\end{proof}

We can now establish the announced result.
\begin{theorem}
\label{thm:sciewds}
Let $G$ be a finite strictly competitive
extensive game with at most $m$ outcomes. Then $\Gamma^{m-1}(G)$ is a trivial
game.
\end{theorem}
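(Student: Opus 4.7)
The plan is to prove, by induction on $m$, the slightly stronger statement: for every $k \geq 0$, if $\Gamma^{k}(G)$ has at most $m$ outcomes, then $\Gamma^{k+m-1}(G)$ is a trivial game; setting $k=0$ then yields the theorem. The case $m=1$ is immediate. For $m=2$, Corollary \ref{cor:spe} together with repeated application of Corollary \ref{cor:h1} guarantees that $\Gamma^{k}(G)$ has a Nash equilibrium, so Lemma \ref{lem:scTwo} applies and gives that $\Gamma^{k+1}(G)$ is trivial.

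For the inductive step with $m \geq 3$ I would distinguish two cases according to whether $\wmax_{i}(\Gamma^{k}(G))$ is empty for both players. Suppose first that $\wmax_{i}(\Gamma^{k}(G)) \neq \ES$ for some $i$, and pick any $s^{*}_i \in \wmax_{i}(\Gamma^{k}(G))$. For every $s_i \in \Gamma^{k}_i(G) \setminus \wmax_{i}(\Gamma^{k}(G))$ one has $p_{i}(s_i,s_{-i}) \leq p_{i}^{\pmax}(\Gamma^{k}(G)) = p_{i}(s^{*}_i,s_{-i})$ for every $s_{-i} \in \Gamma^{k}_{-i}(G)$, with strict inequality for at least one such $s_{-i}$; so $s^{*}_i$ weakly dominates $s_i$ in $\Gamma^{k}(G)$. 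Consequently $\Gamma^{k+1}_i(G) \sse \wmax_{i}(\Gamma^{k}(G))$, and every joint strategy in $\Gamma^{k+1}(G)$ yields player $i$ the payoff $p_{i}^{\pmax}(\Gamma^{k}(G))$. The strict-competitiveness identity \eqref{equ:iff} then forces the payoff to player $-i$ also to be constant on $\Gamma^{k+1}(G)$, so $\Gamma^{k+1}(G)$ is trivial. Since a trivial game has no weakly dominated strategies, $\Gamma^{j}(G) = \Gamma^{k+1}(G)$ for all $j \geq k+1$, so in particular $\Gamma^{k+m-1}(G)$ is trivial.

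Suppose now that $\wmax_{i}(\Gamma^{k}(G)) = \ES$ for both $i \in \{1,2\}$. Applying Lemma \ref{lem:scLose} once for each $i$ shows that no surviving strategy in $\Gamma^{k+2}_{-i}(G)$ belongs to $\lose_{-i}(\Gamma^{k}(G))$, which means that no joint strategy of $\Gamma^{k+2}(G)$ produces the payoff $p_{i}^{\pmax}(\Gamma^{k}(G))$ for player $i$. By \eqref{equ:iff}, in $\Gamma^{k}(G)$ there is a unique outcome with $p_{1}=p_{1}^{\pmax}(\Gamma^{k}(G))$ and a unique outcome with $p_{2}=p_{2}^{\pmax}(\Gamma^{k}(G))$, and these are distinct as long as $\Gamma^{k}(G)$ is non-trivial: if they coincided, the single outcome $(p_{1}^{\pmax}(\Gamma^{k}(G)),p_{2}^{\pmax}(\Gamma^{k}(G)))$ would, by strict competitiveness, dominate every other outcome in both coordinates, forcing $\Gamma^{k}(G)$ to have only one outcome. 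Hence $\Gamma^{k+2}(G)$ has at most $m-2$ outcomes, and the induction hypothesis applied with starting index $k+2$ and parameter $m-2$ yields that $\Gamma^{(k+2)+(m-2)-1}(G) = \Gamma^{k+m-1}(G)$ is trivial.

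The delicate step is the second case: one must certify that the two outcomes excluded by the two invocations of Lemma \ref{lem:scLose} are genuinely different, since only then does $\Gamma^{k+2}(G)$ actually lose two outcomes and the inductive bookkeeping close up. This distinctness rests squarely on \eqref{equ:iff} and on the observation that in a strictly competitive game the two extremes $p_{1}^{\pmax}$ and $p_{2}^{\pmax}$ can occur in the same outcome only when the game is already trivial; once that is in hand, the ``two IEWDS rounds per two outcomes removed'' rhythm of the proof is exactly what makes the final bound $m-1$ come out.
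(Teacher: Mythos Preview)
Your proof is correct and follows the same route as the paper's: the same strengthened inductive statement over $k$ and $m$, the same case split on whether some $\wmax_i(\Gamma^k(G))$ is nonempty, and the same use of Lemma~\ref{lem:scLose} to strip two outcomes in two rounds of elimination. You add a little more detail than the paper in two places---verifying the Nash-equilibrium hypothesis of Lemma~\ref{lem:scTwo} for the $m=2$ base case via Corollaries~\ref{cor:spe} and~\ref{cor:h1}, and spelling out why the two extremal outcomes in Case~2 are genuinely distinct---but the argument is otherwise identical.
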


\begin{proof}
We prove a stronger claim, namely that for all $m \geq 1$ and $k \geq 0$
if $\Gamma^{k}(G)$ has at most $m$ outcomes, then $\Gamma^{k+m-1}(G)$ is a
trivial game.  

We proceed by induction on $m$. We can assume that $m > 1$. 
For $m = 2$ the claim follows by Lemma \ref{lem:scTwo}.
Take $m > 2$.
\II

\NI
\emph{Case 1.} For some $i \in \{1,2\}$, $\wmax_i(\Gamma^k(G)) \neq \emptyset$.

For player $i$ every strategy $s_i \in \wmax_i(\Gamma^k(G))$ weakly
dominates all strategies $s_i' \notin \wmax_i(\Gamma^k(G))$. So in
$\Gamma^{k+1}(G)$ the set of strategies of player $i$ equals
$\wmax_i(\Gamma^k(G))$ and hence $p_i^{\pmax}(\Gamma^k(G))$ is his unique payoff
in this game. By (\ref{equ:iff}) $\Gamma^{k+1}(G)$, and hence also
$\Gamma^{k+m-1}(G)$, is a trivial game. 
\II

\NI
\emph{Case 2.} For all $i \in \{1,2\}$, $\wmax_i(\Gamma^k(G)) = \emptyset$.

Take joint strategies $s$ and $t$ such that
$p_1(s)=p_1^{\max}(\Gamma^k(G))$ and $p_2(t)=p_2^{\max}(\Gamma^k(G))$.
Since $m > 1$
(\ref{equ:iff}) implies that
the outcomes $(p_1(s), p_2(s))$ and $(p_1(t), p_2(t))$ are different.

We have $s_2 \in \lose_{2}(\Gamma^k(G))$ and $t_1 \in \lose_{1}(\Gamma^k(G))$.
Hence by Lemma \ref{lem:scLose} for no joint strategy $s'$ in $\Gamma^{k+2}(G)$
we have $p_1(s')=p_1^{\max}(\Gamma^k(G))$ or  $p_2(s')=p_2^{\max}(\Gamma^k(G))$.

So $\Gamma^{k+2}(G)$ has at most $m-2$ outcomes.  By the induction
hypothesis $\Gamma^{k+m-1}(G)$ is a trivial game.
\end{proof}

\section{Weak acyclicity}
\label{sec:wacyclic}

By Theorem \ref{cor:spe} every finite extensive game has a Nash
equilibrium. A natural question is whether we can strengthen this
result to show that finite extensive games have the FIP.  The
following example adapted from \cite{Mil13} shows that even for
simplest extensive games the answer is negative.

\begin{example}
  \label{exa:extNoFIP}
  Consider the extensive form game given in
  Figure~\ref{fig:prisoner}. Following the convention introduced in
  Example~\ref{exa:extMP}, the strategies for player 1 are $C$ and
  $D$, while the strategies of player 2 are $\mathit{CC}, \mathit{CD},
  \mathit{DC}$ and $\mathit{DD}$.

  Then the following improvement sequence generates an infinite
  improvement path in this game:
\II
  
% \[
%   (D,\underline{\mathit{DC}}) \to (\underline{D},\mathit{CD}) \to
%   (C,\underline{\mathit{CD}}) \to (\underline{C},\mathit{DC}) \to
%   (D,\underline{\mathit{DC}}).
% \]
\begin{tabular}{ccccccccc}
  $(D,\underline{\mathit{DC}})$ &  $\to$ & $(\underline{D},\mathit{CD})$ &  $\to$ &
  $(C,\underline{\mathit{CD}})$ &  $\to$ & $(\underline{C},\mathit{DC})$ &  $\to$ &
  $(D,\underline{\mathit{DC}})$ \\
  $(3,0)$ &  & $(1,1)$ &  &  $(2,2)$ & & $(0,3)$ &  & $(3,0)$
  % the corresponding vectors of outcomes: (3,0), (1,1), (2,2), (0,3) and (3,0).
\end{tabular}
\II

\NI
For convenience of the reader in each joint strategy we underlined the
strategy which is not a best response and listed the corresponding outcomes.
\HB
\end{example}

However, a weaker result, due to \cite{Kuk02}, does hold. It implies
that every finite extensive game has a Nash equilibrium, a result
established earlier, in Corollary \ref{cor:spe}.

\begin{theorem} \label{thm:Kuk}
  Every finite extensive game is weakly acyclic.
\end{theorem}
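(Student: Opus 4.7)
I would prove Theorem \ref{thm:Kuk} by induction on the height of the game tree of $G$. The base case (height $0$) is immediate, since a tree with only the root admits a single joint strategy, which is vacuously a Nash equilibrium. For the inductive step, let $v_0$ be the root, $i := \turn(v_0)$, and let $w_1, \LL, w_k$ be its children; by the inductive hypothesis each subgame $G^{w_j}$ is weakly acyclic. Fix an arbitrary joint strategy $s$; I will construct a finite improvement path in $\Gamma(G)$ from $s$ to a Nash equilibrium by iterating two phases.

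Phase A (subgame stabilisation). Let $w := s_i(v_0)$ be the current choice at the root, so that $w$ lies on $\play(s)$. Applying the inductive hypothesis to $G^w$ with initial strategy $s^w$ yields a finite improvement path in $\Gamma(G^w)$ to a Nash equilibrium of $\Gamma(G^w)$; because $w$ lies on $\play(s)$, each step of this path lifts to a profitable deviation in $\Gamma(G)$, producing a finite improvement path in $\Gamma(G)$ from $s$ to a joint strategy $s'$ with $(s')^w$ a Nash equilibrium of $\Gamma(G^w)$ and $s'$ agreeing with $s$ outside $T^w$. Phase B (root check). I claim that at $s'$ the only possible profitable deviations are by player $i$ at $v_0$, switching $s_i(v_0)$ from $w$ to some $w' \neq w$: a deviation at a node outside $T^w$ does not alter $\play(s')$ and so cannot strictly improve any payoff, while a deviation at a node inside $T^w$ would contradict $(s')^w$ being a Nash equilibrium of $\Gamma(G^w)$. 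If no such top-level deviation is profitable, $s'$ is already a Nash equilibrium and we stop; otherwise player $i$ takes any profitable deviation and we re-enter Phase A with the new current child $w'$.

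The main obstacle is proving termination of this outer loop. To handle it, attach to each child $w_j$ the payoff $Q_j$ for player $i$ at the Nash equilibrium of $\Gamma(G^{w_j})$ reached by the first Phase A call that processes $w_j$ (undefined until such a call occurs). Once $Q_j$ is defined it never changes: later Phase A calls act on other subtrees and do not touch $T^{w_j}$, and a later revisit of $w_j$ triggers only a length-$0$ Phase A (since $(\cdot)^{w_j}$ is already at a Nash equilibrium of $\Gamma(G^{w_j})$). The key observation for termination is that when player $i$ profitably deviates from a processed $w$ (where the current payoff equals $Q_w$) to another already-processed $w'$, the maximum payoff obtainable by any deviation ending at $w'$ equals player $i$'s best-response payoff in $G^{w'}$ against $(\cdot)^{w'}_{-i}$, which is exactly $Q_{w'}$ because $(\cdot)^{w'}$ is a Nash equilibrium of $\Gamma(G^{w'})$; hence profitability forces $Q_{w'} > Q_w$. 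Therefore, in any run of consecutive revisits the $Q$-values strictly increase and are bounded above by $\max_j Q_j$, and since there are at most $k$ possible first-visits in total, the outer loop performs at most $O(k^2)$ iterations. Concatenating the finitely many improvement-path segments produced by the Phase A calls with the single profitable deviations of Phase B then gives a finite improvement path in $\Gamma(G)$ from $s$ to a Nash equilibrium, completing the induction.
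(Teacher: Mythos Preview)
Your inductive construction is a genuinely different route from the paper's proof. The paper does not induct on the height of the tree at all; instead it exhibits a weak potential for $\Gamma(G)$ directly and appeals to Theorem~\ref{thm:weakly}. Concretely, it lists the nodes of $T$ so that every node appears after all of its children, assigns to each joint strategy $s$ the $0/1$-vector recording, for each node $v$, whether $s_{\turn(v)}^{v}$ is a best response in $G^{v}$, and shows that from any non-Nash $s$ one can find a single profitable deviation (modify a best response only along its play) that strictly increases this vector lexicographically. Your two-phase approach is more operational and yields an explicit bound on the number of root switches; the paper's argument is global and considerably shorter, with termination coming for free from the potential.

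Your argument has one genuine gap. In Phase~B you let player $i$ take ``any profitable deviation'', but a deviation in $\Gamma(G)$ replaces the entire function $s'_i$, and changes at nodes off the current play are payoff-irrelevant. Hence an arbitrary profitable deviation may, at zero cost, alter $s'_i$ on nodes inside some previously processed subtree $T^{w_j}$; this invalidates your claim that ``a later revisit of $w_j$ triggers only a length-$0$ Phase~A'' and with it the invariance of $Q_j$ on which your $O(k^2)$ termination rests. The fix is easy and preserves your structure: since weak acyclicity only asks for \emph{some} finite improvement path, in Phase~B pick the deviation that changes only $s_i(v_0)$ to $w'$ (and, if $w'$ has not yet been processed, also player $i$'s moves inside $T^{w'}$), leaving $s'_i$ unchanged elsewhere. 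This restricted deviation attains the same payoff as the unrestricted one (the payoff depends only on the play, which lies in $\{v_0\}\cup T^{w'}$), so it is still profitable, and it never disturbs a processed subtree; with this choice each $Q_j$ is indeed fixed once defined and your termination argument goes through.
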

\textbf{Proof.}
We prove the claim by defining a weak potential.
Take a finite extensive game $G:= (T, \turn, o_1, \LL, o_n)$, with
$T := (V,E,v_0)$ and let $S$ be the set of joint strategies.

Consider first a function $R: S \times V \to \{0,1\}$
defined as follows:
\[
\begin{array}{l}
  R(s,v) :=
  \begin{cases}
    1 \ \mbox{if $s^v_i$ is a best response to $s^v_{-i}$ in the subgame $G^v$} \\
    0 \ \mbox{otherwise,} \\
  \end{cases}
\end{array}
\]
where $i = \turn(v)$.

Let now $L := (v_1, \LL, v_k)$ be a list of the nodes from
$V$ such that each node appears after all of its children in 
$T$. For example,
\[
((2,2), (0,3), (3,0), b, (1,1), c, a)
\]
is such a list of the nodes of the tree from Figure \ref{fig:prisoner},
where we identify each leaf with the corresponding outcome.

Finally define the function $P: S \to \{0,1\}^k$ by putting
\[
  P(s) := (R(s,v_1), \LL, R(s,v_k)),
\]
where, recall, $L= (v_1, \LL, v_k)$, and consider the strict
lexicographic ordering $<_{lex}$ over $\{0,1\}^k$.  We show that $P$
is a weak potential w.r.t.~this ordering and appeal to the
Weakly Acyclic Theorem \ref{thm:weakly}.

So consider a joint strategy $s$ in $G$ that is not a Nash
equilibrium.  Take a player $i$ such that $s_i$ is not a best response
to $s_{-i}$. Let $t_i$ be a best response of player $i$ to $s_{-i}$
and let $t = (t_i, s_{-i})$.  Define $s'_i$ as the modification of
$t_i$ such that its values on the nodes not lying on $play(t)$ are the
values provided by $s_i$. More formally, for all nodes $v$ such that
$\turn(v) = i$ we put
\[
\begin{array}{l}
s'_i(v) :=
  \begin{cases}
    t_i(v)         & \mbox{if $v$ lies on $play(t)$} \\
    s_i(v) & \mbox{otherwise} \\
  \end{cases}
\end{array}
\]
Let $s' = (s'_i, s_{-i})$. Then $\play(s') = \play(t)$, so
$o(\leaf(s')) = o(\leaf(t))$, and hence $s'_i$ is also a best response
of player $i$ to $s_{-i}$.  Since $s_i$ is not a best response to
$s_{-i}$, for some $v$ from $play(s')$ such that $\turn(v) = i$
player's $i$ strategy $s^v_i$ is not a best response to $s^v_{-i}$ in
the subgame $G^v$.  Take the last such node $v$ from $play(s')$.

So $R(v,s) = 0$ while $R(v,s') = 1$, since $(s'_i)^v$ is a best
response to $s^v_{-i}$ in the subgame $G^v$.  Further, by the choice
of $v$ and the definition of $s'_i$ we have that $R(w,s') = R(w,s)$
for all nodes $w$ that precede $v$ on the list $L$.  We conclude that
$P(s) <_{lex} P(s')$.
\HB

\section{Win or lose and chess-like games}
\label{sec:comp}

%In this section we discuss various classes of finite strictly
% competitive extensive games that we introduced in Section
% \ref{sec:prelim}.  Theorem \ref{thm:TDI} shows that in every finite
% extensive game that satisfies the TDI condition all subgame perfect
% equilibria are payoff equivalent.

% \sblue{KEEP?}

% For the subclass of strictly
% competitive extensive games a stronger result holds, namely that all
% Nash equilibria are payoff equivalent.  In fact, by virtue of the
% Minimax Theorem \ref{thm:scminimax} this result already holds for all
% strictly competitive games, not necessarily finite extensive ones.
% The difference is that now it refers to games in which Nash equilibria
% do exist.

In this section we discuss two classes of zero-sum extensive games
introduced in Section \ref{sec:prelim}. We begin with the win or lose
games.  Given such a game $G$ we say that a strategy $s_i$ of player
$i$ is \bfe{winning} if
\[
  \fa s_{-i} \in S_{-i} : o_i(\leaf(s_i, s_{-i})) = 1,
\]
and denote the (possibly empty) set of such strategies
by $\win_i(G)$.

The Matching Pennies game shows that in strategic win or lose games
winning strategies may not exist.
% A classic result, attributed to Zermelo \cite{Zer13}, implies that f
For finite win or lose extensive games the situation changes.

\begin{theorem} \label{thm:win}
  Let $G$ be a finite win or lose extensive game. For all players $i$ we
  have $\win_i(G) \neq \emptyset$ iff $\win_{-i}(G) = \emptyset$.
\end{theorem}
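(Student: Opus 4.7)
The biconditional splits into a uniqueness part and an existence part, and the zero-sum, win-or-lose structure plus Corollary \ref{cor:spe} and the Minimax Theorem \ref{thm:scminimax} make both straightforward.

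For the direction $\win_i(G) \neq \emptyset \Rightarrow \win_{-i}(G) = \emptyset$, I would argue by contradiction: suppose $s_i \in \win_i(G)$ and $s_{-i} \in \win_{-i}(G)$. Apply both at the same time to obtain the joint strategy $(s_i, s_{-i})$ and its leaf $\leaf(s_i,s_{-i})$. By the definition of $\win_i$ and $\win_{-i}$ we would have $o_i(\leaf(s_i,s_{-i})) = 1$ and simultaneously $o_{-i}(\leaf(s_i,s_{-i})) = 1$, contradicting the zero-sum condition $o_1 + o_2 \equiv 0$ (recall that in a win or lose game the only outcomes are $(1,-1)$ and $(-1,1)$).

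For the direction $\win_{-i}(G) = \emptyset \Rightarrow \win_i(G) \neq \emptyset$, I would first invoke Corollary \ref{cor:spe} to get a subgame perfect equilibrium, hence a Nash equilibrium $s^*$ of $\Gamma(G)$. Since $G$ is win or lose, the outcome $p(s^*)$ is either $(1,-1)$ or $(-1,1)$, so exactly one of the two players receives payoff $1$ at $s^*$. The game $\Gamma(G)$ is zero-sum and therefore strictly competitive, so by the Minimax Theorem \ref{thm:scminimax} each $s^*_j$ is a security strategy for player $j$ and $p_j(s^*) = \maxmin_j(\Gamma(G))$.

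I would then distinguish two cases according to which player wins at $s^*$. If $p_i(s^*) = 1$, then since $s^*_i$ is a security strategy,
\[
\min_{s_{-i} \in S_{-i}} p_i(s^*_i, s_{-i}) \;=\; \maxmin_i(\Gamma(G)) \;=\; p_i(s^*) \;=\; 1,
\]
so $p_i(s^*_i, s_{-i}) = 1$ for every $s_{-i} \in S_{-i}$, which means $s^*_i \in \win_i(G)$ and we are done. If instead $p_{-i}(s^*) = 1$, the same argument applied to player $-i$ yields $s^*_{-i} \in \win_{-i}(G)$, contradicting the hypothesis $\win_{-i}(G) = \emptyset$; hence this case cannot occur, and the first case must hold.

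I do not anticipate a real obstacle: the only subtle point is noticing that the Minimax Theorem converts a Nash equilibrium of a strictly competitive game into a pair of security strategies, after which the win or lose restriction on outcomes collapses the security guarantee into a literal winning guarantee. The argument does not require a separate induction on the game tree — the backward induction content is already packaged inside Corollary \ref{cor:spe}.
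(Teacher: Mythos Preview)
Your proof is correct but takes a genuinely different route from the paper's. The paper argues by a direct, self-contained induction on the number of nodes in the game tree: calling the players white and black, it shows that if (say) white moves at the root, then the game is black precisely when every immediate subgame is black, and white otherwise --- no appeal to Nash equilibria or to the Minimax Theorem is made. You instead hide the tree induction inside Corollary~\ref{cor:spe}, obtain a Nash equilibrium of $\Gamma(G)$, and then use the Minimax Theorem~\ref{thm:scminimax} to turn the equilibrium strategies into security strategies, which the win-or-lose restriction on outcomes collapses into literal winning strategies. Your route is shorter once that machinery is in place and makes transparent that the theorem is just the security-strategy structure of strictly competitive games specialised to $\{1,-1\}$-valued payoffs; the paper's route is more elementary, keeps the result independent of the equilibrium apparatus, and --- as the paper remarks after Theorem~\ref{thm:chess} --- extends verbatim to infinite extensive games in which every play is finite.
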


\begin{proof}

  Call the players white and black and call a finite win or lose
  extensive game \emph{white} if the white player has a winning
  strategy in it and analogously for \emph{black}.  We prove that
  every such game is white or black. Clearly, these alternatives are
  mutually exclusive.

  We proceed by induction on the number of nodes in the game tree. The
  claim clearly holds when the game tree has just one node. Consider a
  game $G$ with the game tree $T$ with more than one node.  By the
  induction hypothesis for every child $u$ of the root of $T$ the
  subgame $G^u$ is white or black.

  Without loss of generality assume that in $G$ the white player moves
  first.  We claim that the game $G$ is black if for every child $u$
  of the root of $T$ the subgame $G^u$ is black and otherwise that it
  is white. Indeed, in the first case no matter what is the first move
  of the white player he loses the game if the black player pursues
  his winning strategy in the resulting subgame, and otherwise the
  white player wins the game if he starts by selecting the move
  that leads to a white subgame and subsequently pursues in this
  subgame his winning strategy.

Note that we did not assume that the players alternate their
moves.
\end{proof}

\II

Next we consider chess-like games.
We say that a strategy $s_i$ of
player $i$ in such a game $G$ \bfe{guarantees him at least a draw} if
\[
\fa s_{-i} \in S_{-i} : o_i(leaf(s_i, s_{-i})) \geq 0,
\]
and denote the (possibly empty) set of such strategies 
by $\draw_{i}(G)$.
The set $\win_{i}(G)$ is defined as above.

\begin{theorem} \label{thm:chess}
  Let $G$ be a finite chess-like extensive game. We have
  \[
    \mbox{$\win_{1}(G) \neq \ES$ or $\win_{2}(G) \neq \ES$ or
      ($\draw_{1}(G) \neq \ES$ and $\draw_{2}(G) \neq \ES$).}
  \]
\end{theorem}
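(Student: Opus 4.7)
The plan is to mimic the proof of Theorem \ref{thm:win}, proceeding by induction on the number of nodes in the game tree of $G$, using Corollary \ref{cor:spe} implicitly only through the combinatorial construction of strategies from their restrictions to subgames. The base case of a single node (a leaf) is immediate: the unique outcome is one of $(1,-1), (0,0), (-1,1)$, and depending on which one it is, either some $\win_i(G)$ or both $\draw_i(G)$ are nonempty (the set of strategies being the singleton $\{\emptyset\}$).

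For the inductive step, let $v_0$ be the root, assume without loss of generality that $\turn(v_0) = 1$, and let $u_1, \LL, u_k$ be the children of $v_0$. By the induction hypothesis each subgame $G^{u_j}$ falls into one of the three alternatives. I would then distinguish three cases for $G$ itself. First, if for some $j$ we have $\win_1(G^{u_j}) \neq \ES$, then player 1 can move to $u_j$ and follow a winning strategy in $G^{u_j}$, so $\win_1(G) \neq \ES$. Second, if $\win_2(G^{u_j}) \neq \ES$ for every $j$, then player 2 can combine winning strategies from each $G^{u_j}$ into a single strategy in $G$ that wins regardless of player 1's opening move, giving $\win_2(G) \neq \ES$. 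Third, if neither of the above holds, then $\win_1(G^{u_j}) = \ES$ for every $j$, and there exists some $j^*$ with $\win_2(G^{u_{j^*}}) = \ES$.

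In this third case I need to produce drawing strategies for both players. For player 1, the induction hypothesis applied to $G^{u_{j^*}}$ forces $\draw_1(G^{u_{j^*}}) \neq \ES$ and $\draw_2(G^{u_{j^*}}) \neq \ES$; taking a strategy that moves to $u_{j^*}$ and then follows an at-least-draw strategy in $G^{u_{j^*}}$ shows $\draw_1(G) \neq \ES$. For player 2, observe that for every child $u_j$, since $\win_1(G^{u_j}) = \ES$, the induction hypothesis leaves only the options $\win_2(G^{u_j}) \neq \ES$ or ($\draw_1(G^{u_j}) \neq \ES$ and $\draw_2(G^{u_j}) \neq \ES$); in either case $\draw_2(G^{u_j}) \neq \ES$ because $\win_2(G^{u_j}) \sse \draw_2(G^{u_j})$. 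Combining such strategies across all $j$ yields a strategy for player 2 in $G$ that achieves at least a draw no matter which $u_j$ player 1 selects, so $\draw_2(G) \neq \ES$.

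The main obstacle is simply the bookkeeping in the third case, ensuring that the combination of strategies from subgames really yields a single well-defined strategy in $G$ and that the inductive hypothesis is invoked in the right form. The argument does not require that the players alternate moves, exactly as in Theorem \ref{thm:win}, since in the second and third cases player 2's strategy is built piecewise over the subgames $G^{u_j}$ regardless of whose turn is to move inside each subgame. No new results beyond the induction hypothesis and the trivial inclusion $\win_i \sse \draw_i$ are needed.
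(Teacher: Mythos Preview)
Your argument is correct: the three-way case split on the root's subgames is exhaustive, and in the third case the key observation that $\win_1(G^{u_j}) = \ES$ for every $j$ forces $\draw_2(G^{u_j}) \neq \ES$ via the induction hypothesis and the inclusion $\win_2 \sse \draw_2$, so the piecewise-combined strategy for player~2 works.

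However, the paper takes a genuinely different route. Rather than redoing the induction of Theorem~\ref{thm:win} with a three-valued outcome, it reduces the chess-like game $G$ to two win-or-lose games $G_1$ and $G_2$ by reassigning the draw outcome $(0,0)$ to $(-1,1)$ and $(1,-1)$ respectively. The identities $\win_1(G_1) = \win_1(G)$, $\win_2(G_1) = \draw_2(G)$, $\win_1(G_2) = \draw_1(G)$, $\win_2(G_2) = \win_2(G)$ then turn two applications of Theorem~\ref{thm:win} into the disjunctions $W_1 \lor D_2$ and $W_2 \lor D_1$, and a short propositional computation finishes the proof. The paper's approach is shorter and treats Theorem~\ref{thm:win} as a black box, whereas your direct induction is self-contained and makes the strategy construction explicit; both are valid, and your version has the minor advantage of not relying on the auxiliary-game trick, at the cost of repeating the structural induction already carried out for Theorem~\ref{thm:win}.
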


% It states that in every chess-like game either one of the players has
% a winning strategy or each player has a strategy that guarantees him
% at least a draw.  These three alternatives are mutually exclusive,
% since for all $i \in \{1,2\}$, $\win_{i}(G) \neq \ES$ implies both
% $\win_{-i}(G) = \ES$ and $\draw_{-i}(G) = \ES$.

We reproduce a proof given in \cite{AS21}.
\begin{proof}
  We introduce the following abbreviations:

  \begin{itemize}
  \item $W_1$ for $\win_{1}(G) \neq \ES$,
    
  \item $D_2$ for $\draw_{2}(G) \neq \ES$,

  \item $W_2$ for $\win_{2}(G) \neq \ES$,

  \item $D_1$ for $\draw_{1}(G) \neq \ES$.
    
  \end{itemize}

Let $G_1$ and $G_2$ be the modifications of $G$ in which each
outcome $(0, 0)$ is replaced for $G_1$ by $(-1,1)$ and for $G_2$
by $(1,-1)$.  Then $\win_{1}(G_1) = \win_{1}(G)$, $\win_{2}(G_1) = \draw_{2}(G)$,
    $\win_{1}(G_2) = \draw_{1}(G)$, and $\win_{2}(G_2) = \win_{2}(G)$.

    Hence by Theorem \ref{thm:win} applied to the games $G_1$ and
    $G_2$ we have $W_1 \lor D_2$ and $W_2 \lor D_1$, so
    $(W_1 \land W_2) \lor (W_1 \land D_1) \lor (D_2 \land W_2) \lor (D_2 \land D_1)$,
    which implies $W_1 \lor W_2 \lor (D_2 \land D_1)$, since
    $\neg (W_1 \land W_2)$, $(W_1 \land D_1) \equiv W_1$, and
    $(D_2 \land W_2) \equiv W_2$.
\end{proof}

\II

The above result is attributed to \cite{Zer13}.  However, in
\cite{SW01} it was pointed out that the paper contains only the idea and
the corresponding result is not formally stated, and that the first
rigorous statement of the result and its proof seems to have been
provided in \cite{Kal28}.  This result is stated in \cite[page
125]{vNM04} and proved using backward induction (apparently the first
use of it in the literature on game theory).  In \cite{Ewe02twoplayer}
a proof is provided that does not rely on backward induction and
argument also covers chess-like games in which infinite plays,
interpreted as draw, are allowed.  As noticed in \cite{AS21} Theorems
\ref{thm:win} and \ref{thm:chess} also hold for infinite extensive
games in which every play is finite.

\section{Conclusions}
\label{sec:conc}

The aim of this tutorial was to provide a self-contained introduction
to finite extensive games with perfect information aimed at computer scientists.
Our objective was to provide a systematic presentation of the most
important results concerning this class of games that in our view
could be of interest to computer scientists.

In \cite{AS21} we argued that the next most natural class of extensive
games is the one in which every play is finite (in the set theory
terminology the game trees are then \emph{well-founded}). In such a
class of games one can in particular consider \emph{behavioural
  strategies} in the extensive games considered here, according to
which a move consists of a probability distribution over the finite
set of children of a given node.

Many textbooks on game theory rather choose as the next class
extensive games with imperfect information.  In these games players do
not need to know the previous moves made by the other players. An
example is the Battleship game in which the first move for each player
consists of a secret placing of the fleet on the grid. An interested
reader is referred to Part III of \cite{OR94}.

\subsection*{Acknowledgements}

We would like to thank Ruben Brokkelkamp, Marcin Dziubi\'{n}ski,
R.~Ramanujam, and an anonymous referee for useful comments on the
first version of this paper.  The second author was partially
supported by the grant MTR/2018/001244.

\bibliographystyle{plain}
\bibliography{ref-s,e}

\end{document}